\documentclass[nopacs,nokeys,11pt,notitlepage]{revtex4}

%%%%%%%%%%%%%%%%%%%%
 \usepackage{hyperref}
\hypersetup{colorlinks=true,citecolor=blue,linkcolor=blue,filecolor=blue,urlcolor=blue,breaklinks=true}
\usepackage{amsmath}
\usepackage[marginal]{footmisc}
\usepackage{url}
\usepackage{theorem}
\newtheorem{definition}{Definition}

\newtheorem{lemma}[definition]{Lemma}

\newtheorem{observation}[definition]{Observation}

\newtheorem{theorem}[definition]{Theorem}
\newtheorem{corollary}[definition]{Corollary}

\newtheorem{problem}[definition]{Problem}
\def\squareforqed{\hbox{\rlap{$\sqcap$}$\sqcup$}}
\def\qed{\ifmmode\squareforqed\else{\unskip\nobreak\hfil
\penalty50\hskip1em\null\nobreak\hfil\squareforqed
\parfillskip=0pt\finalhyphendemerits=0\endgraf}\fi}
\def\endenv{\ifmmode\;\else{\unskip\nobreak\hfil
\penalty50\hskip1em\null\nobreak\hfil\;
\parfillskip=0pt\finalhyphendemerits=0\endgraf}\fi}
\newenvironment{proof}{\noindent \textbf{{Proof~} }}{\qed}
\newenvironment{remark}{\noindent \textbf{{Remark~}}}{}

\usepackage{graphicx,,epsfig,amsmath,latexsym,amssymb,verbatim,color}
\usepackage{dsfont}
\usepackage{theorem}\usepackage{url}

\usepackage[T1]{fontenc}
\usepackage{lmodern}
\usepackage{bbm}
\usepackage{multirow,bigdelim}
\usepackage{chngcntr}
\usepackage{braket}
\usepackage{extarrows}
\usepackage{theorem}
\usepackage{multirow,bigdelim}

\def\qed{\ifmmode\squareforqed\else{\unskip\nobreak\hfil
\penalty50\hskip1em\null\nobreak\hfil\squareforqed
\parfillskip=0pt\finalhyphendemerits=0\endgraf}\fi}
\def\endenv{\ifmmode\;\else{\unskip\nobreak\hfil
\penalty50\hskip1em\null\nobreak\hfil\;
\parfillskip=0pt\finalhyphendemerits=0\endgraf}\fi}

%\newenvironment{example}{\noindent \textbf{{Example~}}}{\qed}

% Align := properly in math mode
\mathchardef\ordinarycolon\mathcode`\:
\mathcode`\:=\string"8000
\def\vcentcolon{\mathrel{\mathop\ordinarycolon}}
\begingroup \catcode`\:=\active
  \lowercase{\endgroup
  \let :\vcentcolon
  }

%%%%%%%%%%%%% sdp

%=====================================================================
\newcommand{\diad}[2]{\left|#1\right>\left<#2\right|}
\newcommand{\proj}[1]{\diad{#1}{#1}}

\newcommand{\nc}{\newcommand}
% quantum states and other operators on a hilbert spaces
\newcommand\C{\mathbbm{C}}

\newcommand{\A}{\mathcal{A}}
\newcommand{\SL}{\mathrm{SL}}

\newcommand\R{\mathbbm{R}}
\newcommand\N{\mathbbm{N}}

\newcommand\F{\mathbb{F}}

\newcommand{\Hm}{\mathcal{H}}

\def\Tr{{\rm Tr}} \def\tr{{\Tr}}

\hyphenation{ALPGEN}
\hyphenation{EVTGEN}
\hyphenation{PYTHIA}

%=====================================================================
\nc{\Rank}{\operatorname{Rank}}
\nc{\smfrac}[2]{\mbox{$\frac{#1}{#2}$}}

\nc{\ox}{\otimes}
\nc{\dg}{\dagger}
\nc{\dn}{\downarrow}
\nc{\cA}{{\cal A}}
\nc{\cB}{{\cal B}}
\nc{\cC}{{\cal C}}
\nc{\cD}{{\cal D}}
\nc{\cE}{{\cal E}}
\nc{\cF}{{\cal F}}
\nc{\cG}{{\cal G}}
\nc{\cH}{{\cal H}}
\nc{\cI}{{\cal I}}
\nc{\cJ}{{\cal J}}
\nc{\cK}{{\cal K}}
\nc{\cL}{{\cal L}}
\nc{\cM}{{\cal M}}
\nc{\cN}{{\cal N}}
\nc{\cO}{{\cal O}}
\nc{\cP}{{\cal P}}
\nc{\cQ}{{\cal Q}}
\nc{\cR}{{\cal R}}
\nc{\cS}{{\cal S}}
\nc{\cT}{{\cal T}}
\nc{\cX}{{\cal X}}
\nc{\cY}{{\cal Y}}
\nc{\cZ}{{\cal Z}}

\nc{\V}{\text{vec}}
\nc{\rank}{{rank}}
\nc{\sch}{\text{sch}}
\nc{\supp}{\rm supp}
\nc{\spa}{\rm span}

\nc{\be}{\beta}
\nc{\SLOCCTO}{\overset{\underset{\mathrm{SLOCC}}{}}{\longrightarrow}}

\newcommand{\im}{{\rm Im}}
\newcommand{\kernel}{{\rm Ker}}
\newcommand{\dims}{{\rm dim}}
\newcommand{\rk}{mrk}

%sr=Schmidt rank

\begin{document}

% \title{Tripartite-to-bipartite Entanglement Transformation by Stochastic Local
% Operations and Classical Communication and the Classification of Matrix Spaces
% }
\title{Tripartite-to-bipartite Entanglement Transformation by Stochastic Local
Operations and Classical Communication and the Structure of Matrix Spaces
}

\author{Yinan Li $^{1}$}
\email{Yinan.Li@student.uts.edu.au}
\author{Youming Qiao$^{1}$}
\email{Youming.Qiao@uts.edu.au}
\author{Xin Wang$^{1}$}
\email{Xin.Wang-8@student.uts.edu.au}
\author{Runyao Duan$^{1,2}$}
\email{Runyao.Duan@uts.edu.au}
\affiliation{$^1$Centre for Quantum Software and Information, Faculty of Engineering and Information Technology, University of Technology Sydney, NSW 2007, Australia}
\affiliation{$^2$UTS-AMSS Joint Research Laboratory for Quantum Computation and Quantum Information Processing, Academy of Mathematics and Systems Science, Chinese Academy of Sciences, Beijing 100190, China}

%\titlerunning{Tripartite-to-Bipartite Entanglement Transformation by  
%SLOCC}        % if too long for running head

%\author{Yinan Li \and Youming Qiao \and Xin Wang \and Runyao Duan
%}

%\authorrunning{Short form of author list} % if too long for running head

%\institute{Yinan Li, Youming Qiao, Xin Wang, Runyao Duan  \at
%              Centre for Quantum Software and Information,\\ Faculty of Engineering and Information Technology,\\ University of Technology Sydney (UTS), NSW 2007, Australia
%           \and
%           Runyao Duan \at
%              UTS-AMSS Joint Research Laboratory for Quantum Computation and Quantum Information Processing, Academy of Mathematics and Systems Science, \\Chinese Academy of Sciences, Beijing 100190, China
%}

%\date{Received: date / Accepted: date}
% The correct dates will be entered by the editor
\begin{abstract}
We study the problem of transforming a tripartite pure state to a bipartite one using stochastic local operations and classical 
communication (SLOCC). It is known that the tripartite-to-bipartite SLOCC 
convertibility is characterized by the \emph{maximal Schmidt rank} of the given 
tripartite state, i.e.~the largest Schmidt rank over those bipartite states lying 
in the support of the reduced density operator.  In this paper, we further study
this problem and exhibit novel results in both multi-copy and asymptotic settings, 
about properties of the maximal Schmidt rank, utilizing powerful results from 
the structure of matrix spaces.

In the multi-copy regime, we observe that the maximal Schmidt rank 
is strictly super-multiplicative, i.e.~the maximal Schmidt rank of the tensor 
product of two tripartite pure states can be strictly larger than the product 
of their maximal Schmidt ranks. We then provide a full 
characterization of those tripartite 
states whose maximal Schmidt rank is strictly super-multiplicative 
when taking tensor product with itself. Notice that such tripartite states admit strict advantages in tripartite-to-bipartite SLOCC transformation 
when multiple copies are provided. 

In the asymptotic setting,  we focus on determining the tripartite-to-bipartite SLOCC entanglement 
transformation rate. Computing this rate turns out to be equivalent 
to computing the \emph{asymptotic maximal Schmidt rank} of the tripartite state, defined as 
the regularization of its maximal Schmidt rank.
Despite the difficulty caused by the super-multiplicative property, 
we provide explicit formulas for evaluating the asymptotic maximal 
Schmidt ranks of two important families of 
tripartite pure states, by resorting to certain results of the 
structure of matrix spaces, including the study of matrix semi-invariants. These formulas turn out to be powerful enough to give 
a sufficient and necessary condition to determine whether a given tripartite pure state can be transformed to the bipartite maximally entangled state under SLOCC, in the asymptotic setting. Applying the recent progress on the non-commutative rank 
problem, we can verify this condition in deterministic polynomial time.
%\keywords{First keyword \and Second keyword \and More}
% \PACS{PACS code1 \and PACS code2 \and more}
% \subclass{MSC code1 \and MSC code2 \and more}
\end{abstract}

\maketitle
\section{Introduction}
As a key concept in quantum mechanics, entanglement plays a central role in 
quantum information processing. It is the resource responsible for the quantum 
computational speed-up, quantum communication, quantum cryptography and so on.  In the well-known bipartite case, there is no doubt that the bipartite maximally entangled state plays an important role in quantum 
information theory, since it is usually sufficient to perform many quantum 
information processing tasks, such as quantum teleportation~\cite{Bennett1993} and 
superdense coding~\cite{Bennett1992}. 
Unfortunately, in practice, it becomes very difficult to preserve the bipartite pure entanglement as the two systems may interact with other 
systems (e.g.~the environment), thus changing the situation to the multipartite setting. Consequently, a 
very natural question to ask is, how many bipartite pure entangled states can be 
distilled from a multipartite state, by means of \emph{local operations and 
classical communication} (LOCC)? Previous works on this problem 
introduce various concepts such as the entanglement of 
assistance~\cite{cohen1998,DiVincenzo1999,Smolin2005}, the localizable 
entanglement~\cite{PhysRevLett.92.027901,PhysRevA.71.042306,horodecki2005partial}, 
the concurrence of assistance~\cite{PhysRevA.72.042329}, the random state 
entanglement~\cite{PhysRevLett.98.260501}, the entanglement of 
collaboration~\cite{PhysRevA.73.062331,gour2006entanglement} and the entanglement of 
combing~\cite{PhysRevLett.103.220501}. These concepts have been shown vastly useful in other 
areas of quantum information theory, including the study of 
environment-assisted capacity of quantum channels~\cite{winter2005a}, unital 
quantum channels, and the quantum Birkhoff's theorem~\cite{Mendl2009}. 

A slightly different setting is also of great interest. 
Roughly speaking, assume parties A, B and C share a tripartite pure state. Their goal 
is to recover some bipartite pure entanglement with a nonzero probability by LOCC, with the help of C.  
Such protocols usually refer to \emph{stochastic local operations and classical communication} (SLOCC), 
which has been widely used to study entanglement 
classification~\cite{PhysRevA.62.062314,PhysRevLett.111.060502} and entanglement 
transformation~\cite{Chitambar2008,Yu2010,PhysRevLett.105.200501}. 
The advantage of using SLOCC over LOCC is that SLOCC operations admit a simpler
mathematical structure~\cite{Dur2000}. It is known that the SLOCC bipartite 
entanglement transformation can be simply characterized by the \emph{Schmidt rank}~\cite{Vidal1999,lo2001}.  The Schmidt rank of a bipartite state $\ket{\phi}$ is the minimum number of product states needed to express it and is denoted by $srk(\ket{\phi})$. Then $\ket{\phi}$ can be transformed to another bipartite state $\ket{\psi}$, symbolically expressed as $\ket{\phi}\SLOCCTO\ket{\psi}$, 
if and only if $srk(\ket{\phi})\geq srk(\ket{\psi})$. For SLOCC tripartite 
entanglement transformations, the \emph{tensor ranks} of tripartite states are nonincreasing under SLOCC, providing an important entanglement monotone~\cite{Chitambar2008}. 

In these contexts, besides characterizing the feasibility of transformations, one may also consider such problems from the algorithmic viewpoint. One important problem is to find efficient algorithms, which, given (the classical description of) two multipartite states, 
decides one can be transformed to the other using SLOCC. 
This perspective can be made precise via computational 
complexity theory. It is well-known that computing the Schmidt rank of a 
bipartite state (equivalent to computing the rank of a matrix) admits a 
deterministic polynomial-time algorithm, which can be used to determine the 
bipartite SLOCC convertibility. On the other hand, computing the tensor rank of a 
tripartite state is NP-hard~\cite{Hastad1990}, and based on this, 
Chitambar, Duan and Shi~\cite{Chitambar2008} have shown that deciding the SLOCC 
convertibility for tripartite states is also NP-hard in general.

Going back to the multipartite-to-bipartite cases, Chitambar, Duan and Shi~\cite{Chitambar2010} have shown that deciding the multipartite-to-bipartite SLOCC convertibility is equivalent to the \emph{polynomial identity testing} (PIT) problem, which is one of the most important problems in theoretical computer science with many applications, such as in perfect matching~\cite{edmonds1967systems}, multiset equality testing~\cite{Blum1995} and primality testing~\cite{Agrawal2003}. Utilizing the Schwartz-Zippel lemma~\cite{Schwartz1980,Zippel1979}, PIT admits a polynomial-time \emph{randomized} algorithm. However, whether a \emph{deterministic} polynomial-time algorithm for PIT exists is still open. Restricting to the tripartite-to-bipartite SLOCC convertibility, it is equivalent to the \emph{symbolic determinant identity testing} (SDIT) problem~\cite{edmonds1967systems,Gurvits:2004:CCQ:1039323.1039332}, which asks to compute the rank of a given matrix with entries being linear forms over the complex field and is equivalent to PIT for \emph{weakly-skew arithmetic circuits} \cite{Toda92}. The equivalence can be seen as follows: The tripartite-to-bipartite SLOCC convertibility can be determined by the so called \emph{maximal Schmidt rank}~\cite{Chitambar2010}, denoted by $msrk(\cdot)$, which is the highest Schmidt rank over those bipartite states lying in the support of the reduced density operator $\tr_{\rm C}(\proj{{\rm \Psi}_{\rm ABC}})$ shared by A and B. A tripartite state $\ket{{\rm \Psi}_{\rm ABC}}$ can be transformed to a bipartite state $\ket{{\rm \phi}_{\rm AB}}$, if and only if the maximal Schmidt rank of $\ket{{\rm \Psi}_{\rm ABC}}$ is at least the Schmidt rank of $\ket{\phi_{\rm AB}}$. Computing the maximal Schmidt rank is exactly equivalent to computing the rank of a given matrix with entries being linear forms over the complex field. Due to this connection, determine the tripartite-to-bipartite SLOCC convertibility can be solved by polynomial-time randomized algorithms for SDIT (e.g.~\cite{lovasz1979determinants}). Interestingly, under plausible computational assumptions, SDIT must also admit a deterministic polynomial-time algorithm~\cite{nisan1988hardness}. It is believed that to devise such an algorithm would be difficult, as it implies strong circuit lower bounds which seem beyond the current techniques \cite{ik2004}.

From the information-theoretic perspective, it is natural to consider asymptotic tripartite-to-bipartite SLOCC transformations. 
Given $n$ copies of $\ket{{\rm \Psi}_{\rm ABC}}$, let $m(n)$ be the maximum number of copies of 
$\ket{\psi_{\rm AB}}$ which can be obtained by SLOCC. Then in the asymptotic setting, we are interested in computing the ratio $m(n)/n$ as $n$ 
goes to infinity, denoted by $R(\ket{{\rm \Psi}_{\rm ABC}},\ket{\psi_{\rm AB}})$, which is known as the \textit{SLOCC 
entanglement transformation rate} (e.g. see Ref.~\cite{Yu2014a,vrana2015asymptotic}).
By defining the \emph{asymptotic maximal Schmidt rank} of a tripartite state as 
the regularization of its maximal Schmidt rank, the SLOCC transformation rate 
equals the logarithm of the asymptotic maximal Schmidt rank of the given 
tripartite state (where the base of the logarithm is the Schmidt rank of the given 
bipartite state).
Asymptotic SLOCC transformations have also been considered in
the bipartite and tripartite settings, which lead to the concepts of 
the asymptotic Schmidt rank and the asymptotic 
tensor rank, respectively. 
The asymptotic Schmidt rank equals Schmidt rank itself, as the Schmidt rank is multiplicative, i.e.~for bipartite states $\ket{\psi}$ and 
$\ket{\phi}$, the Schmidt rank of $\ket{\psi}\ox\ket{\phi}$ equals the product of the Schmidt ranks of 
$\ket{\psi}$ and $\ket{\phi}$. On the other hand, the tensor rank 
is not multiplicative~\cite{Chitambar2008}, which makes the asymptotic tensor rank notoriously difficult to evaluate. 

In this paper, we systematically study the tripartite-to-bipartite SLOCC entanglement transformations, in both multi-copy and asymptotic setting.
We first illustrate the super-multiplicativity of maximal Schmidt rank, by 
constructing a tripartite state $\ket{\rm{\Psi}_{\rm ABC}}$ satisfying 
$msrk(\ket{\rm{\Psi}_{\rm ABC}}^{\ox 2})>msrk(\ket{\rm{\Psi}_{\rm ABC}})^{2}$. 
Then we provide a characterization of those tripartite states whose maximal Schmidt ranks are strictly increasing on average under tensor product. Notice 
that such tripartite states admit strict advantages in tripartite-to-bipartite SLOCC transformation with multiple copies. Interestingly, except for those degenerated cases, this phenomenon holds for all tripartite states of which their maximal Schmidt ranks are not full.
In the asymptotic setting, one of the most interesting questions is deciding whether the $d\otimes d$ maximally 
entangled state $\ket{\rm{\Phi}_{\rm AB}}:=\frac{1}{\sqrt{d}}\sum_{0\leq i\leq 
d-1}\ket{i_{\rm A}}\ket{i_{\rm B}}$ can be obtained from a given tripartite state $\ket{{\rm \Psi}_{\rm ABC}}$
by SLOCC asymptotically, i.e.~$R(\ket{{\rm \Psi}_{\rm ABC}}, \ket{\rm{\Phi}_{\rm AB}})=1$. Guided by the structure theory of matrix 
spaces, we exhibit explicit formulas to compute the asymptotic maximal Schmidt 
ranks of a large family of tripartite states. To obtain one of the formulas, we resort to 
certain results from invariant theory, specifically from matrix semi-invariants. While the 
use of invariant theory in entanglement theory is common, to the best of our knowledge, 
this is the first time that results from matrix semi-invariants are utilized to 
study SLOCC transformations.
Based on these formulas, we settle the question by providing a full 
characterization of those states that can achieve so. Interestingly, this characterization 
is algorithmically effective, i.e.~there exist deterministic 
polynomial-time algorithms to determine whether this condition holds for a given 
tripartite state~\cite{Garg2015,Ivanyos2015a}.

\paragraph*{Organization.}  
In Section~\ref{slocc}, we present preliminaries about SLOCC transformations, and some background knowledge of 
the structure of matrix spaces. In Section~\ref{finite},  we construct tripartite states 
of which their maximal Schmidt ranks are strictly super-multiplicative, and provide a full 
characterization for those tripartite states that satisfy this property. In Section~\ref{proof}, we explicitly compute the asymptotic maximal Schmidt ranks of a large family of tripartite states and exhibit a sufficient and necessary result to 
determine whether a tripartite state can be transformed to the bipartite maximally 
entangled state by SLOCC, in an asymptotic setting. We close in Section~\ref{conclusion} with a brief conclusion.

\section{Preliminaries and backgrounds}\label{slocc}
\subsection{Preliminaries}\label{subsec:prel}
We use $\Hm^{\rm A}_d$, $\Hm^{\rm B}_d$ and $\Hm^{\rm C}_{d}$ to denote 
$d$-dimensional Hilbert spaces (the underlying field is the complex field 
$\C$) associated with parties A, B and C, respectively. When there is no confusion, 
we assume $\Hm^{\rm A}$ and $\Hm^{\rm B}$ have the same dimension ($d$), and use 
$\{\ket{0},\dots,\ket{d-1}\}$ to denote the computational basis of a 
$d$-dimensional Hilbert spaces. 
For any bipartite pure state $\ket{{\rm \psi}_{\rm AB}}$, which is a unit vector 
in $\Hm^{\rm A}\otimes \Hm^{\rm B}$, $srk(\ket{{\rm \psi}_{\rm AB}})$ denotes the Schmidt rank 
of $\ket{{\rm \psi}_{\rm AB}}$, which is the minimal number of product states 
required to linearly span $\ket{{\rm \psi}_{\rm AB}}$. For a tripartite pure state $\ket{{\rm \Phi}_{\rm 
ABC}}\in\Hm^{\rm A}\otimes \Hm^{\rm B}\ox\Hm^{\rm C}$, let $\rho^{{\rm \Phi}}_{\rm 
AB}=\tr_{\rm C} (\proj{{\rm \Phi}_{\rm ABC}})$ be the reduced density operator 
shared by A and B. The mixed state $\rho^{{\rm \Phi}}_{\rm AB}$ admits a representation as 
$\sum_{i=1}^np_i\proj{\psi_i}$, where $\braket{\psi_i|\psi_j}=\delta_{ij}$ and 
$p_i>0$. The ``subnormalized'' eigenstates $\{\ket{\tilde{\psi}_i}=\sqrt{p_i}\ket{\psi_i}\}_{i=1,\dots,n}$ 
span (with respect to complex numbers) the space $\supp(\rho^{{\rm \Phi}}_{\rm AB})$, which is called the 
support of $\rho^{{\rm \Phi}}_{\rm AB}$. The \emph{maximal Schmidt rank} of a 
tripartite pure state $\ket{{\rm \Phi}_{\rm ABC}}$ is defined by 
\begin{equation}\label{eq: maximal schmidt rank}
msrk(\ket{{\rm \Phi}_{\rm ABC}}):=\max\{srk(\ket{{\rm \phi}_{\rm AB}}):\ket{{\rm \phi}_{\rm AB}}\in \supp(\rho^{{\rm \Phi}}_{\rm AB})\}.
\end{equation}
In the rest of this paper we focus on transforming tripartite \emph{pure} states to bipartite \emph{pure} states by SLOCC,
which can be characterized by the following.
\begin{theorem}[Chitambar, Duan and Shi~\cite{Chitambar2010}]\label{thm: SLOCC condition}
$\ket{{\rm \Phi}_{\rm ABC}}$ can be transformed to $\ket{\psi_{\rm AB}}$ by means of SLOCC
if and only if  $msrk(\ket{{\rm \Phi}_{\rm ABC}})\ge srk(\ket{\psi_{\rm AB}})$.
\end{theorem}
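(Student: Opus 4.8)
The plan is to put an SLOCC transformation into a normal ``filter'' form and then reduce both implications to elementary linear algebra about Schmidt rank, invoking the already-known bipartite criterion. First I would record the structural description of SLOCC that underlies everything: an arbitrary (multi-round, adaptive) LOCC protocol taking $\ket{{\rm \Phi}_{\rm ABC}}$ to a bipartite pure state, restricted to any single branch of measurement outcomes that succeeds with nonzero probability, acts on the global state as a single product operator $F_{\rm A}\ox F_{\rm B}\ox F_{\rm C}$, where $F_{\rm A}, F_{\rm B}$ are linear maps on $\Hm^{\rm A},\Hm^{\rm B}$ (absorbing any ancillas A and B introduce) and $F_{\rm C}=\bra{v}$ is a linear functional on $\Hm^{\rm C}$ (C's system is measured away, and any ancilla C adds is in a product state with $\Hm^{\rm C}$, hence absorbed into $\ket{v}$). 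Consequently $\ket{{\rm \Phi}_{\rm ABC}}\SLOCCTO\ket{\psi_{\rm AB}}$ if and only if there exist such $F_{\rm A}, F_{\rm B}$ and $\ket{v}\in\Hm^{\rm C}$ with $(F_{\rm A}\ox F_{\rm B})\ket{\eta}\propto\ket{\psi_{\rm AB}}$, where $\ket{\eta}:=\braket{v|{\rm \Phi}_{\rm ABC}}\in\Hm^{\rm A}\ox\Hm^{\rm B}$.

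Next I would fix a convenient purification. Writing $\rho^{{\rm \Phi}}_{\rm AB}=\sum_{i=1}^{n}p_i\proj{\psi_i}$ with $p_i>0$ and $\braket{\psi_i|\psi_j}=\delta_{ij}$, uniqueness of purification up to an isometry on C gives $\ket{{\rm \Phi}_{\rm ABC}}=\sum_{i=1}^{n}\ket{\tilde\psi_i}_{\rm AB}\ket{e_i}_{\rm C}$ for an orthonormal system $\{\ket{e_i}\}\subseteq\Hm^{\rm C}$, with $\ket{\tilde\psi_i}=\sqrt{p_i}\ket{\psi_i}$. Then for $\ket{v}=\sum_i\overline{c_i}\ket{e_i}$ one computes $\braket{v|{\rm \Phi}_{\rm ABC}}=\sum_i c_i\ket{\tilde\psi_i}$, so as $(c_1,\dots,c_n)$ ranges over $\C^{n}$ the vector $\braket{v|{\rm \Phi}_{\rm ABC}}$ sweeps out exactly $\supp(\rho^{{\rm \Phi}}_{\rm AB})$. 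In particular, for any nonzero $\ket{\phi_{\rm AB}}\in\supp(\rho^{{\rm \Phi}}_{\rm AB})$, choosing the matching unit vector $\ket{v}$ and letting C perform the two-outcome projective measurement $\{\proj{v}, I-\proj{v}\}$ leaves, on the first outcome (which occurs with positive probability since $\ket{\phi_{\rm AB}}\ne 0$), the AB part proportional to $\ket{\phi_{\rm AB}}$; hence this realizes $\ket{{\rm \Phi}_{\rm ABC}}\SLOCCTO\ket{\phi_{\rm AB}}$ with nonzero probability.

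For the ``only if'' direction, suppose $\ket{{\rm \Phi}_{\rm ABC}}\SLOCCTO\ket{\psi_{\rm AB}}$. By the filter form, $\ket{\psi_{\rm AB}}\propto(F_{\rm A}\ox F_{\rm B})\ket{\eta}$ with $\ket{\eta}=\braket{v|{\rm \Phi}_{\rm ABC}}$ a nonzero element of $\supp(\rho^{{\rm \Phi}}_{\rm AB})$ (nonzero because $\ket{\psi_{\rm AB}}\ne 0$). Since applying local linear maps cannot increase Schmidt rank and proportional vectors have the same Schmidt rank, $srk(\ket{\psi_{\rm AB}})=srk\big((F_{\rm A}\ox F_{\rm B})\ket{\eta}\big)\le srk(\ket{\eta})\le msrk(\ket{{\rm \Phi}_{\rm ABC}})$, the last step by~\eqref{eq: maximal schmidt rank}. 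For the ``if'' direction, assume $msrk(\ket{{\rm \Phi}_{\rm ABC}})\ge srk(\ket{\psi_{\rm AB}})=:k$ and pick $\ket{\phi_{\rm AB}}\in\supp(\rho^{{\rm \Phi}}_{\rm AB})$ attaining $srk(\ket{\phi_{\rm AB}})=msrk(\ket{{\rm \Phi}_{\rm ABC}})\ge k$. By the previous paragraph C can realize $\ket{{\rm \Phi}_{\rm ABC}}\SLOCCTO\ket{\phi_{\rm AB}}$; since $srk(\ket{\phi_{\rm AB}})\ge k=srk(\ket{\psi_{\rm AB}})$, the bipartite SLOCC criterion~\cite{Vidal1999,lo2001} gives $\ket{\phi_{\rm AB}}\SLOCCTO\ket{\psi_{\rm AB}}$ using operations on A and B alone, and composing the two protocols yields $\ket{{\rm \Phi}_{\rm ABC}}\SLOCCTO\ket{\psi_{\rm AB}}$.

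The step I expect to be the main obstacle is the first: showing that an arbitrary adaptive LOCC protocol, conditioned on a successful branch, collapses to the single product filter $F_{\rm A}\ox F_{\rm B}\ox\bra{v}$, i.e.\ that classical communication and repeated measurement rounds provide nothing beyond post-selection by a product operator; this is also where dimension changes and ancillas on all three parties must be tracked carefully. Once that normal form is in hand, the remainder is the purification bookkeeping of the second paragraph together with two short Schmidt-rank inequalities and the invocation of the known bipartite result.
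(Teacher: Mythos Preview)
The paper does not give its own proof of this theorem: it is stated as a cited result from~\cite{Chitambar2010}, and the surrounding text only sketches the protocol (``firstly C makes a measurement on his part \ldots then there exists an SLOCC protocol by which A and B can convert their state to $\ket{\psi_{\rm AB}}$'') and records the fact, also attributed to~\cite{Chitambar2010}, that the one-way protocol suffices. Your argument is correct and is precisely this approach written out in full: C's measurement produces a state in $\supp(\rho^{{\rm \Phi}}_{\rm AB})$, and then the bipartite Schmidt-rank criterion handles the rest.

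The point you flag as the main obstacle---collapsing an arbitrary adaptive LOCC branch to a product filter $F_{\rm A}\ox F_{\rm B}\ox F_{\rm C}$---is exactly what the paper defers to~\cite{Chitambar2010} (and, implicitly, to the standard SLOCC structure result of D\"ur--Vidal--Cirac~\cite{Dur2000}). One small refinement: rather than asserting up front that $F_{\rm C}=\bra{v}$, it is cleaner to allow $F_{\rm C}:\Hm^{\rm C}\to\Hm^{\rm C'}$ with arbitrary output and then observe that since the global output is the pure product $\ket{\psi_{\rm AB}}\ox\ket{c}$ for some $\ket{c}\in\Hm^{\rm C'}$, composing with $\bra{c}$ reduces to your linear-functional form without loss.
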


The SLOCC protocol for Theorem~\ref{thm: SLOCC condition} as proposed 
in~\cite{Chitambar2010} takes the following form: 
firstly C makes a measurement on his part of $\ket{{\rm \Phi}_{\rm ABC}}$ and 
broadcasting 
the result to A and B; then there exists an SLOCC protocol by which A and B can 
convert their state to $\ket{\psi_{\rm AB}}$.
This ``one-way'' protocol coincide with the one in the entanglement of assistance~\cite{DiVincenzo1999}. It is also natural to consider the protocol in the entanglement of collaboration~\cite{gour2006entanglement}, which allows two-way communications between A and B on one side, and C on the other side, as follows: before C make measurements, A and B can do measurements on their own 
systems, and broadcast their outcomes to C. It is known that in the LOCC setting, 
such two-way communications are necessary for some tripartite-to-bipartite 
transformations to happen with probability $1$~\cite{PhysRevA.73.062331}. On the 
other hand, in the SLOCC setting, Chitambar, Duan and Shi~\cite{Chitambar2010} 
have shown 
that $\ket{{\rm \Phi}_{\rm ABC}}$ can be transformed to $\ket{\psi_{\rm AB}}$ by means of SLOCC if 
and only if it can be done by a ``one-way'' protocol. 

Theorem~\ref{thm: SLOCC condition} settles the finite-copy case, 
but leaves the asymptotic setting open, which is natural and important from the information theoretic perspective. 
In the asymptotic setting, the ability to transform a tripartite pure state $\ket{{\rm \Psi}_{\rm ABC}}$ to a bipartite pure state $\ket{\psi_{\rm AB}}$ is characterized by 
the \textit{SLOCC entanglement transformation rate} (e.g. see Ref.~\cite{Yu2014a,vrana2015asymptotic}), defined as follows:
\begin{equation}\label{eq:transformation rate} 
R(\ket{{\rm \Psi}_{\rm ABC}},\ket{\psi_{\rm AB}}):=\sup_{n\geq 1}\left\{\frac{1}{n}\max\{m:\ket{{\rm \Psi}_{\rm ABC}}^{\ox n}\SLOCCTO\ket{\psi_{\rm AB}}^{\ox m}\}\right\}.
\end{equation}
Notice that
$\max\{m:\ket{{\rm \Psi}_{\rm ABC}}^{\ox n}\SLOCCTO\ket{\psi_{\rm AB}}^{\ox 
m}\}=\lfloor\log_{srk(\ket{\psi_{\rm AB}})}msrk(\ket{{\rm\Psi}_{\rm ABC}}^{\otimes 
n})\rfloor$ for every fixed $n$. 
Define the \emph{asymptotic maximal Schmidt rank} of $\ket{{\rm \Psi}_{\rm ABC}}$ as
\begin{equation}\label{def:asymptotic maximal Schmidt rank}
msrk^\infty(\ket{{\rm \Psi}_{\rm ABC}}):=\sup_{n\geq 1}\sqrt[n]{msrk(\ket{{\rm \Psi}_{\rm ABC}}^{\ox n})}.
\end{equation}
Then the SLOCC entanglement transformation rate of $\ket{{\rm \Psi}_{\rm ABC}}$ and $\ket{\psi_{\rm AB}}$ can be evaluated by
$$R(\ket{{\rm \Psi}_{\rm ABC}},\ket{\psi_{\rm AB}})=\log_{srk(\ket{\psi_{\rm AB}})}msrk^\infty(\ket{{\rm\Psi}_{\rm ABC}}).$$
Essentially, we can replace taking supremum ``$\sup_{n\geq 1}$'' by taking limit ``$\lim_{n\to\infty}$'', as shown 
in the following lemma:
\begin{lemma}\label{thm:replace sup by lim}
$msrk^\infty(\ket{{\rm\Psi}_{\rm ABC}})$ is finite for all $\ket{{\rm\Psi}_{\rm ABC}}\in\Hm^{\rm A}\otimes \Hm^{\rm B}\ox\Hm^{\rm C}$. Moreover, 
\begin{equation}
msrk^\infty(\ket{{\rm\Psi}_{\rm ABC}})=\lim_{n\to\infty}\sqrt[n]{msrk(\ket{{\rm\Psi}_{\rm ABC}}^{\otimes n})}.
\end{equation}
\end{lemma}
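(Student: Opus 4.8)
The plan is to recognize $msrk^\infty(\ket{{\rm \Psi}_{\rm ABC}})$ as the exponential of a Fekete limit. Setting $a_n:=\log msrk(\ket{{\rm \Psi}_{\rm ABC}}^{\ox n})$, it suffices to establish two facts: (i) a uniform bound $msrk(\ket{{\rm \Psi}_{\rm ABC}}^{\ox n})\le d^{n}$, where $d=\dims\Hm^{\rm A}=\dims\Hm^{\rm B}$, which gives $\sqrt[n]{msrk(\ket{{\rm \Psi}_{\rm ABC}}^{\ox n})}\le d$ for every $n\ge1$ and hence finiteness of the supremum in~(\ref{def:asymptotic maximal Schmidt rank}); and (ii) super-multiplicativity, $msrk(\ket{{\rm \Psi}_{\rm ABC}}^{\ox(m+n)})\ge msrk(\ket{{\rm \Psi}_{\rm ABC}}^{\ox m})\cdot msrk(\ket{{\rm \Psi}_{\rm ABC}}^{\ox n})$ for all $m,n\ge1$, i.e.~$a_{m+n}\ge a_m+a_n$. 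Fact (i) is just the elementary bound $srk(\ket{\phi_{\rm AB}})\le\min\{\dims\Hm^{\rm A},\dims\Hm^{\rm B}\}$ applied to the $n$-fold system.

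For fact (ii) I would group the $m+n$ copies into an $m$-block and an $n$-block, so that $\ket{{\rm \Psi}_{\rm ABC}}^{\ox(m+n)}=\ket{{\rm \Psi}_{\rm ABC}}^{\ox m}\ox\ket{{\rm \Psi}_{\rm ABC}}^{\ox n}$ with all $A$-, $B$- and $C$-systems collected on their respective sides. Tracing out the $C$-systems factorizes the reduced operator as $\tr_{\rm C}(\proj{{\rm \Psi}_{\rm ABC}}^{\ox(m+n)})=\tr_{\rm C}(\proj{{\rm \Psi}_{\rm ABC}}^{\ox m})\ox\tr_{\rm C}(\proj{{\rm \Psi}_{\rm ABC}}^{\ox n})$, and since $\supp(\sigma\ox\tau)=\supp(\sigma)\ox\supp(\tau)$ for positive semidefinite $\sigma,\tau$, the support of the left-hand side is the tensor product of the two supports. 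I would then pick $\ket{\phi}$ in $\supp(\tr_{\rm C}(\proj{{\rm \Psi}_{\rm ABC}}^{\ox m}))$ with $srk(\ket{\phi})=msrk(\ket{{\rm \Psi}_{\rm ABC}}^{\ox m})$ and $\ket{\phi'}$ in $\supp(\tr_{\rm C}(\proj{{\rm \Psi}_{\rm ABC}}^{\ox n}))$ with $srk(\ket{\phi'})=msrk(\ket{{\rm \Psi}_{\rm ABC}}^{\ox n})$; the product $\ket{\phi}\ox\ket{\phi'}$ then lies in $\supp(\tr_{\rm C}(\proj{{\rm \Psi}_{\rm ABC}}^{\ox(m+n)}))$ and, by multiplicativity of the Schmidt rank (already recalled in the introduction), has Schmidt rank $srk(\ket{\phi})\,srk(\ket{\phi'})$ across the overall bipartition, which yields~(ii).

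With (i) and (ii) in hand the proof closes by Fekete's (superadditive) lemma. Since $msrk\ge1$ always, the sequence $a_n$ takes finite nonnegative values and is superadditive by (ii), so $\lim_{n\to\infty}a_n/n$ exists and equals $\sup_{n\ge1}a_n/n$. As $\exp$ is continuous and strictly increasing, passing through the exponential turns this into $\lim_{n\to\infty}\sqrt[n]{msrk(\ket{{\rm \Psi}_{\rm ABC}}^{\ox n})}=\sup_{n\ge1}\sqrt[n]{msrk(\ket{{\rm \Psi}_{\rm ABC}}^{\ox n})}=msrk^\infty(\ket{{\rm \Psi}_{\rm ABC}})$, which is finite by (i).

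The only genuinely non-routine step is (ii): within it, the identity $\supp(\sigma\ox\tau)=\supp(\sigma)\ox\supp(\tau)$ together with the bookkeeping that regrouping the tripartite subsystems of the tensor power indeed presents $\tr_{\rm C}(\proj{{\rm \Psi}_{\rm ABC}}^{\ox n})$ as $(\rho^{{\rm \Psi}}_{\rm AB})^{\ox n}$ with the claimed support, and the reordering of $A$- and $B$-registers needed to read off the Schmidt rank of $\ket{\phi}\ox\ket{\phi'}$ as $srk(\ket{\phi})\,srk(\ket{\phi'})$. I expect this subsystem bookkeeping, rather than any conceptual difficulty, to be the main thing to get right; the dimension bound and Fekete's lemma are standard.
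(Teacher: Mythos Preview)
Your proposal is correct and follows essentially the same approach as the paper: set $a_n=\log msrk(\ket{{\rm \Psi}_{\rm ABC}}^{\ox n})$, verify the linear upper bound $a_n\le n\log d$ and superadditivity $a_{m+n}\ge a_m+a_n$ (via multiplicativity of the Schmidt rank), and conclude by Fekete's lemma. Your argument is in fact slightly more explicit about why superadditivity holds (the support identity $\supp(\sigma\ox\tau)=\supp(\sigma)\ox\supp(\tau)$ and the register bookkeeping), whereas the paper simply asserts this step as ``easy to see.''
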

\begin{proof}
%\yinan{rewrite the proof, please check}
We shall utilize the following lemma:
\begin{lemma}[Lemma in appendix A of Ref.~\cite{Barnum1998}]
Suppose $c_1,c_2,\dots$ is a nonnegative sequence such that $c_n\leq kn$ for some $k\geq 0$, and $c_m+c_n\leq c_{m+n}$ for all $m$ and $n$. Then $\lim_{n\to\infty}\frac{c_n}{n}$ exists and is finite. 
\end{lemma}
Let $c_n=\log_2 msrk(\ket{{\rm\Psi}_{\rm ABC}}^{\otimes n})$ and $d=\min\{\dim(\Hm^{\rm A}), \dim(\Hm^{\rm B})\}$. Choosing $k=\log_2 d$, it is easy to see $c_n\leq n\log_2 d$, as $msrk(\ket{{\rm\Psi}_{\rm ABC}}^{\otimes n})\leq d^n$ . On the other hand, to prove $c_m+c_n\leq c_{m+n}$, notice that $srk(\ket{\psi})srk(\ket{\phi})= srk(\ket{\psi}\ox \ket{\phi})$ holds for any bipartite state $\ket{\psi}$ and $\ket{\phi}$. Then it is easy to see that $mrk(\ket{{\rm\Psi}_{\rm ABC}}^{\otimes m}\ox \ket{{\rm\Psi}_{\rm ABC}}^{\otimes n})\geq mrk(\ket{{\rm\Psi}_{\rm ABC}}^{\otimes m})mrk(\ket{{\rm\Psi}_{\rm ABC}}^{\otimes n})$, which leads to $c_m+c_n\leq c_{m+n}$. This ensures that $\lim_{n\to\infty}\frac{1}{n}\log_2 msrk(\ket{{\rm\Psi}_{\rm ABC}}^{\otimes n})=\lim_{n\to\infty} \log_2 \sqrt[n]{msrk(\ket{{\rm\Psi}_{\rm ABC}}^{\otimes n}})$ exists and is finite by lemma~\cite{Barnum1998}. On the other hand, by Fekete's Lemma~\cite{Fekete1923}, the condition that $c_m+c_n\leq c_{m+n}$ for all $m$ and $n$ derives that $\sup_{n\geq 1}\frac{c_n}{n}=\lim_{n\to\infty}\frac{c_n}{n}$. This concludes the proof.
\end{proof}

It is clear that computing the maximal and asymptotic maximal Schmidt rank are physically 
worthwhile. From the algorithmic perspective, 
% computing the Schmidt rank of a given bipartite state admits 
% a deterministic polynomial-time algorithm~\cite{Vidal1999}, while 
computing the maximal Schmidt rank of a given tripartite state only admits a randomized polynomial-time 
algorithm~\cite{lovasz1979determinants}, since computing the maximal 
Schmidt rank is equivalent to computing the rank of a matrix whose entries are 
linear forms~\cite{edmonds1967systems}.
We explain how this equivalence works as this sets the stage for our work 
as well. Let the space of linear operators 
from $\Hm^{\rm B}_n$ to $\Hm^{\rm A}_m$ be $\cL(\Hm^{\rm B}_n,\Hm^{\rm A}_m)$. 
The linear map ${\rm vec}: \Hm^{\rm A}_m \ox \Hm^{\rm B}_n \to \cL(\Hm^{\rm B}_n, \Hm^{\rm A}_m)$ is defined by 
${\rm vec}(\ket{i} \ox \ket{j})=\ket{i}\bra{j}$, where $\{\ket{i}:i=0,\dots,m-1\}$ and 
$\{\ket{j}:j=0,\dots,n-1\}$ form orthogonal basis of $\Hm^{\rm A}_m$ and $\Hm^{\rm B}_n$, 
respectively. ${\rm vec}$ is a linear isomorphism between $\cL(\Hm^{\rm B}_n,\Hm^{\rm A}_m)$ and 
$\Hm^{\rm A}_m\otimes \Hm^{\rm B}_n$. Given a bipartite state $\ket{\psi_{\rm AB}}$, its Schmidt rank equals the rank of ${\rm vec}(\ket{\psi_{\rm AB}})$.
For a tripartite pure state $\ket{{\rm \Psi}_{\rm ABC}}\in\Hm^{\rm A}_m\otimes\Hm^{\rm B}_n\otimes\Hm^{\rm C}_d$, we define
\begin{equation}\label{eq:msrk}
M({\rm \Psi}_{\rm ABC}):={\rm vec}[\supp({\rho^{{\rm \Psi}}_{\rm AB}})]={\rm span}\{{\rm vec}(\ket{\psi_{\rm AB}}):\ket{\psi_{\rm AB}}\in\supp({\rho^{{\rm \Psi}}_{\rm AB}})\},
\end{equation}
where the linear span is taken over the complex field $\C$. Notice that $M({\rm 
\Psi}_{\rm ABC})$ is a linear space of linear operators over $\C$. Equivalently, after fixing a basis of $\Hm^{\rm A}_m\otimes \Hm^{\rm B}_n$, 
it is a linear space of $m\times n$ matrices over $\C$, which is also called an 
$m\times n$ \emph{matrix space}. Thus, computing the maximal Schmidt rank of the tripartite state 
$\ket{{\rm \Psi}_{\rm ABC}}$ is equivalent to compute the largest rank over 
matrices in the matrix space $M({\rm \Psi}_{\rm ABC})$. 
We define the \emph{maximal rank} and the \emph{asymptotic maximal rank} 
of a $m\times n$ matrix space $\cS$ as
$$mrk(\cS):=\max\{\rank(E):E\in\cS\},~~mrk^{\infty}(\cS):=\sup_{n\geq 1}\sqrt[n]{mrk(\cS^{\ox n})}=\lim_{n\to\infty}\sqrt[n]{mrk(\cS^{\ox n})},$$
where the second equation can be proved using the same argument in 
lemma~\ref{thm:replace sup by lim}. It is straightforward to see that, for 
$\ket{{\rm \Psi}_{\rm ABC}}\in\Hm^{\rm A}_m\otimes\Hm^{\rm B}_n\ox \Hm^{\rm C}_{d}$, we have
$$msrk(\ket{{\rm \Psi}_{\rm ABC}})=mrk(M({\rm \Psi}_{\rm ABC})),~~msrk^\infty(\ket{{\rm \Psi}_{\rm ABC}})=mrk^\infty(M({\rm \Psi}_{\rm ABC})).$$

%Matrix spaces play an important role in matrix invariant theory. 
In the above, we have reformulated the tripartite-to-bipartite SLOCC 
transformation problem, in both the finite-copy setting and the asymptotic setting, 
as computing the maximal ranks and asymptotic ranks of matrix spaces. Therefore, we need to recall 
some basic properties of matrix spaces. More background knowledge will be covered in the 
next subsection. Let the space of all $m\times n$ matrices over the complex field be $M(m\times 
n,\C)$, and let $M(d, \C):=M(d\times d, \C)$. The computational basis of $M(m\times n,\C)$ is denoted by 
$\{\ket{i}\bra{j}:0\leq i\leq m-1, 0\leq j\leq n-1\}$. We use ${\bf 0}_{m\times n}$ to denote the zero matrix in $M(m\times n,\C)$ (or ${\bf 0}$ when there is no confusion), and $I_d$ to denote the identity matrix in $M(d,\C)$. We use $\cS\leq\cS'$ to denote that 
$\cS$ is a subspace of $\cS'$. Two matrix spaces $\cS,\cS'\leq M(m\times n,\C)$ are \emph{equivalent}, if there exist invertible matrices $P\in M(m,\C)$ and $Q\in M(n,\C)$, such that $\cS=P\cS'Q=\{PSQ:S\in\cS'\}$. It is easy to see that equivalent matrix spaces have the same maximal rank.
A matrix space $\cS\leq M(d,\C)$ is \emph{non-singular}, if it contains at least one full-rank matrix. 
Otherwise we say $\cS$ is \emph{singular}. One important structure for matrix 
spaces is the following so-called \emph{shrunk subspace}: 
%For a singular matrix space 
%$\cS\leq M(d,\C)$, we define the so called \emph{shrunk subspace} as follows:
\begin{definition}\label{shrunk}
Given $\cS\leq M(d,\C)$, a subspace $U\leq\C^d$ is 
called a \emph{shrunk subspace} of $\cS$, if $\dim(U)>\dim(\cS(U))$,  where $\cS(U):={\rm span}\{\cup_{E\in \cS}\{E\ket{u}:\ket{u}\in U\}\}$.
\end{definition}
In fact, this definition is reminiscent of the \emph{shrunk subset} as in the 
famous Hall's marriage theorem~\cite{Hall1935}. Recall that for a bipartite graph 
$G=(L\cup R, E)$ where $|L|=|R|$, Hall's marriage theorem states that $G$ has a 
perfect matching if and only if $G$ does not have shrunk subset, that is a 
subset $S\subseteq L$ such that $|S|>|N(S)|$ where $N(S)$ denotes the set of 
neighbours of $S$. Getting back to the matrix space setting, it is clear that if 
$\cS$ has shrunk subspaces, then $\cS$ must be singular. However, unlike in the 
bipartite graph setting, it is not true 
that any singular matrix space has shrunk subspaces. For instance, the 
$3 \times 3$ skew symmetric matrix space ${\rm span}\{\ket{i}\bra{j}-\ket{j}\bra{i}:0\leq i 
<j\leq 2\}\leq M(3,\C)$ is singular, and has no shrunk subspace. 
Given a $d\times d$ matrix space $\cS$ which has a shrunk subspace $U$, it admits a particular form when we transform it with 
appropriate base changes. In the following, we present this form and introduce an important family of matrix spaces, the \emph{maximal-compression matrix spaces}. 

Suppose $\dim(U)=d-q$ and $\dim(\cS(U))=p$. By definition~\ref{shrunk}, we know that $p+q<d$.
Fix bases for $U$ and $\cS(U)$ so that $U=\mathrm{span}\{\ket{\alpha_{q}},\dots,\ket{\alpha_{d-1}}\}$ 
and $\cS(U)=\mathrm{span}\{\ket{\beta_0},\dots,\ket{\beta_{p-1}}\}$. 
Extend them to full bases of $\C^d$, i.e.~$\C^d=\mathrm{span}\{\ket{\alpha_0},\dots,\ket{\alpha_{q-1}},\ket{\alpha_{q}},\dots,\ket{\alpha_{d-1}}\}=\mathrm{span}\{\ket{\beta_0},\dots,\ket{\beta_{p-1}},\ket{\beta_{p}},\dots,\ket{\beta_{d-1}}\}$. 
Let $Q_1$ and $Q_2$ be invertible matrices transforming the original bases to the above 
two bases, and let $\cS'=Q_1\cS Q_2^{-1}$. It is easy 
to verify that every matrix $E'$ in $\cS'$  is of the following block 
form:
\begin{equation*}
E'=\left[
\begin{array}{c|c}
A_{p\times q} &  B_{p\times (d-q)} \\
\hline
 C_{(d-p)\times q}& D_{(d-p)\times (d-q)}
\end{array}
\right]_{d\times d},
\end{equation*}
where $A_{p\times q}\in{\rm span}\{\ket{\beta_i}\bra{\alpha_j}:0\leq i\leq p-1,~0\leq j\leq q-1\}$, 
$B_{p\times (d-q)}\in{\rm span}\{\ket{\beta_i}\bra{\alpha_j}:0\leq i\leq p-1,~q\leq j\leq d-1\}$, 
$C_{(d-p)\times q}\in{\rm span}\{\ket{\beta_i}\bra{\alpha_j}:p\leq i\leq d-1,~0\leq j\leq q-1\}$ 
and $D_{(d-p)\times (d-q)}\in{\rm span}\{\ket{\beta_i}\bra{\alpha_j}:p\leq i\leq d-1,~q\leq j\leq d-1\}$. 
Notice that $D_{(d-p)\times (d-q)}={\bf 0}_{(d-p)\times (d-q)}$ for any $E'\in \cS'$, since any 
matrix $E\in \cS$ maps $U$ into a subspace of $\cS(U)$. Therefore, we conclude that any matrix 
in $\cS$ is transformed by $Q_1$ and $Q_2$ to the following form:
\begin{equation}\label{eq:compression form}
\left[
\begin{array}{c|c}
A_{p\times q} &  B_{p\times (d-q)} \\
\hline
 C_{(d-p)\times q}& {\bf 0}_{(d-p)\times (d-q)}
\end{array}
\right]_{d\times d}.
\end{equation}

Given $p$, $q$ satisfying $p+q<d$, all matrices of form~\ref{eq:compression 
form} span a matrix space, denoted by $\A(p,q,d)$. 
Clearly, $\A(p,q,d)$ has shrunk subspaces. In particular, any $d\times d$ matrix space $\cS$ with shrunk subspace $U$, satisfying $\dim(U)=d-q$ and $\dim(\cS(U))=p$, is a subspace of $\A(p,q,d)$ after the above transforming procedure. In this sense we may call $\cA(p, q, d)$ a maximal-compression matrix space. A formal definition is as follows.
\begin{definition}\label{def:maximal-compression matrix space}
Let $p,q\in\N$. The $d\times d$ matrix space $\A(p,q,d)$ is the matrix space 
spanned by those elementary matrices whose nonzero entry lies either in the first 
$p$ rows, or in the first $q$ columns, i.e.~
$$\A(p,q,d):={\rm span}\{\{\ket{i}\bra{j}:0\leq i\leq  p-1,~0\leq j\leq d-1\}\cup\{\ket{i}\bra{j}:p\leq i\leq d-1,~0\leq j\leq q-1\}\}.$$
The maximal rank of $\A(p,q,d)$ equals $\min\{p+q,d\}$. Moreover, if $p+q<d$, we say $\A(p,q,d)$ is a 
maximal-compression matrix space. 
\end{definition}
This definition can be generalized to the rectangular matrix spaces. Let $\A(p,q,m,n)$ be the $m\times n$ matrix space spanned by the first $p$ rows and $q$ columns of elementary matrices, i.e.~
$$\A(p,q,m,n):={\rm span}\{\{\ket{i}\bra{j}:0\leq i\leq  p-1,~0\leq j\leq n-1\}\cup\{\ket{i}\bra{j}:p\leq i\leq m-1,~0\leq j\leq q-1\}\}.$$
Then the maximal rank of $\A(p,q,m,n)$ equals $\min\{p+q,m,n\}$. And we say 
$\A(p,q,m,n)$ is a maximal-compression matrix space if 
$p+q<\min\{m,n\}$.

\subsection{Results on shrunk subspaces}

In this subsection we first review some mathematical results concerning shrunk 
subspaces. We then introduce recent progress on algorithms to decide the existence 
of shrunk subspaces. 

Shrunk subspaces emerge in several mathematical areas. The first appearance of 
shrunk subspaces seems to be
in T. G. Room's treatise on determinants in 1930's~\cite{Room}. We 
mentioned in 
Section~\ref{subsec:prel} that shrunk subspaces can be viewed as a linear 
algebraic analogue of shrunk subsets as in the Hall's marriage theorem, which in 
turn is a basic 
result in combinatorics~\cite{LP86}. In matroid theory, Lov\'asz observed that the 
intersection of two linear matroids naturally 
leads to shrunk subspaces~\cite{Lov89}.

Shrunk subspaces appear in non-commutative algebra as follows. Suppose $\cS\leq  M(d,\F)$ is a matrix space over some field $\F$ and spanned by $\{T_1, \dots, T_m\}\subset M(d,\F)$. Let $\{x_1, 
\dots, x_m\}$ be a set of non-commuting variables, and form a matrix 
$T=x_1T_1+\cdots +x_mT_m$ whose entries are linear forms in $x_i$'s. Matrices of 
this type have been studied in non-commutative algebra in the context of the free 
skew field since the 1970's~\cite{Cohn71}. The rank of such a matrix over the free skew 
field, which was named \emph{non-commutative rank} and denoted by $ncrk(\cdot)$, 
was shown to be the minimum $r$ such that there exists a subspace 
$U\leq \F^d$ with $\dim(U)-\dim(\cS(U))=d-r$~\cite{Fortin2004}. Thus, $ncrk(\cS)<d$ if and only if $\cS$ have shrunk subspaces.  

Another way to reach the concept of shrunk subspaces is to consider matrix spaces 
with maximal ranks bounded from above~\cite{Eisenbud1988}. Characterizing those 
matrix spaces is known to be a difficult problem; in fact, such matrix spaces 
basically correspond to certain torsion-free sheaves on 
projective spaces~\cite{Eisenbud1988}. To make progress on this topic, one 
approach is to consider certain ``witnesses'' that can serve as an upper bound on 
the maximal rank. As mentioned, shrunk subspaces can be used as such witnesses. 
Specifically, if a matrix space $\cS\leq  M(d,\F)$ has a shrunk subspace $U$ with 
$\dim(U)-\dim(\cS(U))=c>0$, then it is clear that $mrk(\cS)\leq d-c$. 

On the other hand, an interesting result by Fortin and Reutenauer showed the following: 
\begin{theorem}[Corollary 2. in Ref.~\cite{Fortin2004}]\label{ncrk}
Let $\cS$ be a matrix space in $ M(d, \F)$. Then $$mrk(\cS)\leq ncrk(\cS)\leq 
2mrk(\cS).$$
\end{theorem}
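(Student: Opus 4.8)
The plan is to prove the two inequalities $mrk(\cS)\le ncrk(\cS)$ and $ncrk(\cS)\le 2\,mrk(\cS)$ separately, exploiting the characterization of $ncrk$ in terms of shrunk subspaces quoted just above the statement. The first inequality is essentially a ``specialization'' bound: the non-commutative rank governs the rank of the generic matrix $T=x_1T_1+\cdots+x_mT_m$ over the free skew field, and substituting generic \emph{commuting} field elements for the $x_i$ can only drop the rank, not raise it. Concretely, since $ncrk(\cS)$ equals the minimum $r$ such that some $U\le\F^d$ satisfies $\dim(U)-\dim(\cS(U))=d-r$, a matrix space $\cS$ with $mrk(\cS)=d$ (non-singular) can have no proper shrunk subspace, hence $ncrk(\cS)=d$ as well; in general, if $U$ is a shrunk subspace witnessing $ncrk(\cS)=r<d$, then every matrix $E\in\cS$ maps $U$ into $\cS(U)$, so after base change every $E$ has the block form in~\eqref{eq:compression form} with a zero block of size $(d-\dim(\cS(U)))\times\dim(U)$, and counting rows and columns of the nonzero blocks gives $\rank(E)\le \dim(\cS(U))+(d-\dim(U))=r$ for all $E$; taking the max over $E$ yields $mrk(\cS)\le r=ncrk(\cS)$.

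For the upper bound $ncrk(\cS)\le 2\,mrk(\cS)$, I would argue by contrapositive on the shrunk-subspace characterization: suppose $ncrk(\cS)=r$, so there is a subspace $U$ with $c:=\dim(U)-\dim(\cS(U))=d-r>0$. Pick a matrix $E\in\cS$ of rank $mrk(\cS)=:s$. The restriction $E|_U:U\to\cS(U)$ is a linear map from a space of dimension $\dim(U)$ into one of dimension $\dim(U)-c$, so its kernel $K:=U\cap\ker E$ has dimension at least $c$. On that kernel $K$, the chosen maximal-rank matrix $E$ contributes nothing, and more generally one wants to bound how much of the ``deficiency'' $c$ can be absorbed: the key estimate is that the deficiency $d-\rank(E)\ge c-(\text{rank contributed across }U)$. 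Unwinding the block form~\eqref{eq:compression form} for $E$ — with the zero block of dimensions $(d-\dim(\cS(U)))\times\dim(U)$ — the rank of $E$ is at most $\dim(\cS(U)) + (d-\dim(U)) = r$, but also at most $\dim(\cS(U))+\dim(\cS(U)) $-type bounds are not what we need; instead, observe $d-s=d-\rank(E)\ge \dim(U)-\dim(\cS(U)) - (\text{something}\le \dim(\cS(U)))$... The cleaner route: $\rank(E)\le (d-\dim U) + \mathrm{rank}(E|_U)$ and $\mathrm{rank}(E|_U)\le\dim(\cS(U))$, so $s\le (d-\dim U)+\dim\cS(U)=r$; combining with $r=d-c$ and $c\ge 1$ is too weak, so one must instead note $\dim\cS(U)\le s$ as well (since $\cS(U)$ is spanned by images of the $E_i$'s, each of rank $\le s$ — this needs a small argument, or one replaces $\cS(U)$ by the image under a single generic combination). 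Granting $\dim\cS(U)\le s$ and $d-\dim U\le s$, we get $r=\dim\cS(U)+(d-\dim U)\le 2s=2\,mrk(\cS)$.

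The main obstacle I anticipate is the step $\dim(\cS(U))\le mrk(\cS)$ (equivalently, the claim that the image space $\cS(U)$ is not much bigger than the image of a single maximal-rank element): this is where the factor $2$ really comes from, and it requires care — one cannot take $U=\F^d$ naively, and one needs either a genericity/Zariski-density argument (a generic matrix $E\in\cS$ achieves $\rank = mrk(\cS)$ and its image contains ``enough'' of $\cS(U)$) or the minimality of $U$ in the definition of $ncrk$ to control $\dim\cS(U)$. I would handle this by choosing $U$ to be a \emph{minimal} witness for $ncrk(\cS)$: minimality forces $E|_U$ to be injective on a complement, pinning down $\dim\cS(U)\le\dim U\le$ (a quantity bounded by $mrk(\cS)$ via the block form), from which $ncrk(\cS)=\dim\cS(U)+(d-\dim U)\le 2\,mrk(\cS)$ follows. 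The first inequality is routine; the second is the crux and the place where I would spend the most care making the dimension bookkeeping on the block decomposition~\eqref{eq:compression form} rigorous.
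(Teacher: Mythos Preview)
The paper does not prove this theorem; it is quoted from Fortin--Reutenauer and used as a black box, so there is no paper proof to compare against. Assessing your argument on its own merits: the lower bound $mrk(\cS)\le ncrk(\cS)$ is fine --- the block form~\eqref{eq:compression form} immediately caps the rank of every $E\in\cS$ by $\dim(\cS(U))+(d-\dim U)=r$.

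The upper bound has a genuine gap, and it is a directional error rather than a missing detail. You fix an \emph{optimal} witness $U$ for $ncrk(\cS)=r$ and then try to establish $\dim(\cS(U))\le s$ and $d-\dim U\le s$ (with $s=mrk(\cS)$). Neither inequality need hold for the optimizer, and your appeal to a ``minimal'' $U$ does not pin them down: minimality in $\dim U$ gives no control on $\dim(\cS(U))$, and $d-\dim U\le s$ simply fails when the optimal $U$ is small. The point is that $ncrk(\cS)=\min_{U}\bigl(\dim(\cS(U))+d-\dim U\bigr)$, so to \emph{upper}-bound it you must \emph{exhibit} a good $U$, not analyze the minimizer.

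The correct choice is $U=\ker(E)$ for any $E\in\cS$ of maximal rank $s$. Then $d-\dim U=s$ automatically, and the key inclusion $\cS(\ker E)\subseteq\mathrm{Im}(E)$ (hence $\dim\cS(U)\le s$) follows from the rank-increment principle already at hand as Lemma~\ref{rank}: if some $F\in\cS$ sent a vector of $\ker(E)$ outside $\mathrm{Im}(E)$, then $\mathrm{rank}(E+rF)>s$ for generic $r$, contradicting maximality. This yields $ncrk(\cS)\le s+s=2\,mrk(\cS)$ in one line. You were circling this idea with your ``genericity'' remark, but the attempt to route it through properties of the optimal $U$ is what derails the argument.
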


An important characterization of matrix spaces with shrunk subspaces comes from 
invariant theory. Consider the group action of $(A, C)\in \SL(d)\times \SL(d)$ on 
$ M(d,\F)^{\oplus m}$ by sending $(B_1, \dots, B_m)$ to $(AB_1C^T, \dots, AB_mC^T)$. 
This induces an action on the ring of polynomial functions on $ M(d,\F)^{\oplus 
m}$. Let $R(d, m)$ be the ring of those polynomials invariant under this 
action. $R(d, m)$ is called \emph{the ring of matrix semi-invariants} (for 
matrices of size $d\times d$)~\cite{Ivanyos2015a,DM2}. The common zeros of the homogeneous 
polynomials of positive degrees in 
$R(d, m)$, denoted as $N(R(d, m))$, is referred to as the
\emph{nullcone} of this invariant ring in the invariant theory literature. The 
 link to those matrix spaces which have shrunk subspaces is the following result from invariant theory, 
proved using the celebrated Hilbert-Mumford criterion. 
\begin{theorem}[\cite{BD06,ANK07}]\label{thm:nullcone}
$(B_1, \dots, B_m)$ is in $N(R(d, m))$ if and only if $\mathrm{span}\{B_1, \dots, 
B_m\}$ has a shrunk subspace. 
\end{theorem}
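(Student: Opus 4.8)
The plan is to prove Theorem~\ref{thm:nullcone} via the Hilbert-Mumford criterion, which states that a point $v$ in a representation of a reductive group $G$ lies in the nullcone if and only if there exists a one-parameter subgroup $\lambda$ of $G$ such that $\lim_{t\to 0}\lambda(t)\cdot v = 0$. Here $G = \SL(d)\times\SL(d)$ acting on $M(d,\F)^{\oplus m}$ by $(A,C)\cdot(B_1,\dots,B_m) = (AB_1C^T,\dots,AB_mC^T)$, and $v = (B_1,\dots,B_m)$. So the first step is to recall (or cite) the Hilbert-Mumford criterion in the form applicable to $\SL\times\SL$, noting that the nullcone $N(R(d,m))$ is exactly the set of unstable points (those whose $G$-orbit closure contains $0$).

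Second, I would analyze what a one-parameter subgroup of $\SL(d)\times\SL(d)$ looks like. Up to conjugation (i.e.\ up to a change of basis on each side, which is harmless since it just re-coordinatizes the $B_i$), a one-parameter subgroup is a pair of diagonal matrices $A(t) = \diag(t^{a_0},\dots,t^{a_{d-1}})$ and $C(t) = \diag(t^{c_0},\dots,t^{c_{d-1}})$ with $\sum a_i = \sum c_j = 0$. Then the $(i,j)$ entry of $A(t)B_kC(t)^T$ is $t^{a_i+c_j}(B_k)_{ij}$. The condition $\lim_{t\to 0}\lambda(t)\cdot v = 0$ says precisely that for every $k$ and every entry $(i,j)$ with $(B_k)_{ij}\neq 0$, we have $a_i + c_j > 0$. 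Equivalently, writing the weight sets, this means: there exist integer weights $a_0,\dots,a_{d-1}$ and $c_0,\dots,c_{d-1}$ summing to zero on each side such that $\spa\{B_1,\dots,B_m\}$ is supported on the entries $(i,j)$ with $a_i + c_j > 0$.

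Third, I would translate this weight condition into the existence of a shrunk subspace. For the forward direction: given such weights, order the $c_j$ decreasingly, let $U$ be the span of the standard basis vectors $\ket{j}$ for which $c_j$ is among the larger values, more precisely pick a threshold so that $U = \spa\{\ket{j} : c_j \geq \tau\}$ and $\cS(U) \subseteq \spa\{\ket{i} : a_i > -\tau\}$; the constraint $\sum a_i = \sum c_j = 0$ together with a counting/averaging argument forces $\dim(U) > \dim(\cS(U))$ for an appropriate choice of $\tau$ (this is the standard ``integral over thresholds'' trick: $\sum_\tau \dim(U_\tau) - \dim(\cS(U)_\tau)$ relates to $\sum_j c_j - \sum_i (\text{something}) \le 0$, so strict inequality must occur somewhere unless everything is degenerate, but if no shrunk subspace existed the inequality would fail everywhere). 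Conversely, given a shrunk subspace $U$ with $\dim(U) - \dim(\cS(U)) = e > 0$, I would use the normal form~\eqref{eq:compression form} (available from Section~\ref{subsec:prel}): after base changes all $B_k$ have the block form with a zero block of size $(d-p)\times(d-q)$ where $p + q < d$. Then I choose the one-parameter subgroup assigning a large positive weight to the first $p$ rows and a compensating negative weight to the remaining $d-p$ rows (scaled to sum to zero), and similarly a large positive weight to the first $q$ columns and negative weight to the rest; because the bottom-right block vanishes, every surviving entry gets strictly positive total weight, so $\lambda(t)\cdot v \to 0$. One must check the weights can be chosen integral with the $\SL$ (trace-zero) normalization, which is routine scaling.

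The main obstacle I anticipate is the bookkeeping in the forward direction — converting an arbitrary pair of weight vectors realizing $\lim_{t\to0}\lambda(t)\cdot v = 0$ into an honest shrunk subspace. The weights are not a priori of the simple two-valued form one gets from a shrunk subspace; one has a whole staircase of weight levels on each side. The right tool is to consider, for each real threshold $\tau$, the subspaces $U_\tau = \spa\{\ket{j}: c_j \ge \tau\}$ and $W_\tau = \spa\{\ket{i}: a_i > -\tau\}$, observe $\cS(U_\tau)\subseteq W_\tau$, and integrate the quantity $\dim(U_\tau) - \dim(W_\tau)$ over $\tau$; this integral equals $-\sum_j c_j + \sum_i a_i$ up to sign conventions, which is $0$ under the $\SL$ normalization — and a separate argument (using that $v\neq 0$, else the statement is trivial, and that the limit is genuinely $0$) shows the integrand is not identically nonnegative, hence is strictly positive for some $\tau$, giving the shrunk subspace $U_\tau$. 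Making this averaging argument precise, and handling the normalization to $\SL$ rather than $\mathrm{GL}$ (where one would instead balance the totals), is the delicate point; everything else is the standard Hilbert-Mumford machinery. I would remark that this is exactly the content of the cited references~\cite{BD06,ANK07}, so in the paper it may suffice to sketch this reduction and defer details to them.
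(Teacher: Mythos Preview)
The paper does not give its own proof of Theorem~\ref{thm:nullcone}; it is stated as a cited result from~\cite{BD06,ANK07}, with the surrounding text noting only that it is ``proved using the celebrated Hilbert-Mumford criterion.'' Your proposal therefore supplies what the paper deliberately omits, and your overall strategy---Hilbert-Mumford, diagonalize the one-parameter subgroup, then convert the weight condition to or from a shrunk subspace---is exactly the method the paper alludes to and the one used in the cited references.

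Your reverse direction (shrunk subspace $\Rightarrow$ nullcone) is clean and correct: with the block form of Equation~\eqref{eq:compression form}, taking $a_i = d-p$ for $i<p$, $a_i=-p$ for $i\ge p$, and $c_j=d-q$ for $j<q$, $c_j=-q$ for $j\ge q$ gives integer weights summing to zero on each side, and every nonzero entry receives strictly positive total weight precisely because $p+q<d$. For the forward direction your instinct is right but the threshold inclusion as you wrote it is slightly off: from $c_j\ge\tau$ and $a_i+c_j>0$ you cannot conclude $a_i>-\tau$. The clean fix is to sort the weights and use a discrete version: with $a_0\ge\cdots\ge a_{d-1}$ and $c_0\ge\cdots\ge c_{d-1}$, for each $k$ let $U_k$ be the span of the last $k$ standard basis vectors on the column side; the support condition forces $\cS(U_k)$ into the span of the first $p_k$ row vectors where $p_k=\#\{i:a_i>-c_{d-k}\}$. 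If no $U_k$ is shrunk then $p_k\ge k$ for all $k$, i.e.\ $a_{k-1}>-c_{d-k}$ for every $k$, and summing these $d$ strict inequalities yields $\sum_i a_i+\sum_j c_j>0$, contradicting the $\SL$ normalization. This replaces the integral-over-thresholds heuristic with a direct finite sum and closes the gap you flagged.
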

Therefore, matrix spaces with shrunk subspaces are 
characterized by those polynomials in $R(d, m)$. 

Invariant theory also helps to certify those matrix spaces with no shrunk 
subspaces. That is, given a matrix space $\cS\leq  M(d,\F)$, if $\cS$ does not have 
shrunk subspace, we would like to present a short witness to certify this fact. 
For example, if $\cS$ contains a full-rank matrix $A$, then exhibiting $A$ is 
enough to certify that $\cS$ does not contain shrunk subspaces. However, as 
mentioned, it is possible that a matrix space has neither full-rank matrices nor 
shrunk subspaces. To resolve this difficulty, we first recall what 
polynomials in $R(d, m)$ look like. This task is usually resolved in the so-called 
first fundamental theorem for $R(d, m)$. 
%In fact, it is this theorem that leads to 
%the use of the blow-up operation. 
\begin{theorem}[\cite{DW00,SV01,DZ01,ANK07}]\label{thm:fft}
Every homogeneous polynomial in $R(d, m)$ is of degree $kd$ for some $k\in\N$, and 
is a linear 
combination of polynomials of the form $\det(X_1\otimes A_1+\cdots+X_m\otimes A_m)$ 
where the $X_i$'s are $d\times d$ variable matrices, and the $A_i$'s are $k\times k$ 
matrices over $\F$. 
\end{theorem}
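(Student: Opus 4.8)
The plan is to prove the statement through the standard dictionary between $\SL(d)\times\SL(d)$-invariants and $\mathrm{GL}_d\times\mathrm{GL}_d$-semi-invariants, followed by a Cauchy-formula decomposition of the coordinate ring and the spanning theorem for semi-invariants of the $m$-Kronecker quiver. First I would pin down the degree: the scalar matrices $\zeta I_d$ with $\zeta^d=1$ lie in $\SL(d)$, and acting on the left factor they send $(B_1,\dots,B_m)$ to $(\zeta B_1,\dots,\zeta B_m)$, so a homogeneous invariant $f$ of degree $D$ satisfies $f=\zeta^D f$ for every $d$-th root of unity $\zeta$, forcing $d\mid D$; write $D=kd$. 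Then invariance under $\SL(d)\times\SL(d)$ in degree $kd$ is equivalent to being a $\mathrm{GL}_d\times\mathrm{GL}_d$-semi-invariant of weight $(\det{}^{k},\det{}^{k})$: on the summand $M(d,\F)$ the central element $(\lambda I_d,\mu I_d)$ acts by $\lambda\mu$, and matching the total degree $kd$ with $\SL$-invariance forces exactly this weight.

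Next I would record the easy inclusion. For fixed $A_1,\dots,A_m\in M(k,\F)$ the function $W=(B_1,\dots,B_m)\mapsto\det(B_1\ox A_1+\cdots+B_m\ox A_m)$ is a homogeneous polynomial of degree $kd$ in the entries of the $B_i$. Under $B_i\mapsto gB_ih^{T}$ with $g,h$ invertible, $\sum_i B_i\ox A_i$ transforms to $(g\ox I_k)\bigl(\sum_i B_i\ox A_i\bigr)(h^{T}\ox I_k)$, whose determinant acquires the factor $\det(g)^{k}\det(h)^{k}$; in particular it is unchanged when $g,h\in\SL(d)$. Hence every such determinant lies in $R(d,m)_{kd}$, and these polynomials are visibly polynomials in the entries of the variable matrices, so they form the claimed family inside $R(d,m)$.

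The crux is the reverse inclusion: these determinantal semi-invariants span $R(d,m)_{kd}$. I would decompose $\F[M(d,\F)^{\oplus m}]$ as a $\mathrm{GL}_d\times\mathrm{GL}_d$-module via the Cauchy formula, as a direct sum over multidegrees of tensor products of Schur modules $\mathbb{S}_{\lambda^{(i)}}(\F^{d})\ox\mathbb{S}_{\lambda^{(i)}}((\F^{d})^{*})$, and project onto the $(\det^{k},\det^{k})$-isotypic component; the weight constraint forces each $\lambda^{(i)}$ to fit in a $d\times k$ box, with the shapes assembling into $d\times k$ rectangles. This component is precisely the degree-$k$, dimension-vector-$(d,d)$ graded piece of the ring of semi-invariants of the quiver with two vertices and $m$ arrows, and here I invoke the spanning theorem for quiver semi-invariants from the cited works (Derksen–Weyman, Schofield–van den Bergh, Domokos–Zubkov, Aslaksen–Ngo–Kostant type results): the space of semi-invariants is spanned by the Schofield semi-invariants $c^{V}$, and for the $m$-Kronecker quiver $c^{V}(W)$ equals, up to a sign and a transpose of the auxiliary data, exactly $\det\bigl(\sum_i B_i\ox A_i\bigr)$, where the $A_i\in M(k,\F)$ are the arrow matrices of the auxiliary representation $V$ of dimension vector $(k,k)$. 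Combining the two inclusions yields the theorem.

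The main obstacle is exactly this spanning statement: the degree computation and the verification that the determinants are semi-invariants are routine, whereas showing that the determinantal family exhausts $R(d,m)$ is the genuine content of the first fundamental theorem. Carried out directly it requires either the Schur–Weyl/Cauchy bookkeeping together with an explicit realization of every isotypic component of weight $(\det^{k},\det^{k})$ by a single determinant $\det(\sum_i X_i\ox A_i)$, or the full machinery of Schofield semi-invariants and reflection functors for quivers; moreover, over an arbitrary field $\F$ the Cauchy decomposition is no longer available in the clean form above, so one must instead follow the characteristic-free argument of Domokos–Zubkov. Once the spanning is in hand, I would close the loop with the trivial bookkeeping that each $\det(\sum_i X_i\ox A_i)$ is a polynomial in the entries of the $X_i$ lying in $R(d,m)$ in degree $kd$, which also re-derives the asserted degree divisibility.
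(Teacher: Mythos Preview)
The paper does not prove this theorem at all: it is quoted from the literature \cite{DW00,SV01,DZ01,ANK07} and used as a black box to justify the blow-up construction and Theorem~\ref{thm:deg_bd}. So there is no ``paper's own proof'' to compare against.

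Your outline is the standard route taken in those references. The degree argument via the center of $\SL(d)$ is correct and routine; the verification that each $\det(\sum_i B_i\ox A_i)$ lies in $R(d,m)_{kd}$ is also correct. For the spanning direction you correctly identify that the real content is the first fundamental theorem for semi-invariants of the $m$-Kronecker quiver, and your plan---Cauchy decomposition to isolate the $(\det^k,\det^k)$ isotypic piece, then invoke the Schofield/Derksen--Weyman spanning theorem to realize every such semi-invariant as a determinantal one---is exactly the argument of \cite{DW00,SV01}. You are also right that over a general $\F$ the Cauchy decomposition step fails and one must follow the characteristic-free route of \cite{DZ01}. As written your proposal is an accurate high-level summary of those proofs rather than a self-contained argument; the only place where more care would be needed is the explicit identification of the Schofield semi-invariant $c^V$ with $\det(\sum_i B_i\ox A_i)$ for the Kronecker quiver, which requires unpacking the definition of $c^V$ for dimension vector $(k,k)$ and checking the block structure, but this is indeed what the cited papers do.
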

Theorem~\ref{thm:fft} motivates the following definition. For a matrix space 
$\cS\leq M(d,\F)$, the $k$th \emph{blow-up} of $\cS$ is $\cS^{[k]}= \cS\otimes M(k,\F)$. It is clear that, if $\cS$ possesses a shrunk subspace, then 
$\cS^{[k]}$ has a shrunk subspace for any positive integer $k$. On the other hand,
if $\cS=\mathrm{span}\{B_1, \dots, B_m\}$ does not possess a shrunk subspace, then it is 
not in the nullcone of $R(d, m)$ (Theorem~\ref{thm:nullcone}). This implies that 
there exists some $A_1, \dots, A_m\in  M(k,\F)$ such that $\det(B_1\otimes 
A_1+\dots+B_m\otimes A_m)\neq 0$ (Theorem~\ref{thm:fft}), which just says that 
$\cS^{[k]}$ contains a non-singular matrix. To see that $k$ is finite is 
classical: by Hilbert's basis theorem,  
$N(R(d, m))$ can be defined by finitely many polynomials, therefore $k$ is also 
finite. Recently, exciting progress suggests that $k$ can be taken to be no more 
than $d-1$ as long as $|\F|$ is large enough~\cite{Derksen201744}; see also~\cite{IQS2} for 
a simpler proof of $k\leq d+1$. Summarizing the above we have 
\begin{theorem}[\cite{Derksen201744}]\label{thm:deg_bd}
Suppose $|\F|=d^{\Omega(1)}$, where $\Omega(1)$ is asymptotic in $d$. If $\cS\leq  M(d,\F)$ does not have shrunk subspace, then for some $k\leq d-1$, $\cS\otimes  M(k,\F)$ contains a full-rank matrix. 
\end{theorem}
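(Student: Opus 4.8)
The plan is to treat the statement in two parts: the existence of \emph{some} finite $k$ (soft), and the quantitative bound $k\le d-1$ (the substance). For the soft half I would simply make precise the chain of implications already assembled in the paragraph preceding the theorem. If $\cS=\mathrm{span}\{B_1,\dots,B_m\}\le M(d,\F)$ has no shrunk subspace, then by Theorem~\ref{thm:nullcone} the tuple $(B_1,\dots,B_m)$ is not in the nullcone $N(R(d,m))$, so some homogeneous semi-invariant of positive degree does not vanish on it; by the first fundamental theorem (Theorem~\ref{thm:fft}) such a semi-invariant is a linear combination of functions $\det(X_1\otimes A_1+\cdots+X_m\otimes A_m)$ with all $A_i\in M(k,\F)$ for a common $k$, so $\det(B_1\otimes A_1+\cdots+B_m\otimes A_m)\ne 0$ for some choice of the $A_i$'s, i.e.~$\cS^{[k]}=\cS\otimes M(k,\F)$ contains a full-rank matrix. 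That a single $k$ works for \emph{all} such $\cS$ of a given size is Hilbert's basis theorem: $N(R(d,m))$ is cut out by finitely many semi-invariants, so only finitely many degrees $kd$ occur among a generating set.

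For the quantitative half I would first rephrase everything in rank language. By the Fortin--Reutenauer formula $ncrk(\cS)=d-\max_{U}\bigl(\dim U-\dim\cS(U)\bigr)$~\cite{Fortin2004}, ``$\cS$ has no shrunk subspace'' is exactly $ncrk(\cS)=d$; and it is standard (already implicit in the free-skew-field picture, cf.~\cite{Fortin2004}, and used throughout~\cite{IQS2,Derksen201744}) that $\rank(\cS^{[k]})\le k\cdot ncrk(\cS)$ for every $k$, with $ncrk(\cS)=\sup_{k}\rank(\cS^{[k]})/k=\lim_{k}\rank(\cS^{[k]})/k$. Hence the theorem is precisely the assertion that when $ncrk(\cS)=d$ this supremum is already attained at $k=d-1$, i.e.~$\rank(\cS^{[d-1]})=d(d-1)$. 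This is the main estimate of Derksen and Makam~\cite{Derksen201744}, which I would invoke rather than reprove; I would also cite~\cite{IQS2} for the weaker but more elementary bound $k\le d+1$.

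To indicate where the real work in the quantitative half lies: the engine is a \emph{rank increment} argument. Suppose $ncrk(\cS)=d$ but a maximal-rank matrix $A\in\cS^{[k]}$ has $\rank A=r<kd$. Analysing $\ker A$ and the cokernel of $A$ together with the hypothesis that $\cS$, hence $\cS^{[k]}$, has no shrunk subspace, one produces---after passing to a somewhat larger blow-up---a matrix of strictly larger \emph{normalized} rank, and one shows the process terminates with blow-up parameter at most $d-1$. Turning this into the sharp bound requires delicate bookkeeping: Cauchy--Binet and divided-difference expansions of the witnessing determinants, control of the (second) Wong sequence, and---crucially---the hypothesis $|\F|=d^{\Omega(1)}$, which guarantees that the determinants certifying each increment are not forced to vanish over the given field. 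The main obstacle is exactly this last, tight optimization of the increment parameter: obtaining \emph{a} finite $k$ is soft (Hilbert's basis theorem), obtaining a \emph{polynomial} $k$ is already the substantial contribution of Ivanyos--Qiao--Subrahmanyam, and pushing it down to $k\le d-1$ is the structural analysis of~\cite{Derksen201744}, which I would cite verbatim.
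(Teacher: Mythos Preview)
Your proposal is correct and aligns with the paper's treatment: the paper does not give its own proof of this theorem at all, but states it as a cited result from~\cite{Derksen201744}, with the preceding paragraph supplying exactly the soft argument you outline (Theorems~\ref{thm:nullcone} and~\ref{thm:fft} plus Hilbert's basis theorem for finiteness, then the citation for the sharp bound). Your additional sketch of the rank-increment mechanism behind the quantitative bound goes beyond what the paper provides, but is consistent with the cited sources.
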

The full-rank matrix as in Theorem~\ref{thm:deg_bd} then serves as a short witness 
for $\cS$ to have no shrunk subspace. We can also easily formulate an algorithmic problem around shrunk subspaces as 
follows. 
\begin{problem}\label{prob:non-commutative rank problem}
Given a matrix space $\cS\leq M(d,\F)$, decide whether 
$\cS$ has a shrunk subspace $U\leq \F^d$.
\end{problem}
This problem is known as the non-commutative rank problem~\cite{Fortin2004}, the 
non-commutative rational identity testing problem~\cite{Hrubes2015}, and the 
non-commutative Edmonds' problem~\cite{IQS1}. We adopt the non-commutative 
rank 
problem as in~\cite{Fortin2004}. This name choice is due to the connection with 
the free skew field as mentioned above. Recent advances imply that this problem 
can be solved deterministically in polynomial time.
\begin{theorem}[Ref.~\cite{Garg2015,Ivanyos2015a}]\label{polynomialtime}
There exists a deterministic polynomial-time algorithm to decide whether a given 
matrix space $\cS\leq  M(d,\F)$ has full non-commutative rank or not when $|\F|$ is large enough.
\end{theorem}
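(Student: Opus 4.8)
The plan is to reduce the test to deciding, for a small family of blow-ups of $\cS$, whether one of them contains a full-rank matrix, and then to carry out that decision \emph{deterministically} by exploiting the extra structure that separates non-commutative rank from ordinary symbolic rank. Recall from the discussion preceding the statement that $ncrk(\cS)<d$ if and only if $\cS$ has a shrunk subspace, so it suffices to solve Problem~\ref{prob:non-commutative rank problem}. First I would invoke Theorem~\ref{thm:deg_bd}: when $|\F|$ is large enough, $\cS$ has no shrunk subspace if and only if the blow-up $\cS^{[k]}=\cS\otimes M(k,\F)$ contains a full-rank matrix for some $k\le d-1$. Thus one may fix $k$ to be this bound and only needs to decide whether $mrk(\cS^{[k]})=dk$. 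The crucial point is that, unlike the general symbolic determinant identity problem, this particular instance admits a deterministic procedure.

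To do so (algebraic route, valid over any large enough $\F$), I would run a \textbf{rank-increment loop} inside $\cS^{[k]}$: starting from any matrix $E\in\cS^{[k]}$, a constructive ``second Wong sequence''/regularity argument produces either (i) some $E'\in\cS^{[k]}$ with $\rank(E')>\rank(E)$, or (ii) a subspace $U\le\F^d$ with $\dim(U)>\dim(\cS(U))$, i.e.\ a shrunk subspace of $\cS$ itself. Since ranks in $\cS^{[k]}$ are bounded by $dk$, the loop halts after $O(dk)$ iterations, each of which is linear algebra over $\F$, and its terminal state decides Problem~\ref{prob:non-commutative rank problem}. Over $\C$ (or $\Q$) one may instead take the analytic route: attach to $\cS=\mathrm{span}\{B_1,\dots,B_m\}$ the completely positive map $\Phi(X)=\sum_i B_iXB_i^{\dg}$, note that $ncrk(\cS)=d$ if and only if its capacity $\inf\{\det\Phi(X):X\succ 0,\ \det X=1\}$ is strictly positive, and run an alternating (Sinkhorn-type) scaling iteration until the scaled map lies within a prescribed distance of being doubly stochastic; one then argues that the cases ``capacity $>0$'' and ``capacity $=0$'' are distinguished after a number of iterations polynomial in the input size.

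The step I expect to be the main obstacle is the \emph{quantitative gap}: one must show that ``$ncrk(\cS)<d$'' forces the relevant obstruction to be polynomially controlled — a blow-up parameter $k\le d-1$ in the algebraic route, or an inverse-exponential lower bound on the capacity together with polynomial bit-length bounds on the scaling iterates in the analytic route. This is exactly where Theorems~\ref{thm:nullcone}, \ref{thm:fft} and \ref{thm:deg_bd} do the real work: Theorem~\ref{thm:nullcone} identifies ``having a shrunk subspace'' with membership in the nullcone of $R(d,m)$; the first fundamental theorem (Theorem~\ref{thm:fft}) writes every obstruction as $\det(X_1\otimes A_1+\cdots+X_m\otimes A_m)$ with $k\times k$ coefficient matrices; and the degree bound (Theorem~\ref{thm:deg_bd}) pins $k$ down to $\le d-1$, so the nullcone is cut out by polynomials of degree $O(d^2)$. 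Converting this degree bound into an explicit numerical gap — bit-complexity control of the scaling iterates, or a termination count for the rank-increment loop — is the technical heart of the argument; everything else is routine linear algebra over $\F$ and bookkeeping.
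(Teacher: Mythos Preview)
The paper does not prove this theorem at all: it is stated as a cited result from \cite{Garg2015,Ivanyos2015a} and used as a black box (to derive the corollary at the end of Section~\ref{proof}). So there is no ``paper's own proof'' to compare against.

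That said, your sketch is a fair high-level summary of what the two cited papers actually do. The analytic route you describe (operator scaling, capacity, Sinkhorn iteration) is the Garg--Gurvits--Oliveira--Wigderson approach in \cite{Garg2015}, and the algebraic route (rank increment via second Wong sequences inside a bounded blow-up, together with the degree bound of Theorem~\ref{thm:deg_bd}) is the Ivanyos--Qiao--Subrahmanyam approach in \cite{Ivanyos2015a}. Your identification of the ``quantitative gap'' as the crux is accurate: in the scaling approach it is the inverse-exponential capacity lower bound and bit-length control, and in the algebraic approach it is the polynomial bound on the blow-up parameter. One small correction: in the algebraic route the rank-increment step does not directly produce a shrunk subspace of $\cS$ when it fails; rather, failure at the maximal attainable rank is combined with the regularity lemma for blow-ups (that $mrk(\cS^{[k]})$ is always a multiple of $k$) to conclude $ncrk(\cS)<d$. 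But as a proof \emph{proposal} this is essentially correct, just not something the present paper undertakes.
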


\section{Multi-copy transformation}\label{finite}
% \section{Maximal Schmidt rank is super-multiplicative}\label{finite}
%Return back to the problem of tripartite-to-bipartite entanglement transformation 
%by SLOCC, 
In this section, we discuss the super-multiplicativity of the maximal Schmidt rank. We say a map $f:\Hm_d\to\R$ is super-multiplicative, if for any $\ket{\psi}$, $\ket{\phi}\in\Hm_d$, $f(\ket{\psi}\ox\ket{\phi})\geq f(\ket{\psi})\times f(\ket{\phi})$. Clearly, the maximal Schmidt rank, as well as many other information theoretic quantities, are super-multiplicative. A special case of the super-multiplicative maps is the multiplicative maps, i.e.~those maps $f:\Hm_d\to\R$ satisfy $f(\ket{\psi}\ox\ket{\phi})=f(\ket{\psi})\times f(\ket{\phi})$ for any $\ket{\psi},\ket{\phi}\in\Hm_d$.
% In general, we say a map $f:\Hm_d\to\R$ is multiplicative, if for any 
% $\ket{\psi},\ket{\phi}\in\Hm_d$, we have $f(\ket{\psi}\ox\ket{\phi})=f(\ket{\psi})\times f(\ket{\phi})$. Furthermore, we can derive
% $f(\ket{\psi})=f^{\infty}(\ket{\psi}):=\sup_{n\to\infty}\sqrt[n]{f(\ket{\psi}^{\ox n})}$.
% , \textbf{where the latter is always named as the asymptotic quantity of 
% $f$}. 
Multiplicative maps are important in quantum in formation theory, as they provides 
computable asymptotic quantities. For instance, the 
\emph{negativity}~\cite{vidal2002computable} is an 
important computable entanglement measure, which provides an upper bound on 
distillable entanglement. Similarly, several multiplicative quantities have been constructed 
to provide efficiently computable bounds on distillable 
entanglement~\cite{wang2016improved}, entanglement cost~\cite{wang2016rains}, 
and classical and quantum capacity of quantum 
channels~\cite{holevo2001evaluating,Duan2013,wang2016semidefiniteQ,wang2016semidefiniteC}.

On the other hand, many information-theoretic quantities are not multiplicative, 
but strictly super-multiplicative. A \emph{strictly super-multiplicative map} $f:\Hm_d\to\R$ is super-multiplicative, and there exist $\ket{\psi}$, $\ket{\phi}\in\Hm_d$ satisfying $f(\ket{\psi}\ox\ket{\phi})>f(\ket{\psi})\times f(\ket{\phi})$. This property causes much 
difficulty to compute the corresponding asymptotic counterparts, and yields many intriguing phenomenons. One example is that, two quantum channels with vanishing quantum capacities can have nonzero quantum capacity when used together~\cite{smith2008quantum}. This phenomenon not only illustrates the super-additivity of quantum capacity (which becomes super-multiplicativity before taking logarithm), but also illustrated that quantum capacity can be ``super-activated''. In the rest, we show that the maximal Schmidt rank is strictly super-multiplicative. 
\begin{theorem}\label{skew sym}
Let $d$ be an odd number and  
$$\ket {{\rm\Psi}^d_{\rm ABC}}:=\sqrt{\frac{2}{d(d-1)}} 
\sum_{0\leq i<j\leq 
d-1}(\ket{i}\ket{j}-\ket{j}\ket{i})\ox\ket{\psi_{ij}}\in\Hm^{\rm A}_d\ox\Hm^{\rm B}_d\ox\Hm^{\rm C}_{d^2},$$ 
where $\{\ket{i}:0\leq i\leq d-1\}$, $\{\ket{j}:0\leq j\leq d-1\}$ and $\{\ket{\psi_{ij}}:0\leq i,j\leq d-1\}$ are sets of orthogonal basis of $\Hm^{\rm A}_d$, $\Hm^{\rm B}_d$ and $\Hm^{\rm C}_{d^2}$, respectively. We have
\begin{equation}\label{eq:super-multiplicativity}
msrk(\ket {{\rm\Psi}^d_{\rm ABC}})=d-1,~msrk(\ket {{\rm\Psi}^d_{\rm ABC}}^{\ox 2})=d^2>(d-1)^2.
\end{equation}
In fact, any tripartite state $\ket {{\rm\Phi}^d_{\rm ABC}}$ with $M({\rm\Phi}^d_{\rm ABC})={\rm span}\{\ket{i}\bra{j}-\ket{j}\bra{i}: 0\leq i<j\leq d-1\}$, which is the $d\times d$ skew-symmetric matrix space, satisfies equation~(\ref{eq:super-multiplicativity}).  
\end{theorem}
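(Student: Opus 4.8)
The plan is to reduce the whole statement to two facts about the $d\times d$ skew-symmetric matrix space $\cS:={\rm span}\{L_{ij}:0\le i<j\le d-1\}\le M(d,\C)$, where $L_{ij}:=\ket{i}\bra{j}-\ket{j}\bra{i}$: namely $mrk(\cS)=d-1$ and $mrk(\cS^{\ox 2})=d^{2}$. For the state $\ket{{\rm\Psi}^{d}_{\rm ABC}}$ appearing in the statement, orthonormality of the $\ket{\psi_{ij}}$ gives $\rho^{{\rm\Psi}}_{\rm AB}\propto\sum_{i<j}(\ket{i}\ket{j}-\ket{j}\ket{i})(\bra{i}\bra{j}-\bra{j}\bra{i})$, so ${\rm vec}[\supp(\rho^{{\rm\Psi}}_{\rm AB})]=\cS$; more generally the statement is about any tripartite state whose associated matrix space equals $\cS$. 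Since $msrk(\ket{{\rm\Phi}_{\rm ABC}})=mrk(M({\rm\Phi}_{\rm ABC}))$ and, as recalled in Section~\ref{subsec:prel}, $msrk(\ket{{\rm\Phi}_{\rm ABC}}^{\ox 2})=mrk(M({\rm\Phi}_{\rm ABC})^{\ox 2})$, the two matrix-space identities, together with $d^{2}-(d-1)^{2}=2d-1>0$, give exactly~(\ref{eq:super-multiplicativity}).

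The equality $mrk(\cS)=d-1$ is the routine half. Every $A\in\cS$ satisfies $A^{T}=-A$, and a skew-symmetric matrix over $\C$ has even rank; since $d$ is odd this forces $mrk(\cS)\le d-1$. For the matching lower bound, writing $d=2e+1$, the block-diagonal matrix $\sum_{t=0}^{e-1}L_{2t,2t+1}\in\cS$ --- namely $e$ copies of $\left(\begin{smallmatrix}0&1\\-1&0\end{smallmatrix}\right)$ followed by a $1\times1$ zero block --- has rank $2e=d-1$. In particular $\cS$ itself contains no invertible matrix, which is what makes the two-copy behaviour surprising.

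For $mrk(\cS^{\ox 2})=d^{2}$, the bound ``$\le$'' is trivial since $\cS^{\ox 2}\le M(d^{2},\C)$, so the real work is to exhibit a single invertible element. I would take
\[
M:=\sum_{0\le i<j\le d-1}L_{ij}\ox L_{ij}\ \in\ \cS^{\ox 2},
\]
and identify the underlying space $\C^{d^{2}}=\C^{d}\ox\C^{d}$ with $M(d,\C)$ via $\ket{i}\ox\ket{j}\mapsto\ket{i}\bra{j}$, under which $A\ox B$ acts as $X\mapsto AXB^{T}$. Then $M$ acts by $X\mapsto\sum_{i<j}L_{ij}XL_{ij}^{T}=-\sum_{i<j}L_{ij}XL_{ij}$, and a direct expansion of $L_{ij}=\ket{i}\bra{j}-\ket{j}\bra{i}$ followed by reindexing collapses this to the clean identity
\[
M:\ X\ \longmapsto\ \Tr(X)\,I_{d}-X^{T}.
\]
This map is invertible whenever $d\ge2$: if $X^{T}=\Tr(X)I_{d}$ then taking traces gives $\Tr(X)=d\,\Tr(X)$, hence $\Tr(X)=0$ and so $X=0$. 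Thus $M$ has full rank $d^{2}$, giving $mrk(\cS^{\ox 2})=d^{2}$, and for odd $d\ge3$ this is strictly larger than $mrk(\cS)^{2}=(d-1)^{2}$.

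The only step here that is not purely mechanical --- and the place a reader should slow down --- is the choice of the witness $M=\sum_{i<j}L_{ij}\ox L_{ij}$ together with the verification of $M:X\mapsto\Tr(X)I_{d}-X^{T}$; once that identity is in place, invertibility, and hence the strict super-multiplicativity, is immediate, and everything else (even rank of skew-symmetric matrices, the explicit rank-$(d-1)$ matrix, the arithmetic $2d-1>0$) is routine. An alternative to producing the explicit $M$ would be a genericity argument that $\det$ does not vanish identically on $\cS^{\ox 2}$; but the explicit element is both cleaner and makes the construction effective.
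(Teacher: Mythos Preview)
Your proof is correct and uses the same witness $M=\sum_{i<j}L_{ij}\otimes L_{ij}$ as the paper. The paper verifies $\ker(M)=\{0\}$ by a direct coordinate computation: it writes $\ket{\alpha}=\sum_{i,j}x_i(j)\ket{i}\ket{j}$, expands $P\ket{\alpha}=0$ block by block, and solves the resulting linear system to conclude all $x_i(j)=0$. Your identification of $\C^{d}\otimes\C^{d}$ with $M(d,\C)$, yielding the closed form $M:X\mapsto\Tr(X)I_d-X^{T}$, is a cleaner and more conceptual route to the same conclusion---invertibility becomes a two-line trace argument rather than a system of equations---and it makes transparent \emph{why} this particular $M$ works. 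For $mrk(\cS)=d-1$ the paper simply cites the fact, whereas you supply the standard argument (even rank of skew-symmetric matrices plus an explicit rank-$(d-1)$ element), which is a modest bonus in self-containedness.
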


\begin{proof}
It is known that for odd $d$, the maximal rank of the $d\times d$ skew-symmetric matrix space is $d-1$~\cite{Fortin2004}.
%\yinan{add reference here.}. 
We now show 
$mrk(M({\rm\Phi}^d_{\rm ABC}))=d^2$.
Specifically, let $E_{i,j}=\ket{i}\bra{j}-\ket{j}\bra{i}\in M({\rm\Phi}^d_{\rm 
ABC})$ for $0\leq i<j\leq d-1$. We claim that 
$$P:=\sum_{0\leq i<j\leq d-1}E_{i,j}\otimes E_{i,j}$$ has rank $d^2$. Notice that $P$ 
is in the block matrix form:
$$P=\begin{pmatrix}
{\bf 0}&E_{0,1}&\cdots&E_{0,d-1} \\
-E_{0,1}&{\bf 0}&\cdots& E_{1,d-1} \\
\vdots&\vdots&\ddots&\vdots\\
-E_{0,d-1}&-E_{1,d-1}&\cdots&{\bf 0}\\
\end{pmatrix}.$$
We will prove that $\kernel(P)=\{0\}$. Let $\ket{\alpha}=\sum_{i,j=0}^{d-1} 
x_i(j)\ket{i}\ket{j}$ such that $P\ket{\alpha}=0$, where $x_i(j)$ are variables. Denote $\ket{\alpha_i}=\sum_{j=0}^{d-1}x_i(j)\ket{j}$. Then for 
$1\leq k\leq d-2$, we have:
$$-\sum_{i=0}^{k-1}E_{i,k}\ket{\alpha_i}+\sum_{i=k+1}^{d-1}E_{k,i}\ket{\alpha_i}=0.$$
For $k=0$, we have $\sum_{i=1}^{d-1}E_{0,i}\ket{\alpha_i}=0$ and for $k=d-1$, we have $\sum_{i=0}^{d-2}E_{i,d-1}\ket{\alpha_i}=0$. Noticing that $E_{j,k}\ket{\alpha_i}=x_{i}(k)\ket{j}-x_i(j)\ket{k}$, we can rewrite the equations to:
\begin{align*}
\sum_{i=1}^{d-1}(x_i(0)\ket{i}-x_i(i)\ket{0})&=0,\\
-\sum_{i=0}^{k-1}(x_i(k)\ket{i}-x_i(i)\ket{k})+\sum_{i=k+1}^{d-1}(x_i(i)\ket{k}-x_i(k)\ket{i})&=0,~~k=1,\dots,d-2\\
\sum_{i=0}^{d-2}(x_i(d-1)\ket{i}-x_i(i)\ket{d-1})&=0.
\end{align*}
These yield that $x_i(j)=0$ for all $0\leq i\neq j\leq d-1$ and $m_k=\sum_{i\neq k}x_i(i)=0$ for $k=0,\dots,d-1$. Notice that $m_{k}-m_{k+1}=x_{k+1}(k+1)-x_{k}(k)=0$ for $k=0,\dots, d-2$ and $m_{d-1}-m_0=x_{0}(0)-x_{d-1}(d-1)=0$. We derive $x_k(k)=0$ for $k=0,\dots,d-1$.
Thus $\ket{\alpha}=0$ is the only solution for $P\ket{\alpha}=0$, hence ${\rm rank}(P)=d^2$. 
\end{proof}

Theorem~\ref{skew sym} also implies that $\ket {{\rm\Psi}^d_{\rm ABC}}$ cannot be 
transformed to the $d\otimes d$ maximally entangled state by means of SLOCC, but 
can do so with two copies. In the multi-copy regime, those 
tripartite states $\ket{{\rm\Psi}_{\rm ABC}}$ satisfying $msrk(\ket{{\rm\Psi}_{\rm 
ABC}}^{\ox 2})>msrk(\ket{{\rm\Psi}_{\rm ABC}})^2$ are of great interest, as such 
states allow for more advantages in the tripartite-to-bipartite SLOCC 
transformations when multiple copies are provided.
Interestingly, these tripartite states can be characterized by the following theorem, based on the structure of their corresponding matrix spaces:

\begin{theorem}\label{super}
Given a tripartite state $\ket{{\rm \Psi}_{\rm ABC}}\in \Hm^{\rm A}_{d_1}\ox\Hm^{\rm B}_{d_2}\ox\Hm^{\rm C}_{d_3}$, let $\cS=M({\rm \Psi}_{\rm ABC})\leq M(d_1\times d_2,\C)$. Then $msrk(\ket{{\rm \Psi}_{\rm ABC}}^{\ox 2})>msrk(\ket{{\rm \Psi}_{\rm ABC}})^2$ if and only if 
\begin{enumerate}
\item $mrk(\cS)< \dims[\im(\cS)]$, and
\item $mrk(\cS)< d_2 -\dims[{\rm Ker}(\cS)]$,
\end{enumerate}
where the image and kernel of a matrix space 
$\cS$ are defined as $\im(\cS):={\rm span}\{\cup_{\textit{E}\in \cS} \im(\textit{E})\}$ and 
${\rm Ker}(\cS):=\cap_{\textit{E}\in \cS}{\rm Ker}(\textit{E})$.
\end{theorem}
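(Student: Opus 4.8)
The plan is to pass to the matrix space $\cS=M({\rm \Psi}_{\rm ABC})\le M(d_1\times d_2,\C)$, using the dictionary $msrk(\ket{{\rm \Psi}_{\rm ABC}})=mrk(\cS)$ and $msrk(\ket{{\rm \Psi}_{\rm ABC}}^{\ox2})=mrk(\cS^{\ox2})$, write $r:=mrk(\cS)$, and recall the trivial super-multiplicativity $mrk(\cS^{\ox2})\ge r^2$ witnessed by $E\ox E$ for any $E\in\cS$ with $\rank(E)=r$. Thus the claim is that this inequality is strict iff conditions~1 and~2 hold, and I would prove the two implications separately.

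For necessity I would argue by contraposition, starting from the two a priori bounds $r\le\dims[\im(\cS)]$ and $r\le d_2-\dims[{\rm Ker}(\cS)]$ (valid since $\im(E)\subseteq\im(\cS)$ and ${\rm Ker}(E)\supseteq{\rm Ker}(\cS)$ for every $E\in\cS$), so conditions~1 and~2 simply say neither bound is tight. If condition~1 fails then $\im(\cS)$ is an $r$-dimensional space containing $\im(E)$ for all $E\in\cS$, hence every $Q\in\cS^{\ox2}$ satisfies $\im(Q)\subseteq\im(\cS)\ox\im(\cS)$ and so $\rank(Q)\le r^2$, forcing $mrk(\cS^{\ox2})=r^2$. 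Dually, if condition~2 fails then every $Q\in\cS^{\ox2}$ annihilates ${\rm Ker}(\cS)\ox\C^{d_2}+\C^{d_2}\ox{\rm Ker}(\cS)$, a subspace of dimension $d_2^2-r^2$, whence again $\rank(Q)\le r^2$ and $mrk(\cS^{\ox2})=r^2$.

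For sufficiency I would first normalize: replacing $\cS$ by the matrix space it induces between $\C^{d_2}/{\rm Ker}(\cS)$ and $\im(\cS)$ preserves the rank of every element and, because every $Q\in\cS^{\ox2}$ annihilates ${\rm Ker}(\cS)\ox\C^{d_2}+\C^{d_2}\ox{\rm Ker}(\cS)$ and has image in $\im(\cS)\ox\im(\cS)$, also preserves $mrk(\cS^{\ox2})$; after this reduction ${\rm Ker}(\cS)=0$, $\im(\cS)=\C^{d_1}$, and conditions~1 and~2 become $r<d_1$ and $r<d_2$. The key input is the elementary lemma that if $\cT\le M(N\times N',\C)$ and $P\in\cT$ attains $\rank(P)=mrk(\cT)=\rho$, then every $Q\in\cT$ maps ${\rm Ker}(P)$ into $\im(P)$: putting $P=\diag(I_\rho,{\bf 0})$ in adapted bases and $Q$ into the corresponding block form, $\det(I_\rho+tQ_{11})$ times the Schur complement of $P+tQ$ is a polynomial in $t$ that vanishes identically (as $\rank(P+tQ)\le\rho$), and its value at $t=0$ forces the lower-right block of $Q$ to vanish. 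Now suppose for contradiction $mrk(\cS^{\ox2})=r^2$ and fix $E\in\cS$ with $\rank(E)=r$, so $E\ox E$ is an element of $\cS^{\ox2}$ of rank $r^2=mrk(\cS^{\ox2})$: since $r<d_2$ pick $0\ne u\in{\rm Ker}(E)$; since ${\rm Ker}(\cS)=0$ pick $A\in\cS$ with $Au\ne0$; since $r<d_1$ and $\im(\cS)=\C^{d_1}$ pick $B\in\cS$ and $v$ with $Bv\notin\im(E)$. As $u\ox v\in{\rm Ker}(E)\ox\C^{d_2}\subseteq{\rm Ker}(E\ox E)$, the lemma gives $Au\ox Bv=(A\ox B)(u\ox v)\in\im(E\ox E)=\im(E)\ox\im(E)$; but a nonzero simple tensor $x\ox y$ lies in $V\ox V$ only when $y\in V$, contradicting $Bv\notin\im(E)$. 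Hence $mrk(\cS^{\ox2})>r^2$.

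I expect the main obstacle to lie in the sufficiency direction, and within it in the bookkeeping: verifying carefully that the normalization step genuinely leaves $mrk(\cS^{\ox2})$ unchanged, and isolating the ``a maximum-rank element controls the block structure'' lemma in precisely the form used, together with the supporting tensor identities $\im(E\ox F)=\im(E)\ox\im(F)$, ${\rm Ker}(E\ox F)={\rm Ker}(E)\ox\C^{d_2}+\C^{d_2}\ox{\rm Ker}(F)$, and the description of which simple tensors lie in a tensor-product subspace. The necessity direction, by contrast, is essentially just propagating the two a priori rank bounds through the tensor product.
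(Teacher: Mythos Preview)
Your proof is correct and follows essentially the same approach as the paper: the necessity direction is identical, and for sufficiency both arguments hinge on the same key lemma (that a maximum-rank element $P$ of a matrix space forces every other element to send $\ker P$ into $\im P$, which the paper quotes as Lemma~2.2 of \cite{Ivanyos2010}), applied to $E\ox E\in\cS^{\ox2}$. The only differences are cosmetic: you add a harmless normalization step and use a tensor $A\ox B$ with $A,B$ chosen separately, whereas the paper picks a single $E'\notin\cX\cup\cY$ (using that $\C$-spaces are not unions of two proper subspaces) and works with $E'\ox E'$ directly.
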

\begin{remark}
Theorem~\ref{super} indicates that the inequality $msrk(\ket{{\rm \Psi}_{\rm ABC}}^{\ox 2})>msrk(\ket{{\rm \Psi}_{\rm 
ABC}})^2$ holds for almost all tripartite states whose maximal Schmidt rank is not full, as the two conditions in theorem~\ref{super} put only degenerate restrictions on such states. 
% Another way to understand the generality is from the information-theoretical perspective. For a matrix space $\cS={\rm span}\{E_1,\dots,E_k\}$, which corresponding to a quantum operation (or completely-positive map) $\cE$, let $M= \sum_i E_i^\dagger E_i$ and $N=\sum_i E_iE_i^\dagger$. Then every matrix in $\cS$ is a linear operator from $\supp(N)$ to $\supp(M)$, i.e.~ the input space and output space of $\cE$. Meanwhile, the image and kernel of $\cS$ can be alternatively defined as $\im(\cS):=\supp(M)$ and $\ker(\cS):=\ker(N)$. 
\end{remark}

\begin{proof}
Equivalently, we consider when $mrk(\cS^{\ox 2})>mrk(\cS)^2$. Without loss of generality, we assume $d_1\leq d_2$. The following observation from Ref.~\cite{Ivanyos2010} is useful to estimate the rank of linear combinations of two 
matrices: 
\begin{lemma}[Lemma 2.2 in Ref.~\cite{Ivanyos2010}]\label{rank}
Given two matrices ${\textit X, Y}\in M(d_1\times d_2,\C)$. If $Y\kernel(\textit{X})\not\leq \im(\textit{X})$, then $\rm{rank}(\textit{X+rY})>\rm{rank}(\textit{X})$ except for at most $\rm{rank}(\textit{X})+1$ elements $r\in\C$.
\end{lemma}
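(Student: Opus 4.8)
The plan is to produce, for all but finitely many scalars $r$, an explicit $(k+1)$-dimensional subspace of the column space of $X+rY$, where $k:=\mathrm{rank}(X)$, and then to cut the exceptional set of $r$ down to at most $k+1$ values by a single polynomial-degree argument. Since $\mathrm{rank}(X+rY)\ge k+1>\mathrm{rank}(X)$ is exactly what we want, it suffices to show $\mathrm{rank}(X+rY)\ge k+1$ for all but at most $k+1$ values of $r\in\C$.

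First I would invoke the hypothesis $Y\kernel(X)\not\leq\im(X)$ to fix $v\in\kernel(X)$ with $w:=Yv\notin\im(X)$; in particular $w\neq 0$. Next, choose $u_1,\dots,u_k\in\C^{d_2}$ so that $Xu_1,\dots,Xu_k$ form a basis of $\im(X)$, and form the $d_1\times(k+1)$ matrix $N(r)$ whose columns are $(X+rY)u_1,\dots,(X+rY)u_k,(X+rY)v$. Every column of $N(r)$ lies in the column space of $X+rY$, so $\mathrm{rank}(X+rY)\ge\mathrm{rank}(N(r))$, and it is enough to control $\mathrm{rank}(N(r))$.

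The key computation is that $(X+rY)v=Xv+rYv=rw$ because $v\in\kernel(X)$; hence the last column of $N(r)$ equals $r\,w$. Let $N'(r)$ be $N(r)$ with its last column replaced by $w$. At $r=0$ the matrix $N'(0)=[\,Xu_1\mid\cdots\mid Xu_k\mid w\,]$ has rank $k+1$, since $w\notin\mathrm{span}\{Xu_1,\dots,Xu_k\}=\im(X)$; so I can pick a set $R$ of $k+1$ rows for which the corresponding $(k+1)\times(k+1)$ minor $\det(N'(r)|_R)$ is nonzero at $r=0$. Set $P(r):=\det(N'(r)|_R)$. Only the first $k$ columns of $N'(r)$ depend on $r$, and they do so affinely (expand along the constant column $w|_R$), so $P$ is a polynomial of degree at most $k$ with $P(0)\neq 0$, hence $P\not\equiv 0$. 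Expanding the determinant of $N(r)|_R$ along its last column, which is $r$ times that of $N'(r)|_R$, gives $\det(N(r)|_R)=r\,P(r)$: a nonzero polynomial of degree at most $k+1$, so with at most $k+1$ roots. For every $r$ outside this root set, $\mathrm{rank}(N(r))\ge k+1$, whence $\mathrm{rank}(X+rY)\ge k+1>\mathrm{rank}(X)$, as claimed.

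The whole thing is a determinant/degree count, so I do not expect a deep obstacle; the point that needs care is the bookkeeping that yields precisely $k+1$ exceptional values and not $k+2$. This is why I would pass to $N'(r)$ to certify non-vanishing at $r=0$ — where the last column of $N(r)$ itself collapses to zero, so a minor of $N(r)$ alone cannot detect rank there — and only then reinsert the factor $r$, so that the single genuinely bad value $r=0$ is absorbed into the root set of $r\,P(r)$. It is also worth noting in passing that the hypothesis forces $k<\min(d_1,d_2)$: if $\kernel(X)=0$ or $\im(X)=\C^{d_1}$ then $Y\kernel(X)\le\im(X)$ automatically, so rank $k+1$ is not a priori impossible.
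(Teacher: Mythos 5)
Your proof is correct: passing to the auxiliary matrix $N'(r)$ with last column $w=Yv\notin\im(X)$, certifying $P(0)\neq 0$, and then reinserting the factor $r$ gives a nonzero polynomial $r\,P(r)$ of degree at most $\mathrm{rank}(X)+1$ whose roots contain every exceptional value, exactly as required. Note that the paper itself offers no proof of this statement (it is quoted verbatim from Lemma 2.2 of Ref.~\cite{Ivanyos2010}), and your determinant/degree argument is essentially the standard proof of that cited result, so there is nothing in the paper to contrast it with.
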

%\yinan{Please double check the algernative statement of Ivanyos's lemma.}
We first prove the sufficiency. % \yinan{Deleted: part of our claim}. 
Notice that a matrix space 
$\cS$ satisfying the above two conditions must be singular. Choose $E\in\cS$ 
with the highest rank, i.e.~$\rank(E)=mrk(\cS)<d_1$. Define the following two matrix 
spaces:
$$\cX:=\{F\in\cS : \im(\textit{F})\leq \im(\textit{E})\},~~\cY:=\{F\in\cS : \kernel(\textit{E})\leq\kernel(\textit{F})\}.$$
We claim that $\cX$ and $\cY$ are two proper subspaces of $\cS$. Otherwise, assuming $\cX=\cS$, we have $mrk(\cS)=\rank(\textit{E})=\dims[\im(\textit{E})]=\dims[\im(\cS)]$, which is a contradiction. If $\cY=\cS$, we have $\dims[\kernel(\cS)]=\dims[\kernel(\textit{E})]=d_2-\rank(\textit{E})=d_2-mrk(\cS)$, which is also a contradiction. Now we can choose $E'\in \cS$ such that $E'\notin \cX\cup \cY$, i.e.~$\im(\textit{E}')\not\leq\im(\textit{E})$ and $\kernel(\textit{E})\not\leq\kernel(\textit{E}')$. 

Then we claim that there exists $r\in\C$ such that $\rank({ E\otimes E}+r{ E'\otimes E'})>\rank({ E\otimes E})$. Notice that $\kernel(E\otimes E)={\rm span}\{\kernel(E)\otimes \C^{d_2},\C^{d_2}\otimes \kernel(E)\}$ and $\im(E\otimes E)=\im(E)\otimes\im(E)$. 
% Then by Lemma~\ref{rank}, we can show $mrk(\cS^{\ox 2})>mrk(\cS)^2$.
We are going to show $(E'\otimes E')(\kernel(E)\otimes \C^{d_2})\not\leq \im(E)\otimes\im(E)$.
Since $E$ has the highest rank in $\cS$, 
$\rank(\textit{E+rE}')\leq\rank(\textit{E})$ for any $r\in\C$. By Lemma~\ref{rank}, we see that $E'\kernel(\textit{E})\leq\im(\textit{E})$. Since $\kernel(\textit{E})\not\leq \kernel(\textit{E}')$, we 
can choose a non-zero vector $\ket{v}\in\kernel(\textit{E})$ such that $0\neq 
E'\ket{v}\in\im(\textit{E})$. Moreover, since 
$\im(\textit{E}')\not\leq\im(\textit{E})$, we can find a vector $\ket{u}$ such 
that $0\neq E'\ket{u}\not\in\im(\textit{E})$. Setting $\ket{v}\otimes 
\ket{u}\in\kernel(\textit{E})\otimes\C^{d_2}\leq\kernel(E\otimes E)$, we have $0\neq E'\ket{v}\otimes 
E'\ket{u}\not\in\im(\textit{E})\otimes\im(\textit{E})$. Then by Lemma~\ref{rank}, there exists $r\in\C$ 
such that:
$$mrk(\cS^{\otimes 2})\geq\rank(E\otimes E+rE'\otimes E')>\rank(E\otimes E)=mrk(\cS)^2.$$

For the necessity, we shall show that if $\cS$ does not satisfy either one of those two 
conditions, $mrk(\cS^{\otimes 2})=mrk(\cS)^2$. When $\cS$ is non-singular, this 
equation always holds. Now give a singular matrix space $\cS\leq M(d_1\times d_2,\C)$ satisfying $\rk(\cS)=\dims[\im(\cS)]$, and $mrk(\cS)=d'<d_1$. 
Fix bases of $\im(\cS)$, say $\{v_1,\dots,v_{d'}\}$ and 
extend them to $\{v_1,\dots,v_{d'},v_{d'+1},\dots,v_{d_1}\}$, which are full bases of $\C^{d_1}$. 
For any matrix $F\in\cS$, we have $\im(\textit{F})\leq{\rm span}\{v_1,\dots,v_{d'}\}$. 
Then any matrix $F\in\cS$ can be expressed by this basis in the following form:
$$F=\begin{pmatrix} F_{d'\times d_2} \\ {\bf 0}_{(d_1-d')\times d_2}\end{pmatrix},$$
where $F_{d'\times d_2}$ is a ${d'}\times d_2$ matrix and the rest part of $F$ are all zero. Thus we 
can derive $\rk(\cS^{\otimes 2})\leq\rk(\cS)^2$ since any matrix in $\cS\otimes\cS$ 
will have at most ${d'}^2$ non-zero rows. The other case is similar, and we conclude the proof.  
\end{proof}

% \section{The asymptotic maximal (Schmidt) rank and structure of matrix space}\label{proof}
\section{Asymptotic transformations}\label{proof}
Theorem~\ref{skew sym} implies the super-multiplicativity of 
the maximal Schmidt rank. And theorem~\ref{super} indicates how the structure of matrix spaces 
plays a role in this topic.  In this section, we initiate the study of the 
asymptotic maximal (Schmidt) rank. Utilizing certain results from the structure of matrix spaces, 
including some relevant results from invariant theory, we are able to present explicit 
formulas to calculate this asymptotic quantity for two important families of matrix spaces (tripartite states), 
which would be difficult to compute without the help of these tools. Our results are summarized in the following theorem:
\begin{theorem}\label{asymptotic}
Given a matrix space $\cS\leq M(d,\C)$,
\begin{enumerate}
\item If $\cS$ does not have shrunk subspace, we have $\frac{1}{2}d^n\leq mrk(\cS^{\ox n})\leq d^n$. Thus $mrk^{\infty}(\cS)=d$.

\item If $\cS=\A(p,q,d)$ satisfying $p+q<d$, then 
\begin{equation}\label{eq:asymptotic formula}
mrk^{\infty}(\cS)= d\max\{2^{-D(1-\alpha || p')}, 2^{-D(\alpha || q')}\},
\end{equation}
where $p'=\frac{p}{d}$, $q'=\frac{q}{d}$, $\alpha=\frac{\log_2(d-q)-\log_2 p}{\log_2((d-p)(d-q))-\log_2 (pq)}$ and $D(a||b):=a{\rm log}_2\frac{a}{b}+(1-a){\rm log}_2\frac{1-a}{1-b}$. 
\end{enumerate}
\end{theorem}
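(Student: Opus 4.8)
The plan is to route the argument through the non-commutative rank. Since $\cS$ has no shrunk subspace, $ncrk(\cS)=d$, so by Theorem~\ref{thm:nullcone} together with Theorem~\ref{thm:fft} (or directly by Theorem~\ref{thm:deg_bd}) some blow-up $\cS\ox M(k,\C)$ contains a full-rank matrix $A$, of size $dk$. Then $A^{\ox n}$ is a full-rank matrix of size $(dk)^n$ lying in $(\cS\ox M(k,\C))^{\ox n}=\cS^{\ox n}\ox M(k^n,\C)=(\cS^{\ox n})^{[k^n]}$, a blow-up of $\cS^{\ox n}$; a matrix space with a shrunk subspace is singular, so $\cS^{\ox n}$ has no shrunk subspace either, i.e.\ $ncrk(\cS^{\ox n})=d^n$. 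Theorem~\ref{ncrk} then gives $mrk(\cS^{\ox n})\ge ncrk(\cS^{\ox n})/2=d^n/2$, while $mrk(\cS^{\ox n})\le d^n$ trivially; taking $n$-th roots and using Lemma~\ref{thm:replace sup by lim}, the chain $d\cdot 2^{-1/n}=(d^n/2)^{1/n}\le mrk(\cS^{\ox n})^{1/n}\le d$ forces $mrk^\infty(\cS)=d$.

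\textbf{Part 2 --- combinatorial reformulation.} For $\cS=\A(p,q,d)$ I would first make the problem purely combinatorial. Since $\A(p,q,d)=\mathrm{span}\{\ket i\bra j: i\le p-1\text{ or }j\le q-1\}$, its $n$-th tensor power is again spanned by elementary matrices: writing $\kappa(I):=\{k: i_k\ge p\}$ and $\lambda(J):=\{k: j_k\ge q\}$ for $I,J\in\{0,\dots,d-1\}^n$, one has $\cS^{\ox n}=\mathrm{span}\{\ket I\bra J:\kappa(I)\cap\lambda(J)=\emptyset\}$. For a zero-pattern matrix space over $\C$ the maximal rank equals the term rank, i.e.\ the maximum matching of the bipartite ``pattern graph'' $G_n$ with both colour classes equal to $\{0,\dots,d-1\}^n$ and $I\sim J$ iff $\kappa(I)\cap\lambda(J)=\emptyset$ (a generic assignment of nonzero entries attains the term rank, and by K\"onig's theorem the term rank equals the maximum matching, which also equals the minimum vertex cover). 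So the task is to pin down the exponential growth rate of the maximum matching of $G_n$.

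\textbf{Part 2 --- the converse (upper bound).} For the upper bound I would exhibit a one-parameter family of vertex covers: for $0\le s\le n$, take all rows $I$ with $|\kappa(I)|\le n-s$ together with all columns $J$ with $|\lambda(J)|\le s$. This covers every edge, because an uncovered row has $|\kappa(I)|>n-s$ and an uncovered column has $|\lambda(J)|>s$, forcing $|\kappa(I)|+|\lambda(J)|>n$, hence $\kappa(I)\cap\lambda(J)\ne\emptyset$. With $\beta=s/n$ and standard binomial/Stirling estimates, for $\beta$ in the (nonempty, since $p'+q'<1$) admissible range the number of covered rows is $d^n2^{-nD(\beta||p')}$ and of covered columns $d^n2^{-nD(1-\beta||q')}$, up to $\mathrm{poly}(n)$, so the cover has exponential rate $\log_2 d-\min\{D(\beta||p'),D(1-\beta||q')\}$. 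Minimising over $\beta$ maximises the minimum of the two divergences, which happens at the crossover $D(\beta||p')=D(1-\beta||q')$; cancelling the entropy terms this reduces to $\beta\log_2\tfrac{1-q'}{p'}=(1-\beta)\log_2\tfrac{1-p'}{q'}$, and substituting $p'=p/d,\,q'=q/d$ gives $\beta^*=1-\alpha$ for precisely the $\alpha$ in the statement, with $D(\beta^*||p')=D(1-\alpha||p')=D(\alpha||q')=:D^*$. Taking $n$-th roots, $mrk^\infty(\cS)\le d\,2^{-D^*}=d\max\{2^{-D(1-\alpha||p')},2^{-D(\alpha||q')}\}$; one also checks $\beta^*$ lies in the admissible range, an inequality equivalent to $p+q<d$.

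\textbf{Part 2 --- achievability (lower bound), and the main obstacle.} For the matching lower bound I would, for small fixed $\epsilon>0$, try to saturate the ``light'' rows $\mathcal I_t:=\{I:|\kappa(I)|\le t\}$ with $t=\lfloor(\alpha-\epsilon)n\rfloor$; their number has rate $H(\alpha-\epsilon)+(\alpha-\epsilon)\log_2(d-p)+(1-\alpha+\epsilon)\log_2 p=\log_2 d-D(1-\alpha+\epsilon||p')$ (with $H(x)=-x\log_2 x-(1-x)\log_2(1-x)$ the binary entropy, using $\alpha<1-p'$), which converges to the converse rate as $\epsilon\to0$. By Hall's theorem the saturation is possible once $|N_{G_n}(\mathcal I')|\ge|\mathcal I'|$ for every $\mathcal I'\subseteq\mathcal I_t$. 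Since the neighbourhood of a row depends only on $\kappa(I)$, one may assume $\mathcal I'$ is a union of ``layers'' $\{I:|\kappa(I)|=\ell\}$ with $\ell\le t$; for a single layer $N_{G_n}(\{I:|\kappa(I)|=\ell\})=\{J:|\lambda(J)|\le n-\ell\}$, and comparing exponential rates the required inequality becomes $(1-x)\log_2\tfrac{d-q}{p}\ge x\log_2\tfrac{d-p}{q}$ with $x=\ell/n$, i.e.\ exactly $\ell/n\le\alpha$ --- which holds strictly by the choice of $t$, and the strictness leaves room to absorb the $\mathrm{poly}(n)$ losses when reassembling layers (the smallest present layer already supplies enough neighbours, while $|\mathcal I'|$ is dominated by its top layer). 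Letting $\epsilon\to0$ yields $mrk^\infty(\cS)\ge d\,2^{-D^*}$, matching the converse. I expect the bulk of the technical work to sit in this last step: justifying the reduction to single layers (ruling out irregularly shaped Hall obstructions) and carrying the $\epsilon$-slack through the binomial estimates and the floor functions so that achievability and converse rates coincide in the limit.
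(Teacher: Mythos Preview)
\textbf{Part 1} is correct and coincides with the paper's Lemma~\ref{exceptional}: both pass through Theorem~\ref{thm:deg_bd} to get a full-rank matrix in some blow-up, tensor it with itself, and finish with Theorem~\ref{ncrk}.

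\textbf{Part 2} takes a genuinely different route. The paper argues by an explicit recursive block decomposition: Lemma~\ref{compression} handles two factors, Observation~\ref{claim:decompose matrix space} extends this to $N$ factors by tracking how the anti-diagonal blocks evolve, and Lemma~\ref{lemma:rank formula} produces a \emph{closed-form} expression for $mrk(\A(p,q,d)^{\otimes N+1})$ which is then analysed with Ash's binomial-tail bounds. Your reformulation --- $\A(p,q,d)^{\otimes n}$ is a zero-pattern space, so by K\"onig its maximal rank equals the maximum matching of the pattern graph $G_n$ --- is correct, and your vertex-cover upper bound is both valid and considerably shorter than the paper's.

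Your acknowledged gap in the achievability step is real: the neighbourhood of $I$ depends on the \emph{set} $\kappa(I)$, not only $|\kappa(I)|$, so the reduction of a general Hall obstruction $\mathcal I'$ to a union of layers is not justified as written, and upward-closing $\{\kappa(I):I\in\mathcal I'\}$ only makes the obstruction worse. But verifying Hall on $\mathcal I_t$ is harder than you need. A direct matching already achieves the rate: restrict $G_n$ to rows with $|\kappa(I)|=\ell$ and columns with $|\lambda(J)|=n-\ell$. Since $|\kappa(I)|+|\lambda(J)|=n$ forces $\lambda(J)=[n]\setminus\kappa(I)$, this subgraph is a disjoint union over $K\in\binom{[n]}{\ell}$ of complete bipartite graphs with parts of sizes $p^{n-\ell}(d-p)^\ell$ and $q^{\ell}(d-q)^{n-\ell}$, giving a matching of size $\binom{n}{\ell}\min\{p^{n-\ell}(d-p)^\ell,\;q^{\ell}(d-q)^{n-\ell}\}$; at $\ell=\lfloor\alpha n\rfloor$ (the crossover of the two terms) the exponential rate is $H(\alpha)+(1-\alpha)\log_2 p+\alpha\log_2(d-p)=\log_2 d-D(1-\alpha\|p')$, matching your converse.

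In fact these layer matchings are vertex-disjoint across $\ell$, and summing them reproduces exactly the paper's formula~\eqref{rankformula} (after the Pascal identity $\binom{N}{k}+\binom{N}{k-1}=\binom{N+1}{k}$). Your cover at the optimal integer threshold --- rows with $|\kappa(I)|\le\lfloor\alpha n\rfloor$ together with columns with $|\lambda(J)|\le n-\lfloor\alpha n\rfloor-1$ --- has size equal to the same sum, so your K\"onig approach even recovers the exact finite-$n$ value of Lemma~\ref{lemma:rank formula}, with substantially less bookkeeping than the paper's inductive block analysis.
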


\noindent\textit{Proof of theorem~\ref{asymptotic}. 1.} 
Recall that if $\cS$ does not have shrunk subspace, $ncrk(\cS)=d$. Then by theorem~\ref{ncrk}, $\frac{1}{2}d\leq mrk(\cS)\leq d$. Thus, we first prove that, given two matrix spaces without shrunk subspaces, there tensor product still has no shrunk subspaces. 
% We first show the 
% asymptotic maximal rank of matrix spaces without shrunk subspace is always full. 
% Clearly, if $\cS$ is non-singular, i.e.~$\cS$ contain full rank matrices. it is clear that $mrk^{\infty}(\cS)=d$. Now we focus on exceptional matrix spaces. Recall that $\cS$ is exceptional if it does not have full rank matrix, or shrunk subspaces. Furthermore, its non-commutative rank is full, which indicates that
% $\frac{1}{2}d\leq mrk(\cS)\leq d$ by theorem~\ref{ncrk}. 
\begin{lemma}\label{exceptional}
Given two matrix spaces $\cS_1\leq M(d_1,\C)$ and $\cS_2\leq M(d_2,\C)$ which have no shrunk subspace. Then $\cS_1\otimes\cS_2\leq  M(d_1d_2,\C)$ has no shrunk subspace.
\end{lemma}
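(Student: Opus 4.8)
The natural temptation is to build a ``no shrunk subspace'' certificate for $\cS_1\otimes\cS_2$ directly out of such certificates for $\cS_1$ and $\cS_2$, but shrunk subspaces do not combine well under tensor products, so my plan is instead to pass to the blow-up picture furnished by the first fundamental theorem (Theorem~\ref{thm:fft}) together with the nullcone description (Theorem~\ref{thm:nullcone}). The fact I would lean on is the one recorded right after Theorem~\ref{thm:fft}: a matrix space $\cS\leq M(d,\C)$ has no shrunk subspace \emph{if and only if} some blow-up $\cS^{[k]}=\cS\otimes M(k,\C)$ contains a full-rank matrix. The ``only if'' direction is immediate---if $U$ is a shrunk subspace of $\cS$ then $U\otimes\C^{k}$ is a shrunk subspace of $\cS^{[k]}$ (because $\cS^{[k]}(U\otimes\C^{k})\leq\cS(U)\otimes\C^{k}$), hence $\cS^{[k]}$ is singular; the ``if'' direction combines Theorem~\ref{thm:nullcone} and Theorem~\ref{thm:fft}, with finiteness of $k$ coming from Hilbert's basis theorem (I need only existence of a finite $k$, not the sharp bound $k\leq d-1$ of Theorem~\ref{thm:deg_bd}).

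Assuming then that $\cS_1\leq M(d_1,\C)$ and $\cS_2\leq M(d_2,\C)$ have no shrunk subspace, I would pick integers $k_1,k_2$ and full-rank matrices $T_1\in\cS_1\otimes M(k_1,\C)$ and $T_2\in\cS_2\otimes M(k_2,\C)$, and write $T_1=\sum_i B_i\otimes A_i$, $T_2=\sum_j C_j\otimes A_j'$ with $B_i\in\cS_1$, $C_j\in\cS_2$, $A_i\in M(k_1,\C)$, $A_j'\in M(k_2,\C)$. Then I form the Kronecker product $T_1\otimes T_2=\sum_{i,j}(B_i\otimes A_i)\otimes(C_j\otimes A_j')$, viewed inside $M(d_1,\C)\otimes M(k_1,\C)\otimes M(d_2,\C)\otimes M(k_2,\C)$, and apply the algebra isomorphism $\phi$ of this tensor-product algebra that swaps the second and third factors. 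Since $\phi$ is conjugation by a fixed permutation matrix it preserves rank, and $\phi\big((B_i\otimes A_i)\otimes(C_j\otimes A_j')\big)=(B_i\otimes C_j)\otimes(A_i\otimes A_j')$ with $B_i\otimes C_j\in\cS_1\otimes\cS_2$ and $A_i\otimes A_j'\in M(k_1,\C)\otimes M(k_2,\C)=M(k_1k_2,\C)$; hence $\phi(T_1\otimes T_2)\in(\cS_1\otimes\cS_2)^{[k_1k_2]}$. This matrix has full rank, since $\rank(\phi(T_1\otimes T_2))=\rank(T_1\otimes T_2)=\rank(T_1)\rank(T_2)=(d_1k_1)(d_2k_2)=(d_1d_2)(k_1k_2)$ (equivalently $\det(T_1\otimes T_2)=\det(T_1)^{d_2k_2}\det(T_2)^{d_1k_1}\neq0$). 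By the blow-up characterization from the first paragraph, $\cS_1\otimes\cS_2\leq M(d_1d_2,\C)$ therefore has no shrunk subspace, which is Lemma~\ref{exceptional}.

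I do not anticipate a genuine obstacle: the only real idea is to replace ``has no shrunk subspace'' by the equivalent ``some blow-up is non-singular'', a property that is stable under tensor products for the trivial reason that determinants of Kronecker products multiply. The points that deserve a line of care are that $\{B_i\otimes C_j\}$ indeed spans $\cS_1\otimes\cS_2$, and that the rearrangement $M(d_1,\C)\otimes M(k_1,\C)\otimes M(d_2,\C)\otimes M(k_2,\C)\cong(\cS_1\otimes\cS_2)\otimes M(k_1k_2,\C)$ is rank-preserving and sends the blown-up generators into $(\cS_1\otimes\cS_2)\otimes M(k_1k_2,\C)$; both are routine.
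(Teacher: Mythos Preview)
Your proof is correct and follows essentially the same route as the paper: both arguments replace ``no shrunk subspace'' by ``some blow-up contains a full-rank matrix'' (via Theorems~\ref{thm:nullcone} and~\ref{thm:fft}), produce full-rank $T_i\in\cS_i\otimes M(k_i,\C)$, and then observe that $T_1\otimes T_2$ lands in $(\cS_1\otimes\cS_2)\otimes M(k_1k_2,\C)$ after the tensor-factor swap and has full rank. One small remark: in your closing paragraph you flag that $\{B_i\otimes C_j\}$ spans $\cS_1\otimes\cS_2$, but spanning is irrelevant here---all you need is that each $B_i\otimes C_j$ \emph{belongs} to $\cS_1\otimes\cS_2$, which is immediate.
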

\begin{proof}
% Utilizing results in Ref.~\cite{Garg2015,Ivanyos2015a} and 
Recall that by theorem~\ref{thm:deg_bd}, for $i=1,2$, if $\cS_i\leq M(d_i,\C)$ has no shrunk 
subspace, there exists $k_i\leq d_i-1$, such that $mrk(\cS\otimes M(k_i,\C))=k_id_i$. Now consider the matrix space 
$\cS_1\otimes\cS_2\otimes M(k_1k_2,\C)$, which is isomorphic to 
$\cS_1\otimes M(k_1,\C)\otimes\cS_2\otimes M(k_2,\C)$. Since $\cS_1$ and $\cS_2$ have no 
shrunk subspace, we can find matrices $A_i\in\cS_i\otimes M(k_i,\C)$ such that ${\rm 
rank}(A_i)=k_id_i$ for $i=1,2$. Then we have 
$mrk(\cS_1\otimes\cS_2\otimes M(d_1d_2,\C))={\rm rank}(A_1\ox A_2)=k_1k_2d_1d_2$. By theorem~\ref{thm:nullcone} and theorem~\ref{thm:fft}, $\cS_1\ox \cS_2$ cannot have shrunk subspaces. 
\end{proof}

By this lemma, we derive that for $\cS\leq M(d,\C)$ satisfying $ncrk(\cS)=d$, 
$ncrk(\cS^{\otimes n})=d^n$ for $n\in\N$. By theorem~\ref{ncrk},
we have $\frac{1}{2}d^n\leq mrk(\cS^{\otimes n})\leq d^n$, and $mrk^\infty(\cS)=d$ 
follows.\qed
\medskip

To prove theorem~\ref{asymptotic}. 2, we first explicitly compute the maximal rank of the tensor product of two maximal-compression matrix spaces: 
\begin{lemma}\label{compression}
Given two maximal-compression matrix spaces $\A_1=\A(p_1,q_1,m_1,n_1)$ and $\A_2=\A(p_2,q_2,m_2,n_2)$ (satisfying $p_1+q_1<\min\{m_1,n_1\}$ and $p_2+q_2<\min\{m_2,n_2\}$), we have:
\begin{equation}\label{eq:rank of tensor product}
mrk(\A_1\otimes \A_2)=p_1p_2+\min\{(n_1-q_1)q_2,p_1(m_2-p_2)\}+\min\{(m_1-p_1)p_2,q_1(n_2-q_2)\}+q_1q_2.
\end{equation}
\end{lemma}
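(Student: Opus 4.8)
The plan is to exploit that each $\A_k=\A(p_k,q_k,m_k,n_k)$ is spanned by a \emph{subset of the elementary matrices} — namely those $\ket{i}\bra{j}$ with $i\le p_k-1$ or $j\le q_k-1$ — so that $\A_1\ox\A_2$ is again spanned by elementary matrices, and hence $mrk(\A_1\ox\A_2)$ is governed purely by a zero/nonzero pattern. Concretely, under the identification $M(m_1\times n_1,\C)\ox M(m_2\times n_2,\C)\cong M(m_1m_2\times n_1n_2,\C)$ that sends $\ket{i_1}\bra{j_1}\ox\ket{i_2}\bra{j_2}$ to the elementary matrix at row $(i_1,i_2)$ and column $(j_1,j_2)$, the space $\A_1\ox\A_2$ is exactly the set of matrices whose $((i_1,i_2),(j_1,j_2))$-entry is forced to vanish unless ($i_1\le p_1-1$ or $j_1\le q_1-1$) \emph{and} ($i_2\le p_2-1$ or $j_2\le q_2-1$). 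First I would record this as a $4\times4$ block array: partition the row index into groups $\mathrm{TT},\mathrm{TB},\mathrm{BT},\mathrm{BB}$ according to whether $i_1$ lies in the first $p_1$ coordinates ($\mathrm T$) or not ($\mathrm B$), and likewise for $i_2$, so the block sizes are $p_1p_2$, $p_1(m_2-p_2)$, $(m_1-p_1)p_2$, $(m_1-p_1)(m_2-p_2)$; partition the column index into $\mathrm{LL},\mathrm{LR},\mathrm{RL},\mathrm{RR}$ according to whether $j_1$ (resp.\ $j_2$) lies in the first $q_1$ (resp.\ $q_2$) coordinates ($\mathrm L$) or not ($\mathrm R$), with sizes $q_1q_2$, $q_1(n_2-q_2)$, $(n_1-q_1)q_2$, $(n_1-q_1)(n_2-q_2)$. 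Unwinding the forbidden condition, each of the $16$ blocks is either identically zero or entirely free, and the block indexed by $\epsilon_1\epsilon_2$ (rows) and $\eta_1\eta_2$ (columns) is zero precisely when $(\epsilon_1,\eta_1)=(\mathrm B,\mathrm R)$ or $(\epsilon_2,\eta_2)=(\mathrm B,\mathrm R)$; thus the $\mathrm{TT}$-rows are free against all four column groups, the $\mathrm{BB}$-rows only against $\mathrm{LL}$, the $\mathrm{TB}$-rows against $\mathrm{LL}$ and $\mathrm{RL}$, and the $\mathrm{BT}$-rows against $\mathrm{LL}$ and $\mathrm{LR}$.

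Since $\A_1\ox\A_2$ is spanned by elementary matrices, its maximal rank equals the \emph{term rank} of this pattern — the maximum number of free entries no two of which lie in a common row or column. Indeed, a $0/1$ matrix placed on such a transversal set is a partial permutation matrix lying in $\A_1\ox\A_2$ of exactly that rank, so $mrk(\A_1\ox\A_2)$ is at least this number; conversely, any set of rows and columns meeting every free entry covers the support of every matrix in $\A_1\ox\A_2$, so $mrk(\A_1\ox\A_2)$ is at most the size of the smallest such line cover. The proof then reduces to exhibiting a transversal and a line cover of equal size. For the transversal I would match the $p_1p_2$ $\mathrm{TT}$-rows to distinct $\mathrm{RR}$-columns, the $q_1q_2$ $\mathrm{LL}$-columns to distinct $\mathrm{BB}$-rows, and then take $\min\{p_1(m_2-p_2),(n_1-q_1)q_2\}$ entries in the $(\mathrm{TB},\mathrm{RL})$ block and $\min\{(m_1-p_1)p_2,q_1(n_2-q_2)\}$ entries in the $(\mathrm{BT},\mathrm{LR})$ block; the hypotheses $p_k+q_k<\min\{m_k,n_k\}$ enter precisely here, since they force $p_1p_2<(n_1-q_1)(n_2-q_2)$ and $q_1q_2<(m_1-p_1)(m_2-p_2)$, so the first two matchings are full. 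For the line cover I would take all $\mathrm{TT}$-rows, all $\mathrm{LL}$-columns, the smaller of $\{\mathrm{TB}\text{-rows},\mathrm{RL}\text{-columns}\}$, and the smaller of $\{\mathrm{BT}\text{-rows},\mathrm{LR}\text{-columns}\}$; a check against the nine free blocks shows it meets every free entry, and its cardinality is exactly the right-hand side of~(\ref{eq:rank of tensor product}).

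The conceptual content is slight: none of the earlier machinery on shrunk subspaces, non-commutative rank, or matrix semi-invariants is needed, and the only nontrivial ingredient — that the maximal rank of a space spanned by elementary matrices equals the term rank of its support pattern — is classical and is in any case proved inline by the two paragraphs above (transversal for the lower bound, line cover for the upper bound). The real work, and the only place demanding care, is the bookkeeping: getting the $4\times4$ zero-pattern exactly right, and then verifying that the explicit transversal and the explicit line cover have the same size while respecting the two $\min$ case-splits (for instance, when $p_1(m_2-p_2)\le(n_1-q_1)q_2$ one places the $\mathrm{TB}$-rows into the cover, otherwise the $\mathrm{RL}$-columns, and symmetrically for the other $\min$).
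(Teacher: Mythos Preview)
Your proof is correct and takes a genuinely different route from the paper's.

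The paper proceeds by rearranging rows and columns so that the symbolic matrix of $\A_1\otimes\A_2$ acquires a $2\times 2$ block form $\begin{pmatrix}P_0'&P_1'\\P_2'&\mathbf{0}\end{pmatrix}$ with $P_0',P_1',P_2'$ again of type $\A(\cdot,\cdot,\cdot,\cdot)$; it then uses a Schwartz--Zippel/genericity argument (forming a product of three nonzero polynomial minors in auxiliary variables $x,y,z$) to locate a single matrix $P'$ that simultaneously has $\rank(P')=mrk(\A')$, $\rank(P_1')=mrk(\A_1')$ and $\rank(P_2')=mrk(\A_2')$, and finally clears $P_0'$ by Gaussian elimination using the full-row-rank part of $P_1'$ and the full-column-rank part of $P_2'$. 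Your argument instead exploits the crucial observation that $\A_1\otimes\A_2$ is still spanned by elementary matrices, so that $mrk(\A_1\otimes\A_2)$ coincides with the term rank of a $0/\ast$ pattern; you then compute this term rank directly by exhibiting a matching transversal and a line cover of equal size in the $4\times 4$ block decomposition. This is more elementary and more transparent: it replaces the polynomial-identity step by pure combinatorics, and makes the two $\min$'s in the formula visible as bipartite matchings in disjoint blocks. The paper's $2\times 2$ recursion, on the other hand, is what it reuses in the subsequent induction (Observation~\ref{claim:decompose matrix space}); your term-rank viewpoint would serve equally well there, since tensor powers of elementary-spanned spaces remain elementary-spanned. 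Both proofs invoke the hypotheses $p_k+q_k<\min\{m_k,n_k\}$ at the same point and for the same reason, namely to guarantee $p_1p_2<(n_1-q_1)(n_2-q_2)$ and $q_1q_2<(m_1-p_1)(m_2-p_2)$.
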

\begin{proof}
It is convenient to view $\cA(p,q,m,n)$ as symbolic matrix $P$, of which the entries are filled with $0$ and $*$. More precisely, the $(i,j)$th entry of the symbolic matrix is $*$ if and only if there exist matrices in $\cA(p,q,m,n)$ such that the $(i,j)$th entry of which is non-zero. Due to the structure of $\cA(p,q,m,n)$, these $*$s can be assigned arbitrary complex numbers independently. Moreover, it is easy to see that, the symbolic matrix of $\cA(p_1,q_1,m_1,n_1)\otimes\cdots\otimes\cA(p_k,q_k,m_k,n_k)$ is $P_1\otimes \cdots\otimes P_k$, where $P_i$ is the symbolic matrix of $\cA(p_i,q_i,m_i,n_i)$ for $i=1,\dots, k$ and the multiplication rule of $\{0,*\}$ is $0\times 0=0$, $0\times *=*\times 0=0$, $*\times *=*$.

With this definition, we firstly write down the symbolic matrix $P$ of $\A_1\otimes\A_2$:
\begin{equation}\label{eq:matrix form}
P=
\begin{pmatrix}
A_{1,1} & \cdots & A_{1,q_1} & A_{1,q_1+1} & \cdots  & A_{1,n_1} \\
\vdots & P_0 & \vdots & \vdots & P_1  &  \vdots \\
A_{p_1,1} & \cdots & A_{p_1,q_1} & A_{p_1,q_1+1} & \cdots  & A_{p_1,n_1} \\
A_{p_1+1,1} & \cdots & A_{p_1+1,q_1} &  & & & \\

\vdots & P_2 & \vdots &  & \textbf{\huge 0}& \\
A_{m_1,1} & \cdots & A_{m_1,q_1} &  & & \\
\end{pmatrix},
\end{equation}
where $A_{i,j}$ is the symbolic matrix of $\cA_2$ for all possible $i$ and $j$, and the rest block of size $_{(m_1-p_1)m_2\times (n_1-q_1)n_2}$ are all zero. Denote the upper left block by $P_0$, the upper right block by $P_1$ and the lower left block by $P_2$. We will show 
that, after properly rearranging rows and columns, $P_0$, $P_1$ and $P_2$ become symbolic matrices of $\cA(p,q,m,n)$ with different parameters.  This can be done by the follows: In $P_1$, we move all columns with more than $p_1p_2$ $*$s to the left and move all rows with more than $(n_1-q_1)q_2$ $*$s to the top. In $P_2$, we move all rows with more than $q_1q_2$ $*$s to the top and move all columns with more than $(m_1-p_1)p_2$ $*$s to the top. These row and column rearrangements are equivalent to left and right multiplying with invertible matrices $Q_1\in M(m_1m_2,\C)$ and $Q_2\in M(n_1n_2,\C)$, respectively. More precisely, let $P'=\begin{pmatrix}P_0'&P_1'\\P_2'&{\bf 0}\end{pmatrix}$ be the symbolic matrix of $\cA'=Q_1(\cA_1\ox\A_2)Q_2$. Then $P_1'$ is the symbolic matrix of $\cA_1'=\A(p_1p_2,(n_1-q_1)q_2,p_1m_2,(n_1-q_1)n_2)$ and $P_2'$ is the symbolic matrix of $\cA_2'=\A((m_1-p_1)p_2,q_1q_2,(m_1-p_1)m_2,q_1n_2)$. Moreover, it is easy to verify that $P_0'$ is the symbolic matrix of $\cA_0'=\cA(p_1p_2,q_1q_2,p_1m_2,q_1n_2)$.

Then we prove that $mrk(\A_1\otimes\A_2)=mrk(\A_1')+mrk(\A_2')$.  
% where
% \begin{equation}\label{eq:rank of A1'A2'}
% \begin{split}
% mrk(\A_1')&=\min\{p_1p_2,(n_1-q_1)(n_2-q_2)\}+\min\{(n_1-q_1)q_2,p_1(m_2-p_2)\};\\
% mrk(\A_2')&=\min\{(m_1-p_1)p_2,q_1(n_2-q_2)\}+\min\{q_1q_2,(m_1-p_1)(m_2-p_2)\}.
% \end{split}
% \end{equation}
Firstly, we show that there exists 
$P'=\begin{pmatrix}P_0'&P_1'\\P_2'&{\bf 0}\end{pmatrix}\in\cA'$ satisfying 
$rank(P')=mrk(\cA')=mrk(\cA_1\ox\cA_2)$, 
$rank(P_1')=mrk(\cA_1')$ and $rank(P_2')=mrk(\cA_2')$. Notice that there always exists
$P''\in \cA'$ with $rank(P'')=mrk(\cA')$, 
$R'=\begin{pmatrix}R_0'&R_1'\\R_2'&{\bf 0}\end{pmatrix}\in \cA'$ with 
$rank(R_1')=mrk(\cA_1')$ and 
$R''=\begin{pmatrix}R_0''&R_1''\\R_2''&{\bf 0}\end{pmatrix}\in \cA'$ with 
$rank(R_2'')=mrk(\cA_2')$. We claim that there exist $\alpha, \beta, \gamma\in 
\C$, such that $P'=\alpha P''+\beta R'+\gamma R''$ is what we need. To see this, 
consider the matrix $xP''+yR'+zR''$, where $x,y,z$ are variables. As $rank(P'')=mrk(\cA')=r$, there exists an $r\times r$ submatrix of $P''$ with rank $r$. 
Let $f_1$ be the determinant of the corresponding submatrix in $xP''+yR'+zR''$. $f_1$ is a 
nonzero homogeneous polynomial in $\C[x, y, z]$ of degree $r$. Similarly, let 
$s=mrk(\cA_1')$ and $t=mrk(\cA_2')$. Then there exists an $s\times s$ (resp. 
$t\times t$) submatrix of $xP''+yR'+zR''$ in the upper right (resp. lower left) part, 
such that, if we denote its determinant by $f_2$ (resp. $f_3$), then $f_2$ (resp. 
$f_3$) is a nonzero 
homogeneous polynomial in $\C[x, y, z]$ of degree $s$ (resp. $t$). Since
$f=f_1f_2f_3$ is a nonzero polynomial in $\C[x,y,z]$, there exists $(\alpha, 
\beta, \gamma)\in \C^3$ such that $f(\alpha, \beta, \gamma)\neq 0$. Such $(\alpha, 
\beta, \gamma)$ then translates to our desired conditions for $\alpha P''+\beta 
R'+\gamma R''$.

Take such $P'=\begin{pmatrix}P_0'&P_1'\\P_2'&{\bf 0}\end{pmatrix}\in\cA'$.
Since $p_1+q_1<\min\{m_1,n_1\}$ and $p_2+q_2<\min\{m_2,n_2\}$, we have $p_1p_2<(n_1-q_1)(n_2-q_2)$ and $q_1q_2<(m_1-p_1)(m_2-p_2)$. 
Then submatrix in the upper right part of $P_1'$ has full row rank $p_1p_2$,
and the lower left part of $P_2'$ has full column rank $q_1q_2$. For any 
$P_0'\in\cA(p_1p_2,q_1q_2,p_1m_2,q_1n_2)$, we can use the upper right part of 
$P_1'$ to clear the first $p_1p_2$ rows of $P_0'$ without 
changing the rank of $P'$. Similarly, we can use the lower left part of $P_2'$ to clear the first 
$q_1q_2$ columns of $P_0'$ without changing the rank of $P'$. After these row and column 
operations, $P'$ is transformed to $\begin{pmatrix}{\bf 0}&P_1'\\P_2'&{\bf 0}\end{pmatrix}$. 
This then shows that
\begin{equation}
\begin{split}
&mrk(\A_1\otimes \A_2)=rank(P')=rank(P_1')+rank(P_2')\\
=&p_1p_2+\min\{(n_1-q_1)q_2,p_1(m_2-p_2)\}+\min\{(m_1-p_1)p_2,q_1(n_2-q_2)\}+q_1q_2.
\end{split}
\end{equation}
\end{proof}

Let us examine an example to illustrate the above procedure. Consider 
$\cA_1=\cA_2=\cA(1,1,3,3)=\spa\{\proj{0},\ket{0}\bra{1},\ket{0}\bra{2},\ket{1}\bra{0},\ket{2}\bra{0}\}$,
 where $\{\ket{0},\ket{1},\ket{2}\}$ is the computational basis of $\Hm_3$, and 
the linear span is taken over $\C$. It is easy to see that 
$mrk(\cA(1,1,3,3))=2$. We show how to use lemma~\ref{compression} to compute the maximal 
rank of $\cA(1,1,3,3)^{\ox 2}$. Firstly, we exchange rows and columns to obtain an 
equivalent matrix space $\cA'$ of $\cA(1,1,3,3)^{\ox 2}$. This can be done by choosing 
$Q=\proj{00}+\proj{01}+\proj{02}+\proj{10}+\ket{11}\bra{20}+\proj{12}+\ket{20}\bra{11}+\proj{21}+\proj{22}.$
Then $\cA'=Q\cA(1,1,3,3)^{\ox 2}Q$ follows. More specifically, let $P$ be 
the symbolic matrix in $\cA(1,1,3,3)^{\ox 2}$. Multiplying $Q$ with $P$ form the left 
exchanges the $5$th row with the $7$th row of $P$; 
multiplying $Q$ with $P$ from the right exchanges the $5$th column with the $7$th column. 
So letting $P'=QPQ$, we have
$$P=\begin{pmatrix}
*&*&*&*&*&*&*&*&*\\
*&0&0&*&0&0&*&0&0\\
*&0&0&*&0&0&*&0&0\\
*&*&*&0&0&0&0&0&0\\
*&0&0&0&0&0&0&0&0\\
*&0&0&0&0&0&0&0&0\\
*&*&*&0&0&0&0&0&0\\
*&0&0&0&0&0&0&0&0\\
*&0&0&0&0&0&0&0&0\\
\end{pmatrix}~\Rightarrow
P'=\begin{pmatrix}
*&*&*&*&*&*&*&*&*\\
*&0&0&*&*&0&0&0&0\\
*&0&0&*&*&0&0&0&0\\
*&*&*&0&0&0&0&0&0\\
*&*&*&0&0&0&0&0&0\\
*&0&0&0&0&0&0&0&0\\
*&0&0&0&0&0&0&0&0\\
*&0&0&0&0&0&0&0&0\\
*&0&0&0&0&0&0&0&0\\
\end{pmatrix}.$$
Denote $P_0$ to be its submatrix of size $3\times 3$ in the upper left corner, $P_1$ to be its submatrix of size $3\times 6$ in the upper right corner, $P_2$ to be its submatrix of size $6\times 3$ in the lower left corner. It is easy to see that $P_0$ is the symbolic matrix of  
$\cA(1,1,3,3)$, $P_1$ is the symbolic matrix of $\cA(1,2,3,6)$, and $P_2$ is the symbolic matrix of $\cA(2,1,6,3)$. Applying Lemma~\ref{compression}, $mrk(\cA(1,1,3,3)^{\otimes 
2})=1+2+2+1=6$, which can be verified easily by looking at the form of $P'$.

This example also shows that the matrix spaces formed by the anti-diagonal blocks of $P'$, namely 
$\cA(1,2,3,6)$ and $\cA(2,1,6,3)$, may not form maximal-compression matrix spaces, as 
$1+2=\min\{3, 6\}$. (Recall that for $\cA(p, q, m, n)$ to be a maximal-compression matrix space 
we need $p+q<\min\{m, n\}$.) Thus lemma~\ref{compression} cannot be directly 
applied 
to capture a general formula when taking tensor product multiple times. 
Fortunately, for the case that $m=n=d$, we 
can evaluate the maximal rank of the $N$th tensor powers of a 
maximal-compression matrix space by the following: 
\begin{lemma}\label{lemma:rank formula}
Given a maximal-compression matrix space $\A(p,q,d)$ and an integer $N\geq 0$, the maximal rank of $\A(p,q,d)^{\otimes N+1}$ equals
\begin{equation}\label{rankformula}
\sum_{k=0}^N \binom{N}{k} \Big(\min\{p^{N-k+1}(d-p)^{k}, 
q^{k}(d-q)^{N-k+1}\}+\min\{q^{k+1}(d-q)^{N-k},p^{N-k}(d-p)^{k+1}\}\Big).
\end{equation}
\end{lemma}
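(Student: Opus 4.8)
The plan is to compute $mrk(\A(p,q,d)^{\otimes N+1})$ directly from the combinatorial pattern of its symbolic matrix, in the spirit of the proof of Lemma~\ref{compression}. Throughout put $M=N+1$. As recalled there, after fixing the obvious basis $\A(p,q,d)^{\otimes M}$ is the space of all $d^{M}\times d^{M}$ matrices whose support is contained in the set of positions $((i_1,\dots,i_M),(j_1,\dots,j_M))$ for which $i_\ell<p$ or $j_\ell<q$ holds at every coordinate $\ell$, and any such position may be filled freely and independently. I will group the $d^{M}$ rows into blocks indexed by the ``big set'' $T=\{\ell:i_\ell\ge p\}\subseteq\{1,\dots,M\}$, so block $T$ has $(d-p)^{|T|}p^{M-|T|}$ rows, and the columns into blocks indexed by the ``small set'' $U=\{\ell:j_\ell<q\}$, so block $U$ has $q^{|U|}(d-q)^{M-|U|}$ columns. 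The elementary observation at the heart of everything is that block $(T,U)$ is filled entirely with free parameters when $T\subseteq U$, and is identically zero otherwise.

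For the lower bound I will note that every elementary matrix supported on a single ``diagonal'' block $(T,T)$ already lies in $\A(p,q,d)^{\otimes M}$: at each coordinate $\ell$ the defining condition holds, since $j_\ell<q$ when $\ell\in T$ and $i_\ell<p$ when $\ell\notin T$. Hence one may choose the block-diagonal element of $\A(p,q,d)^{\otimes M}$ that places a generic matrix of rank $\min\{(d-p)^{|T|}p^{M-|T|},\,q^{|T|}(d-q)^{M-|T|}\}$ in each block $(T,T)$ and zero elsewhere; its rank is $\sum_{k=0}^{M}\binom{M}{k}\min\{(d-p)^{k}p^{M-k},\,q^{k}(d-q)^{M-k}\}$, which is therefore a lower bound for $mrk(\A(p,q,d)^{\otimes M})$.

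For the upper bound I will exhibit a set of rows and columns meeting every starred position whose size is the same quantity; the rank of any element is then at most this, because a matrix whose nonzero entries lie in $a$ rows and $b$ columns has rank at most $a+b$. Call $T$ \emph{heavy} if $(d-p)^{|T|}p^{M-|T|}>q^{|T|}(d-q)^{M-|T|}$. Taking logarithms and using $p+q<d$ (so $d-p>q$ and $d-q>p$, making both $\ln(d-p)-\ln q$ and $\ln(d-q)-\ln p$ positive), heaviness of $T$ is equivalent to $\frac{|T|}{M-|T|}$ exceeding a fixed positive constant, so the family of heavy sets is upward closed in the Boolean lattice. Now select all rows of every non-heavy row block together with all columns of every heavy column block: a starred position sits in some block $(T,U)$ with $T\subseteq U$, and if $T$ is heavy then so is $U$ by upward closedness, so the position is covered in either case. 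The size of this selection is $\sum_{T\text{ non-heavy}}(d-p)^{|T|}p^{M-|T|}+\sum_{U\text{ heavy}}q^{|U|}(d-q)^{M-|U|}=\sum_{k=0}^{M}\binom{M}{k}\min\{(d-p)^{k}p^{M-k},\,q^{k}(d-q)^{M-k}\}$, since a non-heavy block contributes its row count ($=$ the minimum of the two) and a heavy block its column count ($=$ the minimum).

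Matching the two bounds gives $mrk(\A(p,q,d)^{\otimes M})=\sum_{k=0}^{M}\binom{M}{k}\min\{(d-p)^{k}p^{M-k},\,q^{k}(d-q)^{M-k}\}$; finally, writing $\binom{M}{k}=\binom{M-1}{k}+\binom{M-1}{k-1}$, splitting the sum, and shifting the index $k\mapsto k-1$ in the second half converts this into the asserted expression~(\ref{rankformula}) in terms of $N=M-1$. The main obstacle is the upper bound: the block pattern contains many ``extra'' complete blocks $(T,U)$ with $T\subsetneq U$, so a priori a high-rank element (or a large matching) might exploit them to beat the block-diagonal value — it is precisely the upward closedness of the heavy sets that makes the greedy covering above optimal and rules this out.
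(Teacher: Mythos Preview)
Your proof is correct and takes a genuinely different, more direct route than the paper's. The paper argues by induction on $N$: it shows that after suitable row and column permutations the symbolic matrix of $\A(p,q,d)^{\otimes N}$ becomes upper-anti-block-diagonal, with the anti-diagonal blocks again of the form $\A(\cdot,\cdot,\cdot,\cdot)$; it then computes the maximal rank of each such block (Observation~\ref{obs: rank of anti-block}) and, as in Lemma~\ref{compression}, uses full-rank sub-blocks to eliminate the off-anti-diagonal pieces, concluding that the total maximal rank is the sum of the anti-diagonal ones. Your argument bypasses this recursion entirely: indexing row and column blocks by subsets $T,U\subseteq\{1,\dots,M\}$ and noting that block $(T,U)$ is free exactly when $T\subseteq U$, you give a block-diagonal lower bound and a matching row/column cover for the upper bound. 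The pivotal observation---that ``heaviness'' of a subset depends only on its cardinality and hence is upward closed---is what makes the greedy cover tight, and it cleanly replaces the paper's elimination bookkeeping. Your approach is shorter and more transparent, and it yields the compact formula $\sum_{k=0}^{M}\binom{M}{k}\min\{(d-p)^{k}p^{M-k},q^{k}(d-q)^{M-k}\}$ before the Pascal split; the paper's inductive decomposition, on the other hand, identifies the specific sub-blocks $\cA_l$ to which the rank localizes, information that is used (but could be recovered from your description) in the subsequent proof of Theorem~\ref{asymptotic}.2.
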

\begin{proof}
% By lemma~\ref{compression}, we know that
% \begin{equation*}
% \rk(\A(p,q,d)^{\otimes 2})=\rk(\A(p^2,(d-q)q,pd,(d-q)d))+\rk(\A((d-p)p,q^2,(d-p)d,qd)).
% \end{equation*} 
% Also recall that, by appropriate row and column rearrangements, we can assume that 
% matrices in $\A(p, q, d)^{\otimes 2}$ are of the form $\begin{pmatrix}
% P_0 & P_{1}\\
% P_{2} & {\bf 0}\\
% \end{pmatrix}$ where 
% $P_1\in \A(p^2,(d-q)q,pd,(n-q)d)$ and 
%  $P_2\in \A((d-p)p,q^2,(d-p)d,qd)$. 
% We now generalize the above form for tensor squares to arbitrary tensor powers. 
% The idea is that, by appropriate row and column rearrangements, 
% we can put matrices in $\A(p,q,d)^{\ox N}$ to be of upper-anti-block-diagonal form. Then for each block (the size will be exhibited explicitly), it will form $\cA(p, q, m, n)$ with different parameters.
The proof idea is as follows: First, we use induction to show that the symbolic matrix of $\A(p,q,d)^{\ox N}$ is in upper-anti-block-diagonal 
form, after appropriate row and column rearrangements. Notice that all these block matrices either equals zero matrix, or form $\cA(p,q,m,n)$s with different parameters. Second, we explicitly compute the maximal rank of those anti-diagonal $\cA(p,q,m,n)$s. Combining these two observations, we use the similar techniques used in lemma~\ref{compression} to prove this lemma.

The structure of matrices in $\A(p,q,d)^{\ox N}$ can be shown by the following:
\begin{observation}\label{claim:decompose matrix space}
For $N\geq 1$, there exist invertible matrices $Q_1\in M(d^{N},\C)$ and $Q_2\in M(d^N,\C)$, such 
that the symbolic matrix $P$ of $\cA'=Q_1\cA(p,q,d)^{\ox N}Q_2$ is of upper-anti-block-diagonal form: 
\begin{equation}\label{eq:anti-diagonal matrix form}
P=\begin{pmatrix}
P_{0,2^{N-1}-1} & \cdots & P_{0,l} & \cdots &P_0\\
\vdots & \reflectbox{$\ddots$} & \reflectbox{$\ddots$} &\reflectbox{$\ddots$}& {\bf 0}   \\
P_{l,2^{N-1}-1} & \reflectbox{$\ddots$} &P_l& \reflectbox{$\ddots$} & {\bf 0}\\
\vdots &\reflectbox{$\ddots$}& \reflectbox{$\ddots$} & \reflectbox{$\ddots$} & 
\vdots\\
P_{2^{N-1}-1}& {\bf 0} & {\bf 0} & \cdots  & {\bf 0}
\end{pmatrix}.
\end{equation}
\begin{enumerate}
\item For the anti-diagonal block matrices, label them as $P_0,\dots,P_{2^{N-1}-1}$. For an integer $l=0,\dots,2^{N-1}-1$, let $h(l)$ be the hamming weight of $l$, i.e.~the number of $1$'s in the binary expansion of $l$. Then $P_l$ is the symbolic matrix of 
$$\cA_l=\cA(p^{N-h(l)}(d-p)^{h(l)},q^{h(l)+1}(d-q)^{N-h(l)-1},p^{N-h(l)-1}(d-p)^{h(l)}d,q^{h(l)}(d-q)^{N-h(l)-1}d).$$

\item For the upper-left block matrices, label them as $P_{u,v}$ for $u,v\in\{0,2^{N-1}-1\}$, where $u$ is the label of anti-diagonal block matrix $P_{u}$ on the right of $P_{u,v}$ and $v$ is the label of anti-diagonal block matrix $P_{v}$ below $P_{u,v}$. If $h(u) \geq  h(v)$, $P_{u,v}={\bf 0}$. Otherwise $P_{u,v}$ is the symbolic matrix of 
$$\cA_{u,v}=\cA(p^{N-h(u)}(d-p)^{h(u)},q^{h(v)+1}(d-q)^{N-h(v)-1},p^{N-h(u)-1}(d-p)^{h(u)}d,q^{h(v)}(d-q)^{N-h(v)-1}d).$$
\end{enumerate}
\end{observation}
\begin{proof}[Proof of Observation~\ref{claim:decompose matrix space}]
We show the observation holds by induction on $N$. 
It holds for $N=1$ trivially. Assume for 
$\A(p,q,d)^{\ox N}$, observation~\ref{claim:decompose matrix space} holds. We consider 
$\A(p,q,d)^{\ox N+1}=\A(p,q,d)^{\ox N}\ox \cA(p,q,d)$. Firstly, let 
$\cA_1=(Q_1\otimes I_d)\A(p,q,d)^{\ox N+1}(Q_2\otimes I_d)=\cA'\otimes 
\A(p,q,d)$, 
where $Q_1$ and $Q_2$ are the matrices from the inductive hypothesis. 
Let $P_l$ be the symbolic matrix of the $l$th anti-diagonal block.
It is sufficient to examine $P_l\ox P$, where $P$ is the symbolic matrix of $\cA(p,q,d)$. 
By the proof of lemma~\ref{compression}, there exist two 
invertible matrices $Q_{l}^1$ and $Q_{l}^2$, such that 
% the symbolic matrix $P$ of $Q_{l}^1\cA_l\ox\cA(p,q,d)Q_{l}^2$ can be written as 
$Q_{l}^1P_l\ox PQ_{l}^2=\begin{pmatrix}P_0&P_1\\P_2&{\bf 0}\end{pmatrix}$, where $P_1$ is the symbolic matrix of 
$\cA_{l0}=\A(p^{N-h(l)+1}(d-p)^{h(l)},q^{h(l)+1}(d-q)^{N-h(l)},p^{N-h(l)}(d-p)^{h(l)}d,q^{h(l)}(d-q)^{N-h(l)}d)$,
and $P_2$ is the symbolic matrix of
$\cA_{l1}=\A(p^{N-h(l)}(d-p)^{h(l)+1},q^{h(l)+2}(d-q)^{N-h(l)-1},p^{N-h(l)-1}(d-p)^{h(l)+1}d,q^{h(l)+1}(d-q)^{N-h(l)-1}d)$,
where $l0$ and $l1$ denote the $N$-bit strings in which the first $(N-1)$-bit 
strings equal the binary expansion of $l$. Moreover, we observe that $\cA_{l0}$ and $\cA_{l1}$ remain as the anti-diagonal blocks in $\cA_1'=\overline{Q_{l}^1}\cA_1\overline{Q_{l}^2}$ ($\overline{Q_{l}^i}$ is the enlarged matrix of $Q_{l}^i$ for $i=1,2$). Then the first fact in observation~\ref{claim:decompose matrix space} follows since $h(l0)=h(l)$ and $h(l1)=h(l)+1$. 

For the second fact, for given $u,v\in\{0,\dots,2^{N-1}-1\}$, $u\neq v$ and $h(u)<h(v)$, we examine $\cA_{u,v}\otimes \cA(p,q,d)$. By induction hypothesis, 

$$\cA_{u,v}=\cA(p^{N-h(u)}(d-p)^{h(u)},q^{h(v)+1}(d-q)^{N-h(v)-1},p^{N-h(u)-1}(d-p)^{h(u)}d,q^{h(v)}(d-q)^{N-h(v)-1}d).$$

Notice that $\cA_{u,v}$ has the same ``full'' rows as that of $\cA_{u}$ and has the same ``full'' columns as that of $\cA_{v}$. Here a ``full'' row (resp. 
column) means the corresponding row (column) of the symbolic matrix contains $*$ only. Denote the row rearrangements of $\cA_u\otimes 
\cA(p,q,d)$ by $R_u$ and the column rearrangements of $\cA_v\otimes \cA(p,q,d)$ by 
$C_v$. These two operations will also rearrange the 
rows and columns of the symbolic matrix of  
$\cA_{u,v}\otimes \cA(p,q,d)$, respectively. For simplicity, we assume 
$\cA_{u}=\cA(p_1,q_1,m_1,n_1)$ and $\cA_{v}=(p_2,q_2,m_2,n_2)$, then 
$\cA_{u,v}=\cA(p_1,q_2,m_1,n_2)$. Let $P_{u,v}$ be the symbolic matrix of $\cA_{u,v}\otimes \cA(p,q,d)$, 
$P_{u,v}$ has the block matrix form
\begin{equation}
P_{u,v}=
\begin{pmatrix}
A_{1,1} & \cdots & A_{1,q_2} & A_{1,q_2+1} & \cdots  & A_{1,n_2} \\
\vdots & P_0 & \vdots & \vdots & P_1  &  \vdots \\
A_{p_1,1} & \cdots & A_{p_1,q_2} & A_{p_1,q_2+1} & \cdots  & A_{p_1,n_2} \\
A_{p_1+1,1} & \cdots & A_{p_1+1,q_2} &  & & & \\

\vdots & P_2 & \vdots &  & \textbf{\huge{0}}& \\
A_{m_1,1} & \cdots & A_{m_1,q_2} &  & & \\
\end{pmatrix},
\end{equation}
where $A_{i,j}$ are symbolic matrix of $\cA(p,q,d)$ for all possible $(i,j)$. Then $R_u$ moves all rows with more than $(n_2-q_2)q$ $*$s in $P_1$ and all rows with more than $q_2q$ $*$s in $P_2$ to the top of them. To see this, notice that all rows with more than $(n_2-q_2)q$ $*$s in $P_1$ is determined by those ``full'' rows in $\cA_{u,v}$, which are exact those rows in $\cA_{u}$; all those rows with more than $q_2q$ $*$s in $P_2$ is determined by those ``full'' rows in $\cA(p,q,d)$, and are also those rows in $\cA_{u}$. Similarly, $C_v$ moves all columns with more than $p_1p$ $*$s in $P_1$ and all columns with more than $(m_1-p_1)p$ $*$s in $P_2$ to the left. Let $P_{u,v}'=\begin{pmatrix}P'_0&P'_1\\P'_2&{\bf 0}\end{pmatrix}$ denotes the symbolic matrix after the row and column rearrangement of $P_{u,v}$. We can then conclude that $P'_0$ is the symbolic matrix of 
$\cA_{0}^*=\cA(p^{N-h(u)+1}(d-p)^{h(u)},q^{h(v)+2}(d-q)^{N-h(v)-1},p^{N-h(u)}(d-p)^{h(u)}d,q^{h(v)+1}(d-q)^{N-h(v)-1}d),$
$P'_1$ is the symbolic matrix of 
$\cA_{1}^*=\cA(p^{N-h(u)+1}(d-p)^{h(u)},q^{h(v)+1}(d-q)^{N-h(v)},p^{N-h(u)}(d-p)^{h(u)}d,q^{h(v)}(d-q)^{N-h(v)}d)$
$P'_2$ is the symbolic matrix of 
$\cA_{2}^*=\cA(p^{N-h(u)}(d-p)^{h(u)+1},q^{h(v)+2}(d-q)^{N-h(v)-1},p^{N-h(u)-1}(d-p)^{h(u)+1}d,q^{h(v)+1}(d-q)^{N-h(v)-1}d).$
Also, $\cA_{0}^*$ will be relabeled as $\cA_{u0, v1}$, $\cA_{1}^*$ will be relabeled as $\cA_{u0,v0}$ and $\cA_{2}^*$ will be relabeled as $\cA_{u1,v1}$ according to their corresponding anti-block terms.

To see the second statement in observation~\ref{claim:decompose matrix space} holds for 
$N+1$, we only need to show that if $h(u)\geq h(v)$, $P_{u,v}={\bf 0}$ ($\cA_{u,v}=\{\bf 0\}$). 
For $u,v\in\{0,\dots,2^{N}-1\}$ and $u\neq v$, let $u=u'b$ and $v=v'c$, where 
$u',v'\in\{0,\dots,2^{N}-1\}$ equal the first $(N-1)$-bit strings of the binary 
expansion of $u$ and $v$, and $b,c\in\{0,1\}$ are variables. If $h(u)\geq h(v)$ derives that either $h(u'0)\geq h(v'0)$, 
$h(u'0)\geq h(v'1)=h(v')+1$ or $h(u'1)\geq h(v'1)$, it will imply that $h(u')\geq 
h(v')$. By the induction hypothesis, $P_{u',v'}={\bf 0}$ and $P_{u,v}={\bf 0}$ holds clearly. Otherwise, if $h(u)\geq h(v)$ derives
$h(u'1)\geq h(v'0)$ and $P_{u',v'}$ is nonzero, we can also observe that $P_{u'1,v'0}$ equals ${\bf 0}$, as it is the lower right part of 
$P^*$. This concludes the proof.
\end{proof}

Now we focus on those anti-diagonal forms, and compute the maximal rank of $\cA_l$ for $l=0,\dots,2^{N-1}-1$:
\begin{observation}\label{obs: rank of anti-block}
Let $h(l)=k$, the maximal rank of $\A_l=\cA(p^{N-k+1}(d-p)^{k},q^{k+1}(d-q)^{N-k},p^{N-k}(d-p)^kd,q^{k}(d-q)^{N-k}d)$ equals
\begin{equation}\label{eq: rank of A p,q,m,n}
\begin{split}
\min\{p^{N-k+1}(d-p)^{k}, q^{k}(d-q)^{N-k+1}\}+\min\{q^{k+1}(d-q)^{N-k},p^{N-k}(d-p)^{k+1}\}.
\end{split}
\end{equation}
\end{observation}
\begin{proof}[Proof of observation~\ref{obs: rank of anti-block}]
Notice that the rank of $\cA(p^{N-k+1}(d-p)^{k},q^{k+1}(d-q)^{N-k},p^{N-k}(d-p)^kd,q^{k}(d-q)^{N-k}d)$ equals
$$\min\{p^{N-k+1}(d-p)^{k}+q^{k+1}(d-q)^{N-k},p^{N-k}(d-p)^kd,q^{k}(d-q)^{N-k}d\}.$$

If $p^{N-k}(d-p)^k\leq q^{k}(d-q)^{N-k}$, we only need to compare 
$p^{N-k+1}(d-p)^{k}+q^{k+1}(d-q)^{N-k}$ and $p^{N-k}(d-p)^kd$. Note that
$$p^{N-k}(d-p)^kd-(p^{N-k+1}(d-p)^{k}+q^{k+1}(d-q)^{N-k})
=p^{N-k}(d-p)^{k+1}-q^{k+1}(d-q)^{N-k}.$$
We further distinguish two cases. When $p^{N-k}(d-p)^{k+1}\geq 
q^{k+1}(d-q)^{N-k}$, we take $p^{N-k+1}(d-p)^{k}+q^{k+1}(d-q)^{N-k}$. When 
$p^{N-k}(d-p)^{k+1}< q^{k+1}(d-q)^{N-k}$, we take 
$p^{N-k}(d-p)^kd=p^{N-k+1}(d-p)^{k}+p^{N-k}(d-p)^{k+1}$. These two cases then can 
be unified in the following equation
\begin{equation}\label{eq:rank 1}
\begin{split}
&\min\{p^{N-k+1}(d-p)^{k}+q^{k+1}(d-q)^{N-k},p^{N-k}(d-p)^kd,q^{k}(d-q)^{N-k}d\}\\
=&p^{N-k+1}(d-p)^{k}+\min\{q^{k+1}(d-q)^{N-k},p^{N-k}(d-p)^{k+1}\}.
\end{split}
\end{equation}

If $p^{N-k}(d-p)^k> q^{k}(d-q)^{N-k}$, similarly, we obtain
\begin{equation}\label{eq:rank 2}
\begin{split}
&\min\{p^{N-k+1}(d-p)^{k}+q^{k+1}(d-q)^{N-k},p^{N-k}(d-p)^kd,q^{k}(d-q)^{N-k}d\}\\
=&\min\{p^{N-k+1}(d-p)^{k}, q^{k}(d-q)^{N-k+1}\}+q^{k+1}(d-q)^{N-k}.
\end{split}
\end{equation}

Notice that, $p^{N-k}(d-p)^k\leq q^{k}(d-q)^{N-k}$ implies $p^{N-k+1}(d-p)^{k}\leq 
pq^{k}(d-q)^{N-k}<q^{k}(d-q)^{N-k+1}$, where the second inequality uses $p+q<d$, 
since $\cA(p,q,d)$ is maximal-compression. Similarly, 
$p^{N-k}(d-p)^k>q^{k}(d-q)^{N-k}$ implies 
$q^{k+1}(d-q)^{N-k}<qp^{N-k}(d-p)^k<p^{N-k}(d-p)^{k+1}$. This observation 
allows us to combine 
equation~(\ref{eq:rank 1}) and equation~(\ref{eq:rank 2}) to obtain a unified 
equation for the maximal rank of 
$\A(p^{N-k+1}(d-p)^{k},q^{k+1}(d-q)^{N-k},p^{N-k}(d-p)^kd,q^{k}(d-q)^{N-k}d)$ 
as %equals to
\begin{equation}\label{eq: rank of A p,q,m,n}
\begin{split}
\min\{p^{N-k+1}(d-p)^{k}, q^{k}(d-q)^{N-k+1}\}+\min\{q^{k+1}(d-q)^{N-k},p^{N-k}(d-p)^{k+1}\}.
\end{split}
\end{equation}
\end{proof}

Finally, we combine observations~\ref{claim:decompose matrix space} and~\ref{obs: rank of anti-block} to prove that $mrk(\cA(p,q,d)^{\ox N+1})=\sum_{l=0}^{2^{N}-1} mrk(\cA_l)$. Then equation~\ref{rankformula} follows. Let $\cA'$ be the matrix space which is obtained after applying row and column rearrangements described in observation~\ref{claim:decompose matrix space} to $\cA(p,q,d)^{\ox N}$. Choose a matrix $P\in\cA'$ of the form as shown in 
equation~(\ref{eq:anti-diagonal matrix form}) with $rank(P)=mrk(\cA')$, we can assume
$rank(P_l)=mrk(\cA_l)$ for $0\leq l\leq 2^N-1$, using an analogous 
argument as in lemma~\ref{compression}. Let $\lambda={\rm log}_2\frac{d-p}{q}$, 
$\mu={\rm log}_2\frac{d-q}{p}$, 
$\alpha=\frac{\mu}{\lambda+\mu}$. Notice that  
$$k\leq \lfloor \alpha N+\alpha-1\rfloor\Leftrightarrow p^{N-k}(d-p)^{k+1}\leq q^{k+1}(d-q)^{N-k}$$ 
and 
$$k\leq \lfloor \alpha N+\alpha\rfloor\Leftrightarrow p^{N-k+1}(d-p)^k\leq q^k(d-q)^{N-k+1}.$$ 
Let $N'=\lfloor \alpha N+\alpha\rfloor=\lfloor \alpha N+\alpha-1\rfloor+1$. For any $l\in\{l:h(l)\leq N'-1\}$, $P_l$ has full row rank. 
For any $l\in\{l:h(l)\geq N'+1\}$, $P_l$ has full column rank.  
Now we claim that, for any upper-anti-block-diagonal matrices $P_{u,v}$, where $u\neq v$ 
and $u,v\in\{2^N-1\}$, there exist row and column operations which convert $P$ 
into the matrix which only has anti-diagonal blocks. By observation~\ref{claim:decompose 
matrix space}, we only need to consider those $P_{u,v}$ satisfying $h(u) < h(v)$. 
In this case, either $h(u)\leq N'-1$, or $h(v)\geq N'+1$. If $h(u)\leq N'-1$, we 
can use $P_u$ to clear $P_{u,v}$, since $P_u$ is anti-diagonal, on the right of 
$P_{u,v}$, and $P_u$ has full row rank. The other case is similar.  These yield 
that $mrk(\A(p,q,d)^{\otimes N+1})=\sum_{l=0}^{2^{N}-1} mrk(\cA_l)$, which, 
together with equation~(\ref{eq: rank of A p,q,m,n}), allow us to conclude the 
proof.
\end{proof}

Now we are ready to compute the asymptotic maximal rank for maximal-compression 
matrix spaces. We restate theorem~\ref{asymptotic} (2) here:

\noindent{\bf Theorem~\ref{asymptotic}. 2, restated.}
If $\cS=\A(p,q,d)$ satisfying $p+q<d$, then 
\begin{equation}\label{eq:asymptotic formula}
mrk^{\infty}(\cS)= d\max\{2^{-D(1-\alpha || p')}, 2^{-D(\alpha || q')}\},
\end{equation}
where $p'=\frac{p}{d}$, $q'=\frac{q}{d}$, $\alpha=\frac{\log_2(d-q)-\log_2 p}{\log_2((d-p)(d-q))-\log_2 (pq)}$ and $D(a||b):=a{\rm log}_2\frac{a}{b}+(1-a){\rm log}_2\frac{1-a}{1-b}$. 

\noindent{\it Proof:} 
Let $\lambda={\rm log}_2\frac{d-p}{q}$, $\mu={\rm log}_2\frac{d-q}{p}$, 
$\alpha=\frac{\log_2(d-q)-\log_2 p}{\log_2((d-p)(d-q))-\log_2 (pq)}=\frac{\mu}{\lambda+\mu}$ and  $N'=\lfloor \alpha N+\alpha\rfloor$ as discussed in lemma~\ref{lemma:rank formula}. We can rewrite 
equation~(\ref{rankformula}) explicitly as the following:
\begin{equation}\label{inequality}
\begin{split}
\rk(\A(p,q,d)^{\otimes (N+1)})&=\sum_{k=0}^{N'-1} \binom{N}{k} p^{N-k}(d-p)^kd+\sum_{k=N'+1}^{N} \binom{N}{k} q^k(d-q)^{N-k}d\\
&+ \binom{N}{N'} \big(p^{N-N'+1}(d-p)^{N'}+q^{N'+1}(d-q)^{N-N'}\big)\\
&=\sum_{k=0}^{N'}\binom{N}{k} p^{N-k}(d-p)^kd+\sum_{k=N'}^{N} \binom{N}{k} q^{k}(d-q)^{N-k}d\\
&-\binom{N}{N'} \big(p^{N-N'}(d-p)^{N'+1}+q^{N'}(d-q)^{N-N'+1}\big).\\
% &=2p\sum_{k=0}^{\lfloor\frac{N+1}{2}\rfloor}\binom{N}{k} p^{N-k}(m-p)^k+2(m-p)\sum_{k=0}^{\lfloor\frac{N-1}{2}\rfloor}\binom{N}{k} p^{N-k}(m-p)^{k}.
\end{split}
\end{equation}

Let $p'=\frac{p}{d}$ and $q'=\frac{q}{d}$, we have $p'+q'<1$. The above quantity is upper and lower bounded by
\begin{equation}\label{eq:upper bound}
\rk(\A(p,q,d)^{\otimes (N+1)})\leq d^{N+1}\big(\sum_{k=0}^{N'}\binom{N}{k} {p'}^{N-k}(1-p')^k+\sum_{k=0}^{N-N'} \binom{N}{k} {q'}^{N-k}(1-q')^{k}\big);
\end{equation}
\begin{equation}\label{eq:lower bound}
\rk(\A(p,q,d)^{\otimes (N+1)})\geq d^{N+1}\big(\sum_{k=0}^{N'-1}\binom{N}{k} {p'}^{N-k}(1-p')^k+\sum_{k=0}^{N-N'-1} \binom{N}{k} {q'}^{N-k}(1-q')^{k}\big).
\end{equation}

We shall use the following inequalities:
\begin{lemma}[Lemma 4.7.2 in Ref.~\cite{ash1990information}]\label{upperbound}
For $ N' < Np$, we have:
\begin{equation}
\frac{1}{\sqrt{2N}} 2^{-ND(\frac{N'}{N}||p)}\leq \sum_{k=0}^{N'}\binom{N}{k}p^k(1-p)^{N-k}\leq 2^{-ND(\frac{N'}{N} || p)}.
\end{equation} 
\end{lemma}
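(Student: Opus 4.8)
The plan is to read $S:=\sum_{k=0}^{N'}\binom{N}{k}p^{k}(1-p)^{N-k}$ as the lower tail $\Pr[X\le N']$ of a binomial random variable $X\sim\mathrm{Bin}(N,p)$ and, writing $q:=N'/N\in[0,p)$, to prove the two inequalities by separate short arguments. We may assume $N'\ge 1$, since when $N'=0$ the sum is exactly $(1-p)^{N}=2^{-ND(0||p)}$ and both bounds hold trivially. The one identity doing all the work is the change-of-measure factorization
\begin{equation*}
\binom{N}{k}p^{k}(1-p)^{N-k}=\binom{N}{k}q^{k}(1-q)^{N-k}\cdot r(k),\qquad r(k):=\Bigl(\tfrac{p}{q}\Bigr)^{k}\Bigl(\tfrac{1-p}{1-q}\Bigr)^{N-k},
\end{equation*}
together with the computation $\log_{2}r(N')=Nq\log_{2}\tfrac{p}{q}+N(1-q)\log_{2}\tfrac{1-p}{1-q}=-ND(q||p)$, so that $r(N')=2^{-ND(q||p)}$.

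\emph{Upper bound.} Here I would exploit monotonicity of $r$: since $p>q$ one has $r(k)/r(k-1)=\tfrac{p(1-q)}{q(1-p)}>1$, so $r$ is increasing on $\{0,\dots,N\}$ and $r(k)\le r(N')=2^{-ND(q||p)}$ for all $k\le N'$. Substituting into the factorization and summing,
\begin{equation*}
S=\sum_{k=0}^{N'}\binom{N}{k}q^{k}(1-q)^{N-k}\,r(k)\;\le\;2^{-ND(q||p)}\sum_{k=0}^{N'}\binom{N}{k}q^{k}(1-q)^{N-k}\;\le\;2^{-ND(q||p)},
\end{equation*}
the last step because the full family of $\mathrm{Bin}(N,q)$ probabilities sums to $1$. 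This is exactly the method-of-types / Chernoff bound, the optimal exponential tilt being precisely the one that moves the mean of $X$ down to $N'$.

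\emph{Lower bound.} Here I would simply retain the single term $k=N'$, legitimate since all summands are nonnegative:
\begin{equation*}
S\;\ge\;\binom{N}{N'}p^{N'}(1-p)^{N-N'}=\binom{N}{N'}q^{N'}(1-q)^{N-N'}\cdot 2^{-ND(q||p)}=\binom{N}{N'}2^{-NH_{2}(q)}\cdot 2^{-ND(q||p)},
\end{equation*}
where $H_{2}$ is the binary entropy and I used $q^{N'}(1-q)^{N-N'}=q^{Nq}(1-q)^{N(1-q)}=2^{-NH_{2}(q)}$. It then remains to invoke the standard sharp lower estimate on a binomial coefficient, $\binom{N}{N'}\ge 2^{NH_{2}(N'/N)}/\sqrt{2N}$ (valid for $1\le N'\le N-1$), which follows from Stirling's formula with explicit error terms after using $\tfrac{N'}{N}\bigl(1-\tfrac{N'}{N}\bigr)\le\tfrac14$ to absorb the variance factor; this gives $\binom{N}{N'}2^{-NH_{2}(q)}\ge 1/\sqrt{2N}$ and hence $S\ge\tfrac{1}{\sqrt{2N}}2^{-ND(q||p)}$.

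The main obstacle is the prefactor $1/\sqrt{2N}$ in the lower bound: a purely combinatorial method-of-types estimate only yields $\binom{N}{N'}\ge 2^{NH_{2}(N'/N)}/(N+1)$, so to land the claimed $1/\sqrt{2N}$ one must bring in a two-sided Stirling bound (e.g.\ Robbins' inequalities) and verify the numerical constant, taking some care uniformly over $N'$ near the endpoints $1$ and $N-1$. Everything else---the monotonicity of $r$, the entropy identities, and the change-of-measure factorization---is elementary algebra.
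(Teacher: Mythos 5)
Your proof is correct, and the paper itself does not prove this lemma at all---it is imported verbatim as Lemma 4.7.2 of the cited textbook of Ash, whose proof is essentially the one you give (the monotone tilting ratio $r(k)$ for the upper bound, and the single term $k=N'$ combined with a Stirling-type estimate $\binom{N}{N'}\ge 2^{NH_2(N'/N)}/\sqrt{2N}$ for the lower bound). So your argument matches the standard proof behind the citation; the only point needing the care you already flag is verifying the $1/\sqrt{2N}$ constant at the endpoints $N'\in\{1,N-1\}$, which indeed follows from Robbins-type Stirling bounds together with $\tfrac{N'}{N}\bigl(1-\tfrac{N'}{N}\bigr)\le\tfrac14$ (with the tiny cases checked directly).
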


% Choose a subsequence of $\sqrt[N]{\rk(\A(p,q,d)^{\otimes N})}$, where $N\in\N^+$ satisfy $\lfloor \alpha(N+1)\rfloor \leq\alpha N\leq \lceil \alpha(N+1)\rceil$. 
To apply lemma~\ref{upperbound} to prove equation~(\ref{eq:asymptotic formula}), we need $Nq' < N' < N(1-p')$ holds for sufficiently large $N$. We first prove the following:
\begin{lemma}\label{lemma:q'<alpha<1-p'}
Let $p'$, $q'$ and $\alpha$ be defined as above. $q'<\alpha<1-p'$.
\end{lemma}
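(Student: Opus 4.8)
\noindent\textit{Proof plan.} The plan is to clear denominators, reduce $q'<\alpha<1-p'$ to a single elementary logarithmic inequality (the other being its mirror image under $p\leftrightarrow q$), and then settle that inequality by a one‑variable monotonicity argument. First I would record that, since $p+q<d$, we have $d-p>q$ and $d-q>p$, so $\lambda=\log_2\frac{d-p}{q}>0$ and $\mu=\log_2\frac{d-q}{p}>0$; in particular $\lambda+\mu>0$. Because $\alpha=\frac{\mu}{\lambda+\mu}$, the inequality $\alpha>q'=q/d$ is equivalent to $d\mu>q(\lambda+\mu)$, i.e.\ $(d-q)\mu>q\lambda$, i.e.
$$(d-q)\log_2\tfrac{d-q}{p}>q\log_2\tfrac{d-p}{q},$$
and likewise $\alpha<1-p'=(d-p)/d$ is equivalent to $p\mu<(d-p)\lambda$, i.e.
$$p\log_2\tfrac{d-q}{p}<(d-p)\log_2\tfrac{d-p}{q}.$$
The second inequality is exactly the first with $p$ and $q$ interchanged, so it suffices to prove the first.

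To prove $(d-q)\log_2\frac{d-q}{p}>q\log_2\frac{d-p}{q}$, I would fix $p$ and introduce the auxiliary function $F(t)=(d-t)\ln\frac{d-t}{p}-t\ln\frac{d-p}{t}$ on $t\in(0,d-p)$, so that the claim is $F(q)>0$. Two observations then finish the job. At the right endpoint both logarithms collapse, since $d-(d-p)=p$ and $d-p=d-p$, giving $F(d-p)=0$. Differentiating,
$$F'(t)=\ln\frac{pt}{(d-p)(d-t)},$$
and the ratio $\frac{pt}{(d-p)(d-t)}$ is less than $1$ precisely when $pt<(d-p)(d-t)=d^2-d(p+t)+pt$, i.e.\ when $p+t<d$ — which holds throughout $(0,d-p)$. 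Hence $F$ is strictly decreasing on $(0,d-p)$ with $F(d-p)=0$, so $F(q)>0$ for $0<q<d-p$. This gives $\alpha>q'$, and the $p\leftrightarrow q$ swap gives $\alpha<1-p'$.

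The only non‑routine step is noticing that $F$ vanishes at $t=d-p$ and that the sign of $F'$ is controlled by exactly the same condition $p+q<d$; everything else is a short computation. (As a sanity check, both sides of each inequality degenerate to $0$ on the boundary $p+q=d$, consistent with $\lambda,\mu\to0$ there.) With Lemma~\ref{lemma:q'<alpha<1-p'} established, the strict inequalities $Nq'<N'<N(1-p')$ hold for all sufficiently large $N$ (as $N'=\lfloor\alpha N+\alpha\rfloor$), which is precisely the hypothesis needed to apply Lemma~\ref{upperbound} to the two binomial tails bounding $\rk(\A(p,q,d)^{\otimes(N+1)})$ in (\ref{eq:upper bound}) and (\ref{eq:lower bound}).
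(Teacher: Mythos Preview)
Your proof is correct. Both you and the paper reduce $q'<\alpha<1-p'$ to the pair of inequalities $(d-q)\log\frac{d-q}{p}>q\log\frac{d-p}{q}$ and its $p\leftrightarrow q$ mirror (the paper writes these multiplicatively, after dividing through by $d$, as $(1-q')^{1-q'}{q'}^{q'}>{p'}^{1-q'}(1-p')^{q'}$ etc.), but the final calculus step differs. You form the difference $F(t)=(d-t)\ln\frac{d-t}{p}-t\ln\frac{d-p}{t}$, check $F(d-p)=0$, and show $F'(t)=\ln\frac{pt}{(d-p)(d-t)}<0$ exactly when $p+t<d$. The paper instead considers the two-variable function $f(x,y)=x^y(1-x)^{1-y}$, computes $\partial_x f=x^{y-1}(y-x)/(1-x)^y$, observes that for fixed $y$ the maximum in $x$ is uniquely at $x=y$, and then reads off both inequalities by taking $(x,y)=(1-p',q')$ and $(x,y)=(1-q',p')$. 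Your symmetry observation (that one inequality is the other under $p\leftrightarrow q$) is a nice economy; the paper's optimization viewpoint is arguably cleaner in that one derivative computation dispatches both cases at once. Either argument is entirely adequate here.
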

\begin{proof}
By expressing $\alpha$ explicitly in terms of $p'$ and $q'$, we need to prove
\begin{equation}\label{equa}
q'<\frac{{\rm log}_2\frac{1-q'}{p'}}{{\rm log}_2\frac{1-q'}{p'}+{\rm log}_2\frac{1-p'}{q'}},~~
1-p'>\frac{{\rm log}_2\frac{1-q'}{p'}}{{\rm log}_2\frac{1-q'}{p'}+{\rm log}_2\frac{1-p'}{q'}}.
\end{equation}
This is equivalent to show 
\begin{equation}\label{equa1}
(1-q')^{1-q'}{q'}^{q'}>{p'}^{1-q'}(1-p')^{q'},~~ 
(1-p')^{1-p'}{p'}^{p'}>(1-q')^{p'}{q'}^{1-p'}.
\end{equation}

Consider the function $f(x,y)=x^y(1-x)^{1-y}$ with $x,y\in (0,1)$. The partial derivative in $x$ is 
$$\frac{\partial}{\partial x}f(x,y)=\frac{x^{y-1}(y-x)}{(1-x)^y}.$$
For any fixed $y$, $\max_{x\in(0,1)}f(x,y)=f(y,y)$. Then
inequality~(\ref{equa1}) holds by choosing $x=1-p',y=q'$ and $x=1-q',y=p'$. 
\end{proof}

Recall $N'=\lfloor \alpha N+\alpha\rfloor$. To ensure that  $Nq' < N' < N(1-p')$, it is 
sufficient to satisfy that $\alpha+\frac{\alpha}{N}< 1-p'$ and 
$q'<\alpha-\frac{1-\alpha}{N}$. Since $\alpha$, $p'$, and $q'$ are fixed, these 
can be achieved as long as 
$N>\max\{\frac{\alpha}{1-p'-\alpha},\frac{1-\alpha}{\alpha-q'}\}>0$.

Applying the upper bound in lemma~\ref{upperbound} 
to inequality~(\ref{eq:upper bound}), we obtain
\begin{equation}
\rk(\A(p,q,d)^{\otimes (N+1)})\leq d^{N+1}(2^{-ND(\frac{N'}{N} || 1-p')}+2^{-ND(1-\frac{N'}{N} || 1-q')}).
\end{equation}
Notice that $-D(a || p)$ is increasing for $0<a<p$, and $\alpha(N+1)-1\leq\lfloor 
\alpha(N+1)\rfloor\leq \alpha(N+1)$. We can replace $\frac{N'}{N}$ by 
$\alpha+\frac{\alpha}{N}$, and $1-\frac{N'}{N}$ by $1-\alpha+\frac{1-\alpha}{N}$,
which gives that
\begin{equation}
\rk(\A(p,q,d)^{\otimes (N+1)})\leq d^{N+1}(2^{-ND(\alpha+\frac{\alpha}{N} || 1-p')}+2^{-ND(1-\alpha+\frac{1-\alpha}{N} || 1-q')}).
\end{equation}
Let $N$ go to infinity. Since $L^p$ norm converges $L^\infty$ norm when $p\to+\infty$, we have
\begin{equation}\label{great}
\rk^\infty(\A(p,q,d))\leq d\max\{{2^{-D(\alpha || 1-p')}},2^{-D(1-\alpha || 1-q')}\}.
\end{equation}

Similarly, applying the lower bound in lemma~\ref{upperbound} to inequality~(\ref{eq:lower bound}), we obtain

\begin{equation}
\begin{split}
\rk(\A(p,q,d)^{\otimes (N+1)})&\geq \frac{d^{N+1}}{(N+1)^2}(2^{-ND(\frac{N'-1}{N} || 1-p')}+2^{-ND(1-\frac{N'-1}{N} || 1-q')})\\
&\geq \frac{d^{N+1}}{(N+1)^2}(2^{-ND(\alpha+\frac{\alpha-2}{N}  || 1-p')}+2^{-ND(1-\alpha+\frac{1-\alpha}{N} ||1-q')}).
\end{split}
\end{equation}
The second inequality holds since $\frac{N'-1}{N}\geq \alpha+\frac{\alpha-2}{N}$ and $1-\frac{N'-1}{N}\geq 1-\alpha+\frac{1-\alpha}{N}$. Thus we have 
\begin{equation}\label{less}
\rk^{\infty}(\A(p,q,d))\geq d\max\{2^{-D(\alpha|| 1-p')}, 2^{-D(1-\alpha || 1-q')}\}.
\end{equation}
Since $D(a|| b)=D(1-a|| 1-b)$, combining inequalities~(\ref{great}) and~(\ref{less}), we have 
\begin{equation}\label{rankf}
\rk^{\infty}(\A(p,q,d))=d\max\{2^{-D(1-\alpha || p')}, 2^{-D(\alpha || q')}\}.
\end{equation}\qed
\medskip

Theorem~\ref{asymptotic} provides explicit formulas for 
computing the asymptotic maximal Schmidt rank of those tripartite pure states 
whose corresponding matrix spaces have no shrunk subspace, or are 
maximal-compression. Taking one step further, we now consider whether the 
asymptotic maximal Schmidt rank of a tripartite state is full. The physical 
interpretation of this problem is, whether the maximally entangled state can be obtained 
from the given tripartite state by means of SLOCC asymptotically.
In the following, we show that this problem is equivalent to the non-commutative rank problem (problem~\ref{prob:non-commutative rank problem}):
\begin{theorem}\label{thm:equivalence}
$\ket{{\rm \Psi}_{\rm ABC}}\in\Hm^{\rm A}_d\otimes\Hm^{\rm B}_d\otimes\Hm^{\rm C}_{d'}$ can be transformed to the $d\otimes d$ maximally entangled state in $\Hm^{\rm A}_d\otimes\Hm^{\rm B}_d$ by means of SLOCC asymptotically, if and only if $M({\rm \Psi}_{\rm ABC})$ does not have shrunk subspace.
\end{theorem}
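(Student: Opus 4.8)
The plan is to reduce the statement to a purely matrix-space assertion and then read off the conclusion from Theorem~\ref{asymptotic} together with Lemma~\ref{lemma:q'<alpha<1-p'}. Write $\cS:=M({\rm \Psi}_{\rm ABC})\le M(d,\C)$. Transforming $\ket{{\rm \Psi}_{\rm ABC}}$ to the $d\otimes d$ maximally entangled state $\ket{{\rm \Phi}_{\rm AB}}$ asymptotically means $R(\ket{{\rm \Psi}_{\rm ABC}},\ket{{\rm \Phi}_{\rm AB}})=1$; since $srk(\ket{{\rm \Phi}_{\rm AB}})=d$, this rate equals $\log_d msrk^\infty(\ket{{\rm \Psi}_{\rm ABC}})=\log_d mrk^\infty(\cS)$, and as $mrk^\infty(\cS)\le d$ always holds, the condition $R=1$ is equivalent to $mrk^\infty(\cS)=d$. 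Hence it suffices to prove that $mrk^\infty(\cS)=d$ if and only if $\cS$ has no shrunk subspace.

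For the ``if'' direction there is nothing further to do: if $\cS$ has no shrunk subspace then $mrk^\infty(\cS)=d$ is exactly the first part of Theorem~\ref{asymptotic}. For the ``only if'' direction I would argue the contrapositive. Suppose $\cS$ has a shrunk subspace $U$ with $\dim(U)=d-q$ and $\dim(\cS(U))=p$, so $p+q<d$ by Definition~\ref{shrunk}. By the normal-form discussion preceding Definition~\ref{def:maximal-compression matrix space}, $\cS$ is equivalent to a subspace of the maximal-compression space $\A(p,q,d)$; since equivalence of matrix spaces is preserved under tensor powers (tensor the two base-change matrices) and equivalent spaces have the same maximal rank, this yields $mrk^\infty(\cS)\le mrk^\infty(\A(p,q,d))$. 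It then remains to show $mrk^\infty(\A(p,q,d))<d$ for all $p,q$ with $p+q<d$. If $pq=0$, then $\A(p,q,d)$, and hence every tensor power $\A(p,q,d)^{\ox n}$, is supported on a fixed set of fewer than $d^n$ rows or columns, so $mrk^\infty(\A(p,q,d))\le\max\{p,q\}<d$ trivially. If $p,q\ge 1$, the second part of Theorem~\ref{asymptotic} gives $mrk^\infty(\A(p,q,d))=d\max\{2^{-D(1-\alpha || p')},2^{-D(\alpha || q')}\}$ with $p'=p/d$ and $q'=q/d$. By Lemma~\ref{lemma:q'<alpha<1-p'}, $q'<\alpha<1-p'$, hence $\alpha\neq q'$ and $1-\alpha\neq p'$, and since the binary relative entropy $D(a||b)$ is nonnegative with $D(a||b)=0$ only when $a=b$, both exponents $D(1-\alpha || p')$ and $D(\alpha || q')$ are strictly positive, so $mrk^\infty(\A(p,q,d))<d$. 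Combining the two cases gives $mrk^\infty(\cS)<d$, i.e.~$\ket{{\rm \Psi}_{\rm ABC}}$ cannot be transformed to the maximally entangled state asymptotically, completing the contrapositive.

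I do not expect a genuine obstacle, since the hard analytic work has already been done in Theorem~\ref{asymptotic}.2 and Lemma~\ref{lemma:q'<alpha<1-p'}. The one point worth stressing --- and the conceptual heart of the implication ``shrunk subspace $\Rightarrow$ not asymptotically full'' --- is that the \emph{strict} inequalities $q'<\alpha<1-p'$ are precisely what push the relative-entropy expression strictly below $d$: if $\alpha$ were allowed to equal $q'$ or $1-p'$, the asymptotic maximal rank would attain $d$ and the characterization would fail. The remaining items are routine: the stability of matrix-space equivalence under tensor powers, and the separate treatment of the degenerate parameters $p=0$ or $q=0$, for which the formula in Theorem~\ref{asymptotic}.2 is not literally stated.
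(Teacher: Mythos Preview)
Your proposal is correct and follows essentially the same route as the paper's proof: reduce to showing $mrk^\infty(\cS)=d$ iff $\cS$ has no shrunk subspace, quote Theorem~\ref{asymptotic}.1 for one direction, and for the other embed $\cS$ into a maximal-compression space and invoke Theorem~\ref{asymptotic}.2 together with Lemma~\ref{lemma:q'<alpha<1-p'} to get $mrk^\infty(\A(p,q,d))<d$. Your explicit handling of the degenerate case $pq=0$ (where the formula for $\alpha$ is undefined) is a small addition the paper glosses over, but otherwise the arguments coincide.
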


\begin{proof}
Let $\cS=M({\rm \Psi}_{\rm ABC})\leq M(d,\C)$. It is equivalent to prove $mrk^\infty(\cS)=d$ if and only if $\cS$ has no shrunk subspace.

For the necessary part, we have shown $mrk^\infty(\cS)=d$ if $\cS$ has no shrunk subspace, by theorem~\ref{asymptotic}.1.

For the sufficient part, we prove that, any $d\times d$ matrix space $\cS$ which has shrunk 
subspaces admits $mrk^\infty(\cS)<d$. Let $U\leq \C^d$ be a shrunk subspace of $\cS$ satisfying $\dim(U)=d-q$ and $\dim(\cS(U))=p$. With some proper change of bases, $\cS$ is a subspace of the maximal-compression matrix space $\A(p,q,d)$. By theorem~\ref{asymptotic}.2 and lemma~\ref{lemma:q'<alpha<1-p'}, we have 
$$\rk^{\infty}(\A(p,q,d))=d\max\{2^{-D(1-\alpha || p')}, 2^{-D(\alpha || q')}\},$$
and $q'<\alpha<1-p'$. We can derive $mrk^\infty(\A(p,q,d))<d$, which leads to $mrk^\infty(\cS)<d$. 
\end{proof}

In addition, by theorem~\ref{polynomialtime}, there exist deterministic polynomial-time algorithms for determining whether a given matrix space has 
shrunk subspaces or not. Thus, determining whether the asymptotic maximal Schmidt rank of a tripartite state is full is algorithmically effective, i.e.~
\begin{corollary}
Given a tripartite state $\ket{{\rm \Psi}_{\rm ABC}}\in\Hm^{\rm A}_d\otimes\Hm^{\rm B}_d\otimes\Hm^{\rm C}_{d'}$, there exist deterministic polynomial-time algorithms to determine whether $\ket{{\rm \Psi}_{\rm ABC}}$ can be transformed to the $d\otimes d$ maximally entangled state by means of SLOCC with rate $1$, in an asymptotic setting.
\end{corollary}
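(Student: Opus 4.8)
The plan is to reduce the statement entirely to the already-established characterization in Theorem~\ref{thm:equivalence} together with the algorithmic input of Theorem~\ref{polynomialtime}, so that the only real work is bookkeeping. First I would set $\cS := M({\rm \Psi}_{\rm ABC})\leq M(d,\C)$ and recall that, by Theorem~\ref{thm:equivalence}, $\ket{{\rm \Psi}_{\rm ABC}}$ transforms to the $d\otimes d$ maximally entangled state by SLOCC asymptotically (equivalently, with rate $1$, since $srk(\ket{{\rm \Phi}_{\rm AB}})=d$ and $R=\log_{d} msrk^{\infty}(\ket{{\rm \Psi}_{\rm ABC}})$) if and only if $\cS$ has no shrunk subspace. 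By the Fortin--Reutenauer description recalled in Section~\ref{slocc} (namely $ncrk(\cS)<d$ exactly when $\cS$ has a shrunk subspace), this is in turn equivalent to $\cS$ having full non-commutative rank. Hence the decision problem in the corollary \emph{is} the full-non-commutative-rank problem for $\cS$.

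Second, I would verify that $\cS$ is computable from the classical description of $\ket{{\rm \Psi}_{\rm ABC}}$ in deterministic polynomial time. Writing $\ket{{\rm \Psi}_{\rm ABC}}=\sum_{i,j,k}c_{ijk}\ket{i}\ket{j}\ket{k}$ and forming the ``slices'' $M_k=(c_{ijk})_{i,j}\in M(d,\C)$ for $0\le k\le d'-1$, a one-line computation with $\rho^{{\rm \Psi}}_{\rm AB}=\tr_{\rm C}(\proj{{\rm \Psi}_{\rm ABC}})$ gives $\supp(\rho^{{\rm \Psi}}_{\rm AB})={\rm span}\{\sum_{i,j}c_{ijk}\ket{i}\ket{j}:0\le k\le d'-1\}$, so by the definition in Eq.~(\ref{eq:msrk}) we obtain $\cS=M({\rm \Psi}_{\rm ABC})={\rm span}\{M_0,\dots,M_{d'-1}\}$. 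Reading off the slices and (if desired) extracting a basis by Gaussian elimination is plainly polynomial-time, so a classical description of $\cS$ is available.

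Third, I would invoke Theorem~\ref{polynomialtime}: there is a deterministic polynomial-time algorithm deciding whether a matrix space $\cS\leq M(d,\F)$ has full non-commutative rank when $|\F|$ is large enough. Composing the two steps --- first compute the slices of $\ket{{\rm \Psi}_{\rm ABC}}$ to obtain $\cS$, then run that algorithm --- produces the claimed deterministic polynomial-time procedure, and the corollary follows.

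The step needing the most care is the field-of-definition and bit-complexity bookkeeping in the last step: Theorem~\ref{polynomialtime} is phrased for a ``large enough'' field, whereas the natural encoding presents $\ket{{\rm \Psi}_{\rm ABC}}$, and hence $\cS$, with entries in $\Q$ (or $\Q(i)$). One resolves this either by working directly over the infinite field $\C$, or by noting that the algorithms of~\cite{Garg2015,Ivanyos2015a} run over $\Q$ and number fields and that passing to a sufficiently large finite extension neither changes the existence of a shrunk subspace nor blows up the bit-sizes beyond a polynomial. Modulo this routine check there is no genuine obstacle: the mathematical content sits entirely in Theorems~\ref{thm:equivalence} and~\ref{polynomialtime}.
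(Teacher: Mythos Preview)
Your proposal is correct and follows exactly the paper's approach: the paper does not give a separate proof for this corollary but simply remarks, just before stating it, that combining Theorem~\ref{thm:equivalence} with Theorem~\ref{polynomialtime} yields the result. You actually supply more detail than the paper (the explicit construction of the slices $M_k$ spanning $\cS$ and the field/bit-complexity bookkeeping), but the logical route is identical.
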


\section{Conclusions}\label{conclusion}
In this paper, we exhibit novel results on tripartite-to-bipartite SLOCC transformations, in multi-copy and asymptotic settings. 
We first construct a tripartite pure state $\ket{{\rm \Psi}^d_{\rm ABC}}$ satisfying $msrk(\ket{{\rm \Psi}^d_{\rm ABC}}^{\ox 2})>msrk(\ket{{\rm \Psi}^d_{\rm ABC}})^2$, which implies that the maximal Schmidt rank is strictly super-multiplicative. It also illustrates that, although one copy of $\ket{{\rm \Psi}^d_{\rm ABC}}$ cannot be transformed to the bipartite maximally entangled state by SLOCC, one can do so with two copies. We then provide a full characterization for those tripartite states whose maximal Schmidt rank increase on average under tensor product, by considering the structure of their corresponding matrix spaces. In fact, these tripartite states can be viewed as having advantages in tripartite-to-bipartite SLOCC transformations with multiple copies. Interestingly, except for the degenerated case, this phenomenon holds for all tripartite states of which their maximal Schmidt ranks are not full.

In the asymptotic setting, we consider evaluating the tripartite-to-bipartite entanglement transformation rate of a given tripartite pure state and bipartite pure state. Notably, it equals the logarithm of the asymptotic maximal Schmidt rank of the given tripartite state (where the base of the logarithm is the Schmidt rank of the given bipartite state). Nevertheless, the latter is in general difficult to compute due to its super-multiplicativity. To get around this difficulty, we apply certain results related to the structure of matrix space, including the study of matrix semi-invariants, to obtain explicit formulas which compute the asymptotic maximal Schmidt ranks for a large family of tripartite states. Furthermore, we investigate the asymptotic convertibility to the bipartite maximally entangled state, and show its equivalence to the non-commutative rank 
problem, introduced in Ref.~\cite{Fortin2004}. Based on the recent progress on this problem~\cite{Garg2015,Ivanyos2015a}, there exist
deterministic polynomial-time algorithms to decide whether a tripartite state can be transformed to the maximally entangled state by SLOCC, asymptotically.

% \Blue{It is worth noticing that, our asymptotic result may be generalized to 
% rectangular matrix space, by considering the rectangular non-commutative rank 
% problem.\ynote{Suggest to remove this note.}}

\begin{acknowledgements}
We are grateful to Andreas Winter for helpful discussions. We thank the anonymous reviewers for careful reading and numerous suggestions, which help to significantly improve the presentation. RD's research was partly supported by the Australian Research Council (Grant Nos. DP120103776 and FT120100449). YQ's research was partly supported by Australian Research Council (Grant No. DE150100720).
\end{acknowledgements}

% BibTeX users please use one of
      % basic style, author-year citations
%\bibliographystyle{aps}      % mathematics and physical sciences
%\bibliographystyle{spphys}       % APS-like style for physics
\bibliographystyle{ieeetr}
\bibliography{ref}   % name your BibTeX data base

\end{document}